\documentclass[preprintnumbers,amsmath,amssymb,superscriptaddress]{revtex4}
\usepackage{amsmath,amssymb,braket,bm}
\usepackage{times}
\usepackage{graphicx}
\usepackage{dcolumn}
\usepackage{dsfont}
\usepackage{mathrsfs}
\usepackage[landscape,papersize={297.1mm,210mm},left=1.9cm,right=1.6cm,top=2.7cm,bottom=2.8cm]{geometry}
\usepackage{amsthm}
\usepackage{xcolor}
\usepackage{hyperref}
\makeatletter

\newcommand{\Rmnum}[1]{\expandafter\@slowromancap\romannumeral #1@}

\newtheorem{theorem}{Theorem}
\makeatother
\begin{document}

\title{Entanglement-Assisted Discrimination of Nonlocal Sets of Orthogonal States}
\author{Ziying Hou}
\affiliation{School of Mathematics and Science, Hebei GEO University, Shijiazhuang 052161, China}

\author{Huaqi Zhou}
\email{zhouhuaqilc@163.com}
\affiliation{School of Mathematics and Science, Hebei GEO University, Shijiazhuang 052161, China}

\author{Limin Gao}
\email{gaoliminabc@163.com}
\affiliation{School of Mathematics and Science, Hebei GEO University, Shijiazhuang 052161, China}
\affiliation{Intelligent Sensor Network Engineering Research Center of Hebei Province, Hebei GEO University, Shijiazhuang 052161, China}

\begin{abstract}
Entanglement-assisted discrimination of orthogonal quantum states exhibiting quantum nonlocality is a frontier topic in quantum information theory. In this paper, we investigate the role of multipartite entanglement and develop resource-efficient LOCC discrimination protocols for nonlocal sets of orthogonal states, including multipartite orthogonal product-state sets and entangled-state sets with different nonlocal features. By incorporating controlled-NOT (CNOT) operations into the discrimination procedure, we construct protocols for genuinely nonlocal GHZ bases in four- and five-qubit systems that require only a single EPR pair. For the same target sets, we compare different entanglement-assisted schemes and identify those with lower entanglement consumption. We further observe that, on average, protocols avoiding teleportation consume fewer resources than teleportation-based approaches. In addition, when higher-partite GHZ-type resources (with $n>3$) are available among suitable subsystems, they can in some cases reduce the overall entanglement cost. Our results highlight the operational significance of multipartite entanglement and provide practical protocols for the local discrimination of orthogonal state sets exhibiting quantum nonlocality.
\end{abstract}

\maketitle
\section{Introduction}
Quantum nonlocality is one of the fundamental features of quantum mechanics. Early studies mainly focused on the nonlocality of entangled pure states. As a central resource in quantum information science, quantum entanglement has attracted widespread attention, and extensive progress has been made in this direction \cite{Ref1,Ref2,Ref3}. Local indistinguishability refers to the phenomenon that a set of orthogonal quantum states cannot be perfectly distinguished by local operations and classical communication (LOCC). In 1999, Bennett \textit{et al.}~\cite{Ref4} constructed a class of complete orthogonal product bases that are locally indistinguishable, thereby revealing for the first time that a set of product states can also exhibit quantum nonlocality. Unlike Bell nonlocality, which originates from entangled pure states, quantum nonlocality based on local indistinguishability is not restricted to entangled states \cite{Ref5,Ref6,Ref7,Ref8,Ref9,Ref10}. Since then, substantial progress has been made in proving the existence of and constructing orthogonal entangled sets and orthogonal product sets with quantum nonlocality \cite{Ref11,Ref12,Ref13,Ref14,Ref15,Ref16,Ref17,Ref18,Ref19,Ref20,Ref21,Ref22,Ref23,Ref24,Ref25}.

Recently, Horodecki \textit{et al.}~\cite{Ref26} showed that one can have ``more nonlocality with less entanglement.'' In multipartite systems, genuine nonlocality has also been extensively studied \cite{Ref27}. A set of orthogonal states is said to be genuinely nonlocal if it cannot be perfectly distinguished by LOCC across any bipartition of the subsystems. Rout \textit{et al.}~\cite{Ref28} constructed genuinely nonlocal orthogonal product sets in multipartite systems such as $\mathbb{C}^{4}\otimes\mathbb{C}^{3}\otimes\mathbb{C}^{3}$ and $\mathbb{C}^{6}\otimes\mathbb{C}^{5}\otimes\mathbb{C}^{5}$. They further generalized such sets to $(m+1)$-partite systems of the form $\mathbb{C}^{m+2}\otimes(\mathbb{C}^{3})^{\otimes m}$, thereby providing an effective way to construct product-state sets with minimal genuine nonlocality in multipartite systems. Xiong \textit{et al.}~\cite{Ref29} showed that any five states in the three-qubit GHZ basis form a genuinely nonlocal set, whereas any four states do not. They also studied the construction of small genuinely nonlocal sets of generalized Greenberger--Horne--Zeilinger (GHZ) states in multipartite systems \cite{Ref30}. In particular, they derived a universal bound on the cardinality required for a GHZ-like state set to possess this form of nonlocality and confirmed the existence of small genuinely nonlocal sets whose cardinality scales linearly with the local dimension $d$. Halder \textit{et al.}~\cite{Ref31} introduced the notion of locally irreducible sets and explicitly defined \emph{strong nonlocality}, meaning that a set of orthogonal states is locally irreducible across all bipartitions. Since local irreducibility is sufficient but not necessary for local indistinguishability, strong nonlocality constitutes a stronger manifestation of genuine nonlocality. In 2020, Shi \textit{et al.}~\cite{Ref32} constructed strongly nonlocal sets containing entangled states based on the Rubik's cube structure. Two years later, in Hilbert spaces $\mathbb{C}^{d_A}\otimes\mathbb{C}^{d_B}\otimes\mathbb{C}^{d_C}$ ($d_A,d_B,d_C\ge 4$) and $\mathbb{C}^{d_A}\otimes\mathbb{C}^{d_B}\otimes\mathbb{C}^{d_C}\otimes\mathbb{C}^{d_D}$ ($d_A,d_B,d_C,d_D\ge 3$), Gao \textit{et al.}~\cite{Ref33} proposed strongly nonlocal orthogonal product sets (OPSs) with fewer states. Li \textit{et al.}~\cite{Ref34} established strongest nonlocal sets of various sizes for four-qudit systems $\mathbb{C}^{d}\otimes\mathbb{C}^{d}\otimes\mathbb{C}^{d}\otimes\mathbb{C}^{d}$ ($d\ge 2$) and, more generally, for four-partite systems. In addition, Zhang \textit{et al.}~\cite{Ref35} investigated \emph{generally strong nonlocality}, namely the property that a set of orthogonal $n$-partite states is locally irreducible in every $(n-1)$-partition. Gao \textit{et al.}~\cite{Ref36} constructed generally strongly nonlocal OPSs with asymmetric and symmetric structures in four-partite systems, and further presented a general construction scheme for strongly nonlocal OPSs in $n$-partite systems with $n\ge 5$.

Even when all parties are spatially separated and restricted to LOCC, a nonlocal set of orthogonal states can still be perfectly distinguished provided that sufficient entanglement is shared among them \cite{Ref37,Ref38,Ref39,Ref40}. A general approach is to transfer all subsystems to a single party via quantum teleportation \cite{Ref41,Ref42,Ref43}, although this may consume a large amount of entanglement. Therefore, how to locally distinguish nonlocal sets while minimizing entanglement consumption is an important problem \cite{Ref44,Ref45,Ref46,Ref47,Ref48,Ref49}. Cohen~\cite{Ref50} proposed an entanglement-efficient discrimination scheme applicable to a class of unextendible product bases. Rout \textit{et al.}~\cite{Ref27} proposed several local discrimination protocols for genuinely nonlocal product bases using Einstein--Podolsky--Rosen (EPR) states and GHZ states, thereby demonstrating the usefulness of genuinely multipartite entangled resources. Zhang \textit{et al.}~\cite{Ref51,Ref52} showed that some unextendible product bases can be locally discriminated with the assistance of multiple EPR states. Gao \textit{et al.}~\cite{Ref33} found that a $d\otimes d$ maximally entangled state can be replaced by $n$ EPR pairs in discrimination whenever $2^n\ge d$.

Entanglement-assisted discrimination is closely related to orthogonal state sets exhibiting quantum nonlocality. Numerous orthogonal product sets and orthogonal entangled sets with different strengths of nonlocality have been constructed. Local indistinguishability also underlies applications such as quantum data hiding and quantum secret sharing \cite{Ref53,Ref54,Ref55,Ref56}. Related applications of quantum entanglement and entanglement-assisted discrimination continue to expand \cite{Ref57,Ref58}. Nevertheless, research on entanglement-assisted discrimination of nonlocal sets remains incomplete: for some orthogonal state sets, no local identification schemes more efficient than teleportation-based approaches are currently known, and more resource-efficient protocols are still needed.

In this paper, we study the entanglement-assisted local distinguishability of several known nonlocal sets. In Sec.~\ref{Q2}, we introduce the necessary preliminaries. In Sec.~\ref{Q3}, for the GHZ bases in four- and five-qubit systems, we propose entanglement-assisted discrimination protocols incorporating controlled-NOT (CNOT) gates such that only one EPR pair is required. In Sec.~\ref{Q4}, we present local discrimination protocols for strongly nonlocal asymmetric and symmetric OPSs in four-partite systems. In Sec.~\ref{Q5}, we further study strongly nonlocal symmetric OPSs in five-partite systems. Finally, we conclude in Sec.~\ref{Q6}.
\section{Preliminary}\label{Q2}
In this section, we introduce the notation and concepts used throughout the paper.

\textbf{\textit{Definition 1} \cite{Ref31}.} A set of orthogonal pure states in a multipartite quantum system is said to be \emph{locally indistinguishable} if it is impossible to perfectly distinguish all states in the set using only LOCC.

\textbf{\textit{Definition 2} \cite{Ref31}.} A set of orthogonal quantum states on \(\bigotimes_{i=1}^{n}\mathbb{C}^{d_i}\), with \(n \ge 2\) and \(d_i \ge 2\) for all \(i=1,\ldots,n\), is said to be \emph{locally irreducible} if it is impossible to eliminate one or more states from the set by means of orthogonality-preserving local measurements.

According to these definitions, local irreducibility is strictly stronger than local indistinguishability.

\textbf{\textit{Definition 3} \cite{Ref30}.} A set of orthogonal multipartite quantum states is said to be \emph{genuinely nonlocal} if it is locally indistinguishable across every bipartition of the subsystems.

\textbf{\textit{Definition 4} \cite{Ref35}.} If a set of orthogonal quantum states is locally irreducible in every $(n-1)$-partition of an $n$-partite system, then it has the \emph{generally strong nonlocality}.

Unless otherwise specified, the term ``generally strong nonlocality'' used below follows Definition~4. For convenience, some displayed states are written in unnormalized form.

To facilitate the discussion of local discrimination protocols, we adopt a unified notation for entanglement-resource configurations. We write
\begin{equation*}
\{(p,|\phi^+(d)\rangle_{AB});(r,|G\rangle_{ABC});(q,|F\rangle_{ABCD})\}
\end{equation*}
to denote an entanglement resource configuration \cite{Ref27}, where \(p\), \(r\), and \(q\) are nonnegative real numbers. This means that, on average, \(p\) copies of the maximally entangled state
\[
|\phi^{+}(d)\rangle_{AB}=\frac{1}{\sqrt{d}}\sum_{i=0}^{d-1}|ii\rangle_{AB}
\]
are consumed between parties \(A\) and \(B\). Similarly, \(r\) copies of
\[
|G\rangle_{ABC}=\frac{1}{\sqrt{2}}\sum_{i=0}^{1}|iii\rangle_{ABC}
\]
and \(q\) copies of
\[
|F\rangle_{ABCD}=\frac{1}{\sqrt{2}}\sum_{i=0}^{1}|iiii\rangle_{ABCD}
\]
are consumed among \(A,B,C\) and among \(A,B,C,D\), respectively.

In entanglement-assisted discrimination protocols, we frequently use orthogonality-preserving local measurements on different subsystems. Throughout the paper, we denote a generic projector by
\begin{equation*}
M_i=P\bigl[(|j_{i1}\rangle,|j_{i2}\rangle)_X;|k_i\rangle_x\bigr]+P\bigl[|j_i'\rangle_X;|k_i'\rangle_x\bigr],
\end{equation*}
with \(\sum_i M_i=I\). Here \(\{|j_{i1}\rangle,|j_{i2}\rangle\}\) and \(\{|j_i'\rangle\}\) are subsets of the computational basis of subsystem \(X\), while \(\{|k_i\rangle\}\) and \(\{|k_i'\rangle\}\) are subsets of the computational basis of the auxiliary system \(x\). The term
\[
P\bigl[(|j_{i1}\rangle,|j_{i2}\rangle)_X;|k_i\rangle_x\bigr]
\]
stands for
\[
\left(|j_{i1}\rangle_X\langle j_{i1}|+|j_{i2}\rangle_X\langle j_{i2}|\right)\otimes |k_i\rangle_x\langle k_i|,
\]
and \(P\bigl[|j_i'\rangle_X;|k_i'\rangle_x\bigr]\) is defined analogously.

\section{Local Discrimination of the GHZ Bases with Genuine Quantum Nonlocality}\label{Q3}
A nonlocal set of orthogonal states can always be perfectly distinguished locally once sufficient entanglement resources are shared among the parties. The key issue is how to reduce the required entanglement consumption. In an $n$-qubit system, the GHZ basis is an orthogonal basis consisting of generalized GHZ states \cite{Ref30}. For $n=4$ and $n=5$, this basis is not only locally irreducible in the full $n$-partition \cite{Ref31}, but also locally indistinguishable across every bipartition \cite{Ref30}. In this section, we construct entanglement-assisted discrimination protocols for the GHZ bases in four- and five-qubit systems, respectively. To save entanglement resources, we incorporate controlled-NOT (CNOT) gates \cite{Ref59} into the protocols, as stated in Theorems~\ref{thm:4qubit-ghz} and~\ref{thm:5qubit-ghz}.

\subsection{Local Distinguishability of the GHZ Basis in a Four-Qubit System}

In the system $\mathcal{C}^{2} \otimes \mathcal{C}^{2} \otimes \mathcal{C}^{2} \otimes \mathcal{C}^{2}$, the genuinely nonlocal GHZ basis in Ref.~\cite{Ref30} is
\begin{equation}\label{set3.1}
\begin{aligned}
&\left |\psi_{\pm}   \right \rangle _{0} =\left | 0000  \right \rangle \pm \left | 1111 \right \rangle  , \\
&\left |\psi_{\pm}   \right \rangle_{1}  =\left | 0010  \right \rangle \pm \left | 1101 \right \rangle , \\
&\left |\psi_{\pm}   \right \rangle _{2} =\left | 0100  \right \rangle \pm \left | 1011 \right \rangle, \\
&\left |\psi_{\pm}   \right \rangle_{3}  =\left | 0110  \right \rangle \pm \left | 1001 \right \rangle , \\
&\left |\psi_{\pm}   \right \rangle_{4}  =\left | 0001  \right \rangle \pm \left | 1110 \right \rangle, \\
&\left |\psi_{\pm}   \right \rangle_{5}  =\left | 0011  \right \rangle \pm \left | 1100\right \rangle ,\\
&\left |\psi_{\pm}   \right \rangle_{6}  =\left | 0101  \right \rangle \pm \left | 1010 \right \rangle, \\
&\left |\psi_{\pm}   \right \rangle_{7}  =\left | 0111  \right \rangle \pm \left | 1000 \right \rangle.
\end{aligned}
\end{equation}

\begin{theorem}\label{thm:4qubit-ghz}
In a four-qubit system, the nonlocal basis given by Eq.~(\ref{set3.1}) can be perfectly discriminated using the entanglement resource configuration $(1,|\phi^{+}(2)\rangle_{AB})$.
\end{theorem}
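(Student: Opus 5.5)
The plan is to use the single EPR pair shared by $A$ and $B$ as a coherent link that lets them apply a joint controlled-NOT, and then to finish the discrimination with purely local orthogonality-preserving measurements. Every state in Eq.~(\ref{set3.1}) has the form $|0\,b\,c\,d\rangle\pm|1\,\bar b\,\bar c\,\bar d\rangle$. Identifying it therefore means learning the three parities $z_{AB}=b$, $z_{AC}=c$, $z_{AD}=d$ together with the sign $\pm$, which is the eigenvalue of $X^{\otimes 4}$.

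\textbf{Step 1 (nonlocal CNOT $A\to B$).} With $|\phi^+(2)\rangle_{ab}$ held as ancillas $a$ (by $A$) and $b'$ (by $B$), $A$ applies a CNOT from her qubit onto $a$ and measures $a$ in the computational basis. $B$ applies the conditional $X$ correction on $b'$, then a CNOT from $b'$ onto his qubit, and finally measures $b'$ in the $X$ basis, after which $A$ applies a $Z$ correction. This is the standard one-ebit realisation of a controlled gate. It maps
$$|0bcd\rangle\pm|1\bar b\bar c\bar d\rangle \;\longmapsto\; |0bcd\rangle\pm|1b\bar c\bar d\rangle=|b\rangle_B\otimes\bigl(|0cd\rangle\pm|1\bar c\bar d\rangle\bigr)_{ACD}.$$
$B$ then measures in the computational basis and learns $b$, which splits the sixteen states into two groups of eight. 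I would write these steps in the paper's projector notation $M_i=P[(\cdot)_X;|k\rangle_x]$, checking at each stage that orthogonality is preserved.

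\textbf{Step 2 (the residual problem).} The postmeasurement states on $A,C,D$ form a three-qubit GHZ basis, with $B$ holding a now-free qubit. This is where I expect the main obstacle. By the result of Xiong \textit{et al.}~\cite{Ref29}, any five states of the three-qubit GHZ basis are already nonlocal, so the plan cannot simply stop here. Nor can the first three parties measure in the $X$ basis: for instance, $X$ measurements by $C$ and $D$ leave a relative phase $(-1)^{m_C+m_D}$ that no longer depends on $c,d$, so the information about $c,d$ is erased. The remedy I would try first is to \emph{not} measure out $B$'s qubit immediately. Instead, the CNOT would be chosen so that the parities $c$ and $d$ are also copied coherently, without spending another ebit, onto systems that $B$ and $A$ can jointly hold.

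Concretely, I would combine the ebit with the nonlocal CNOT so that afterwards $B$ holds a qubit carrying $b$ and $A$ holds two qubits whose states $|0cd\rangle$ and $|1\bar c\bar d\rangle$ are locally orthogonal after local CNOTs that $A$ can perform herself. If that succeeds, each party measures in the computational or the $X$ basis. The parities follow from the $Z$ outcomes, and the sign follows from the product of the $X$ outcomes, since the stabilisers $Z_AZ_B$, $Z_AZ_C$, $Z_AZ_D$ and $X^{\otimes4}$ commute.

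\textbf{Step 3 (verification).} For each measurement branch I would tabulate which states survive and check that the postmeasurement states remain mutually orthogonal. Sixteen states, one ebit and a fixed measurement tree make this a finite check. The total consumption is then one copy of $|\phi^+(2)\rangle_{AB}$, which gives the configuration $(1,|\phi^+(2)\rangle_{AB})$. The crux, and the step I am least sure of, is exhibiting a measurement order after the CNOT that extracts $c$, $d$ and the sign without a second entangled resource.
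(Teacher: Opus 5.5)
Your Step~1 is correct: the one-ebit nonlocal CNOT from $A$ to $B$ disentangles Bob's qubit and reveals $b$, and the paper's Steps~1--2 obtain the same parity from the same ebit. Your diagnosis in Step~2 is also correct, and that is where the proposal breaks down. Once the ebit is spent, what remains is the complete three-qubit GHZ basis on $A,C,D$, with no entanglement left. The remedy you suggest, copying $c$ and $d$ coherently onto systems held by $A$ or $B$ ``without spending another ebit,'' cannot be carried out. Writing $c$ coherently onto any system outside Charlie's lab is an entangling gate across the cut $C|ABD$, and no part of the resource crosses that cut. So Step~2 cannot be completed by any choice of measurement order.

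In fact no protocol using only $A$--$B$ entanglement can succeed, however much of it is supplied. Let $\tilde E_i$ be the POVM elements of any LOCC protocol on $ABCD$ together with the ancillas $a,b$; they are separable across $C|ABDab$. The induced elements on $ABCD$ are $E_i=(I\otimes\langle\phi^{+}|_{ab})\,\tilde E_i\,(I\otimes|\phi^{+}\rangle_{ab})$, and these are separable across $C|ABD$ because the ancillas sit on the $ABD$ side. Label the sixteen normalized states of Eq.~(\ref{set3.1}) by $|\psi_i\rangle$. They form a complete orthonormal basis, so perfect discrimination forces $E_i=|\psi_i\rangle\langle\psi_i|$. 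A rank-one separable operator must be a product projector, yet every $|\psi_i\rangle$ has Schmidt rank $2$ across $C|ABD$.

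The paper's proof gets past this point only by applying CNOT gates with Charlie (Step~3) and then Dave (Step~5) as control and Alice as target. These are coherent gates between separated parties, not LOCC operations. Each one can create an ebit across $C|ABD$ (respectively $D|ABC$), so implementing it under LOCC consumes at least one further ebit across that cut. Those are exactly the gates you would need after your Step~1 to extract $c$ and $d$. The obstacle you identified is therefore genuine: with the configuration $(1,|\phi^{+}(2)\rangle_{AB})$ alone, the statement as written cannot hold.
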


\begin{figure}[h]
\centering
\includegraphics[width=0.48\textwidth]{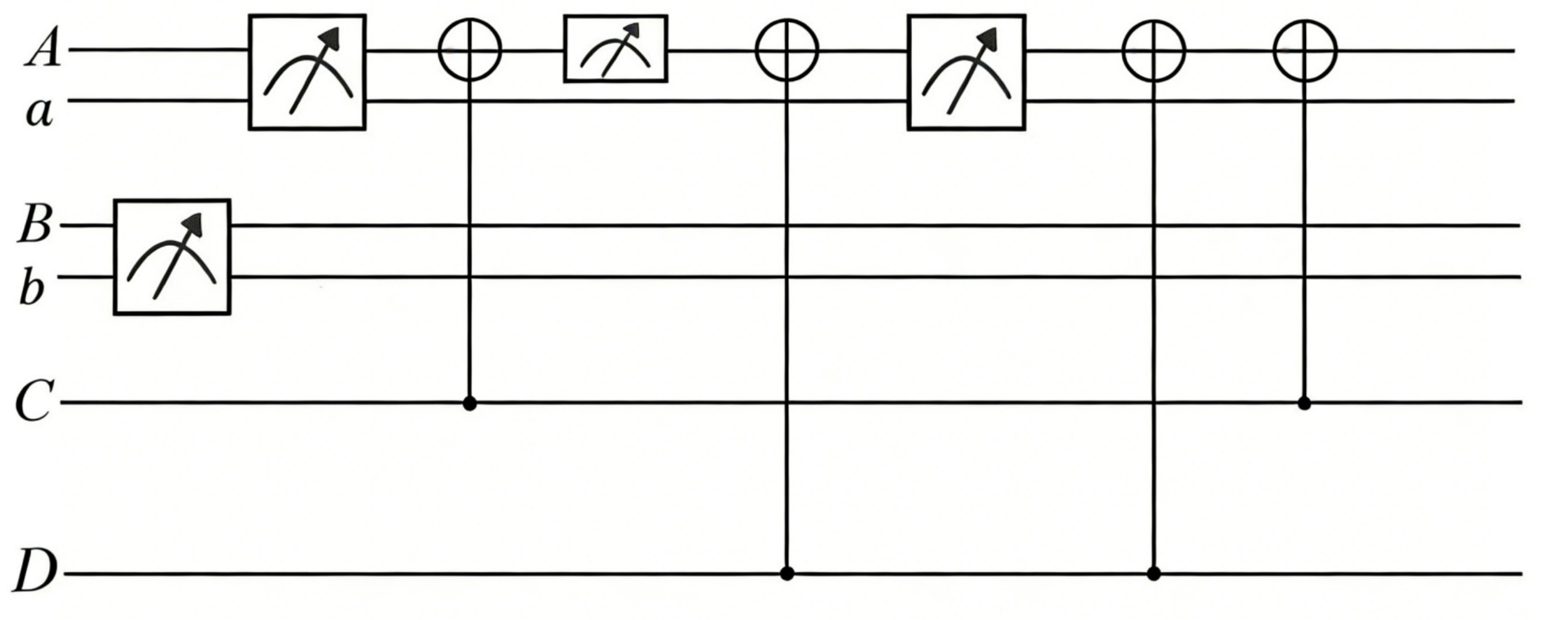}
\caption{Circuit diagram of the discrimination scheme for the GHZ basis~(\ref{set3.1}) in a four-qubit system. Lines $A$ and $a$ correspond to Alice's main subsystem and ancilla, respectively; lines $B$ and $b$ correspond to Bob's main subsystem and ancilla, respectively. The lower lines $C$ and $D$ correspond to Charlie and Dave. The meter denotes a measurement operation.\label{f3.1}}
\end{figure}

\begin{proof}
Figure~\ref{f3.1} presents the procedure for local discrimination of the set~(\ref{set3.1}). The four subsystems are held by Alice, Bob, Charlie, and Dave, respectively. The protocol proceeds as follows.

\medskip\noindent\textbf{Step 1.}
Alice and Bob share an EPR state $|\phi^{+}(2)\rangle_{ab}=|00\rangle_{ab}+|11\rangle_{ab}$. The initial state is
\[
|\psi\rangle_{ABCD} \otimes |\phi^{+}(2)\rangle_{ab},
\]
where $a$ and $b$ are ancillary subsystems of Alice and Bob, respectively. Bob performs the measurement
\[
\mathcal{M}_1 \equiv \bigl\{
M_{1,1} = P[|0\rangle_B; |0\rangle_b] + P[|1\rangle_B; |1\rangle_b],\,
M_{1,2} = I - M_{1,1}
\bigr\}.
\]
Conditioned on the outcome $M_{1,1}$, the states become
\[
\begin{aligned}
&|\psi_{\pm}\rangle_{0} =|0000\rangle |00\rangle \pm |1111\rangle |11\rangle, \\
&|\psi_{\pm}\rangle_{1} =|0010\rangle |00\rangle \pm |1101\rangle |11\rangle, \\
&|\psi_{\pm}\rangle_{2} =|0100\rangle |11\rangle \pm |1011\rangle |00\rangle, \\
&|\psi_{\pm}\rangle_{3} =|0110\rangle |11\rangle \pm |1001\rangle |00\rangle, \\
&|\psi_{\pm}\rangle_{4} =|0001\rangle |00\rangle \pm |1110\rangle |11\rangle, \\
&|\psi_{\pm}\rangle_{5} =|0011\rangle |00\rangle \pm |1100\rangle |11\rangle, \\
&|\psi_{\pm}\rangle_{6} =|0101\rangle |11\rangle \pm |1010\rangle |00\rangle, \\
&|\psi_{\pm}\rangle_{7} =|0111\rangle |11\rangle \pm |1000\rangle |00\rangle,
\end{aligned}
\]
where $|00\rangle$ and $|11\rangle$ denote $|00\rangle_{ab}$ and $|11\rangle_{ab}$, respectively.

\medskip\noindent\textbf{Step 2.}
Alice performs the measurement
\[
\mathcal{M}_2 \equiv \bigl\{
M_{2,1} = P[|0\rangle_A; |0\rangle_a] + P[|1\rangle_A; |1\rangle_a],\,
M_{2,2} = I - M_{2,1}
\bigr\}.
\]
If outcome $M_{2,1}$ occurs, the state is one of
$|\psi_{\pm}\rangle_{0}$, $|\psi_{\pm}\rangle_{1}$, $|\psi_{\pm}\rangle_{4}$, and $|\psi_{\pm}\rangle_{5}$.
If outcome $M_{2,2}$ occurs, the state is one of
$|\psi_{\pm}\rangle_{2}$, $|\psi_{\pm}\rangle_{3}$, $|\psi_{\pm}\rangle_{6}$, and $|\psi_{\pm}\rangle_{7}$.
Denote these two parts by $P_1$ and $P_2$, respectively:
\[
\begin{aligned}
P_1 &= \bigl\{ |\psi_{\pm}\rangle_{0}, |\psi_{\pm}\rangle_{1}, |\psi_{\pm}\rangle_{4}, |\psi_{\pm}\rangle_{5} \bigr\},\\
P_2 &= \bigl\{ |\psi_{\pm}\rangle_{2}, |\psi_{\pm}\rangle_{3}, |\psi_{\pm}\rangle_{6}, |\psi_{\pm}\rangle_{7} \bigr\}.
\end{aligned}
\]

\medskip\noindent\textbf{Step 3.}
Each part cannot be perfectly distinguished by projective measurements alone at this stage, so we apply a CNOT gate with Charlie as the control qubit and Alice as the target qubit. Then the states in $P_1$ and $P_2$ become
\[
\begin{aligned}
P_1:\;
\begin{cases}
|\psi_{\pm}\rangle_{0}=|0000\rangle|00\rangle \pm |0111\rangle|11\rangle,\\
|\psi_{\pm}\rangle_{1}=|1010\rangle|00\rangle \pm |1101\rangle|11\rangle,\\
|\psi_{\pm}\rangle_{4}=|0001\rangle|00\rangle \pm |0110\rangle|11\rangle,\\
|\psi_{\pm}\rangle_{5}=|1011\rangle|00\rangle \pm |1100\rangle|11\rangle,
\end{cases}
\end{aligned}
\]
and
\[
\begin{aligned}
P_2:\;
\begin{cases}
|\psi_{\pm}\rangle_{2}=|0100\rangle|11\rangle \pm |0011\rangle|00\rangle,\\
|\psi_{\pm}\rangle_{3}=|1110\rangle|11\rangle \pm |1001\rangle|00\rangle,\\
|\psi_{\pm}\rangle_{6}=|0101\rangle|11\rangle \pm |0010\rangle|00\rangle,\\
|\psi_{\pm}\rangle_{7}=|1111\rangle|11\rangle \pm |1000\rangle|00\rangle.
\end{cases}
\end{aligned}
\]

\medskip\noindent\textbf{Step 4.}
Alice performs the measurement
\[
\mathcal{M}_3 \equiv \bigl\{
M_{3,1}=|0\rangle_A\langle 0|,\,
M_{3,2}= I - M_{3,1}
\bigr\}.
\]
For the parts $P_1$ and $P_2$ in Step~3, the outcomes correspond to
\[
P_1:\;
\begin{cases}
M_{3,1}\Rightarrow \{|\psi_{\pm}\rangle_{0}, |\psi_{\pm}\rangle_{4}\},\\
M_{3,2}\Rightarrow \{|\psi_{\pm}\rangle_{1}, |\psi_{\pm}\rangle_{5}\},
\end{cases}
\qquad
P_2:\;
\begin{cases}
M_{3,1}\Rightarrow \{|\psi_{\pm}\rangle_{2}, |\psi_{\pm}\rangle_{6}\},\\
M_{3,2}\Rightarrow \{|\psi_{\pm}\rangle_{3}, |\psi_{\pm}\rangle_{7}\}.
\end{cases}
\]
Thus we obtain four subparts:
\[
\begin{aligned}
P_{11} &= \{|\psi_{\pm}\rangle_{0}, |\psi_{\pm}\rangle_{4}\},&
P_{12} &= \{|\psi_{\pm}\rangle_{1}, |\psi_{\pm}\rangle_{5}\},\\
P_{21} &= \{|\psi_{\pm}\rangle_{2}, |\psi_{\pm}\rangle_{6}\},&
P_{22} &= \{|\psi_{\pm}\rangle_{3}, |\psi_{\pm}\rangle_{7}\}.
\end{aligned}
\]

\medskip\noindent\textbf{Step 5.}
Apply a CNOT gate again, now with Dave as the control qubit and Alice as the target qubit. The states become
\[
\begin{aligned}
P_{11}:\;
\begin{cases}
|\psi_{\pm}\rangle_{0}=|0000\rangle|00\rangle \pm |1111\rangle|11\rangle,\\
|\psi_{\pm}\rangle_{4}=|1001\rangle|00\rangle \pm |0110\rangle|11\rangle,
\end{cases}
\qquad
P_{12}:\;
\begin{cases}
|\psi_{\pm}\rangle_{1}=|1010\rangle|00\rangle \pm |0101\rangle|11\rangle,\\
|\psi_{\pm}\rangle_{5}=|0011\rangle|00\rangle \pm |1100\rangle|11\rangle,
\end{cases}
\end{aligned}
\]
and
\[
\begin{aligned}
P_{21}:\;
\begin{cases}
|\psi_{\pm}\rangle_{2}=|0100\rangle|11\rangle \pm |1011\rangle|00\rangle,\\
|\psi_{\pm}\rangle_{6}=|1101\rangle|11\rangle \pm |0010\rangle|00\rangle,
\end{cases}
\qquad
P_{22}:\;
\begin{cases}
|\psi_{\pm}\rangle_{3}=|1110\rangle|11\rangle \pm |0001\rangle|00\rangle,\\
|\psi_{\pm}\rangle_{7}=|0111\rangle|11\rangle \pm |1000\rangle|00\rangle.
\end{cases}
\end{aligned}
\]

\medskip\noindent\textbf{Step 6.}
Alice performs the measurement
\[
\mathcal{M}_4 \equiv \bigl\{
M_{4,1}=P[|0\rangle_A; |0\rangle_a] + P[|1\rangle_A; |1\rangle_a],\,
M_{4,2}=I-M_{4,1}
\bigr\}.
\]
For the states in $P_{11}$, $P_{12}$, $P_{21}$, and $P_{22}$ given by Step~5, the outcomes identify $i\in\{0,1,\dots,7\}$ as follows:
\[
\begin{aligned}
P_{11}:\;
\begin{cases}
M_{4,1} \Rightarrow |\psi_{\pm}\rangle_{0},\\
M_{4,2} \Rightarrow |\psi_{\pm}\rangle_{4},
\end{cases}
\quad
P_{12}:\;
\begin{cases}
M_{4,1} \Rightarrow |\psi_{\pm}\rangle_{5},\\
M_{4,2} \Rightarrow |\psi_{\pm}\rangle_{1},
\end{cases}
\quad
P_{21}:\;
\begin{cases}
M_{4,1} \Rightarrow |\psi_{\pm}\rangle_{6},\\
M_{4,2} \Rightarrow |\psi_{\pm}\rangle_{2},
\end{cases}
\end{aligned}
\]
and
\[
\begin{aligned}
P_{22}:\;
\begin{cases}
M_{4,1} \Rightarrow |\psi_{\pm}\rangle_{3},\\
M_{4,2} \Rightarrow |\psi_{\pm}\rangle_{7}.
\end{cases}
\end{aligned}
\]

Since any two orthogonal pure states can be exactly distinguished by LOCC \cite{Ref60}, each pair $\{|\psi_{+}\rangle_{i},|\psi_{-}\rangle_{i}\}$ is locally distinguishable. Therefore, the GHZ basis~(\ref{set3.1}) can be perfectly distinguished.

During the protocol, the original quantum states are transformed due to the two CNOT gates. Finally, we can recover the original state by applying the two CNOT gates again (see Fig.~\ref{f3.1}), i.e., first CNOT with Charlie controlling Alice, and then CNOT with Dave controlling Alice.

If outcome $M_{1,2}$ occurs in Step~1, an analogous analysis applies.
\end{proof}

Therefore, we obtain a local protocol that perfectly distinguishes this genuinely nonlocal set while consuming only one EPR pair. The same idea can also be extended to the GHZ basis in the five-qubit system.

\subsection{Local Distinguishability of the GHZ Basis in a Five-Qubit System}
In the system $\mathcal{C}^{2} \otimes \mathcal{C}^{2} \otimes \mathcal{C}^{2} \otimes \mathcal{C}^{2} \otimes \mathcal{C}^{2}$, the GHZ basis with genuine nonlocality in Ref.~\cite{Ref30} is
\begin{equation}\label{set3.2}
\begin{aligned}
&\left |\varphi _{\pm}   \right \rangle_{0} =\left | 00000  \right \rangle \pm \left | 11111 \right \rangle , \\
&\left |\varphi _{\pm}   \right \rangle_{1} =\left | 00001 \right \rangle \pm \left | 11110 \right \rangle , \\
&\left |\varphi _{\pm}   \right \rangle_{2} =\left | 00010  \right \rangle \pm \left | 11101 \right \rangle, \\
&\left |\varphi _{\pm}   \right \rangle_{3} =\left |00011  \right \rangle \pm \left | 11100 \right \rangle , \\
&\left |\varphi _{\pm}   \right \rangle_{4} =\left | 00100  \right \rangle \pm \left | 11011 \right \rangle ,\\
&\left |\varphi _{\pm}   \right \rangle_{5} =\left | 00101 \right \rangle \pm \left | 11010\right \rangle, \\
&\left |\varphi _{\pm}   \right \rangle_{6} =\left | 00110  \right \rangle \pm \left |11001\right \rangle ,\\
&\left |\varphi _{\pm}   \right \rangle_{7} =\left | 00111  \right \rangle \pm \left | 11000 \right \rangle ,\\
&\left |\varphi _{\pm}   \right \rangle_{8} =\left | 01000  \right \rangle \pm \left | 10111 \right \rangle,\\
&\left |\varphi _{\pm}   \right \rangle_{9} =\left | 01001 \right \rangle \pm \left | 10110 \right \rangle,\\
&\left |\varphi _{\pm}   \right \rangle_{10}=\left | 01010  \right \rangle \pm \left | 10101 \right \rangle,\\
&\left |\varphi _{\pm}   \right \rangle_{11}=\left | 01011  \right \rangle \pm \left | 10100 \right \rangle,\\
&\left |\varphi _{\pm}   \right \rangle_{12}=\left | 01100 \right \rangle \pm \left | 10011 \right \rangle,\\
&\left |\varphi _{\pm}   \right \rangle_{13}=\left | 01101  \right \rangle \pm \left | 10010\right \rangle,\\
&\left |\varphi _{\pm}   \right \rangle_{14}=\left | 01110  \right \rangle \pm \left | 10001 \right \rangle,\\
&\left |\varphi _{\pm}   \right \rangle_{15}=\left | 01111  \right \rangle \pm \left | 10000 \right \rangle.
\end{aligned}
\end{equation}

\begin{theorem}\label{thm:5qubit-ghz}
The GHZ basis~(\ref{set3.2}) can be locally discriminated using the entanglement resource configuration $(1,|\phi^{+}(2)\rangle_{AB})$.
\end{theorem}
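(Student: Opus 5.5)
The plan is to reuse the mechanism of the proof of Theorem~\ref{thm:4qubit-ghz}, adding one more CNOT round for the fifth qubit. Write every state of~(\ref{set3.2}) as $|x_1x_2x_3x_4x_5\rangle\pm|\bar x_1\bar x_2\bar x_3\bar x_4\bar x_5\rangle$ with $x_1=0$. The label $i\in\{0,\dots,15\}$ is then fixed by the four relative parities $x_1\oplus x_k$ for $k=2,3,4,5$.

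The guiding principle is that a projective measurement leaves the superposition of a GHZ-type state intact precisely when it measures a parity of an \emph{even} number of bits. Such a parity is invariant under the global bit flip relating the two branches. The EPR pair $|00\rangle_{ab}+|11\rangle_{ab}$ supplies an extra bit that lets us turn odd parities into even ones. The CNOTs (controlled by $C$, $D$, $E$, target $A$) accumulate several bits into Alice's qubit so that she can read them out.

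\textbf{Steps 1--2.} The first two steps are exactly as in the four-qubit case, with the ancillas $a,b$ attached to the EPR pair.
\begin{itemize}
\item Bob measures $\mathcal{M}_1=\{P[|0\rangle_B;|0\rangle_b]+P[|1\rangle_B;|1\rangle_b],\,I-\cdot\}$. On outcome $M_{1,1}$ this correlates the ancilla pair $ab$ with $B$ in each branch, as in Step~1 of the previous proof.
\item Alice measures $\mathcal{M}_2$, i.e. the parity $A\oplus a$, which equals $A\oplus B$. This reveals $x_1\oplus x_2$ and splits the basis into two halves of eight pairs each: $\{0,\dots,7\}$ and $\{8,\dots,15\}$.
\end{itemize}

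\textbf{Steps 3--6.} We then alternate CNOT gates into $A$ with measurements on $A$. Each measurement is plain or ancilla-assisted, chosen so that the measured parity always involves an even number of system-plus-ancilla bits.
\begin{itemize}
\item CNOT $C\to A$, then Alice measures $A$ in the computational basis. This gives $A\oplus C$, hence $x_1\oplus x_3$, and leaves four pairs.
\item CNOT $D\to A$, then Alice measures $A\oplus a$. This gives $A\oplus C\oplus D\oplus B$. Together with the known $A\oplus B$ and $A\oplus C$ it yields $x_1\oplus x_4$, and leaves two pairs.
\item CNOT $E\to A$, then Alice measures $A$ plainly. This gives $A\oplus C\oplus D\oplus E$, which together with the previous outcomes yields $x_1\oplus x_5$.
\end{itemize}
At this point $i$ is identified. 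The remaining pair $\{|\varphi_+\rangle_i,|\varphi_-\rangle_i\}$ is two orthogonal pure states, hence LOCC-distinguishable~\cite{Ref60}. The CNOTs are undone in reverse order ($E\to A$, $D\to A$, $C\to A$). The branch $M_{1,2}$ of Step~1 is handled symmetrically, with the ancilla correlated as $B\oplus 1$. Only the single EPR pair $|\phi^+(2)\rangle_{ab}$ is consumed, matching the configuration $(1,|\phi^{+}(2)\rangle_{AB})$.

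\textbf{Main obstacle.} The step needing care is the bookkeeping that each intermediate measurement is orthogonality preserving and actually splits the current subset in half. Concretely, I will tabulate, after each CNOT, the two branches of all surviving states (as in the four-qubit proof). I will then verify two things: the measured quantity takes the same value on both branches of each state, and it takes different values on the two halves of the current subset. If some parity turns out to be odd, the fix I would try first is to swap the plain and ancilla-assisted measurement at that step, since inserting $a$ flips the parity count.
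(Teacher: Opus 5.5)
Your protocol is the paper's protocol up to the order of the controls. After the same measurements of $B\oplus b$ (Bob) and $A\oplus a$ (Alice), the paper applies CNOTs into $A$ controlled by $E$, then $D$, then $C$, with the same plain / ancilla-assisted / plain readouts; you use $C$, $D$, $E$. Your parity bookkeeping is right: every measured parity is invariant under the global bit flip, and $x_1\oplus x_2$, $x_1\oplus x_3$, $x_1\oplus x_2\oplus x_3\oplus x_4$, $x_1\oplus x_3\oplus x_4\oplus x_5$ are independent, so each round halves the candidates.

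The genuine gap, which the paper's proof shares, is the CNOT gates themselves. A CNOT with control held by Charlie, Dave or Eve and target held by Alice is a two-party unitary, not an LOCC operation, and the EPR pair on $ab$ does not pay for it. A CNOT turns a product input into an ebit, so implementing one needs at least an ebit across every cut separating the two labs it couples. If such gates were free, the EPR pair would be pointless anyway: CNOTs from $A$ into $B,C,D,E$, a Hadamard on $A$, and a computational-basis readout identify every state with no entanglement at all.

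This cannot be repaired within the stated resource. Group the parties as $AB|CDE$. The pair $|\phi^{+}(2)\rangle_{ab}$ lies entirely on one side, so any five-party LOCC protocol using it is a bipartite LOCC protocol with no entanglement across this cut. But the 32 states form a complete orthogonal basis of $\mathbb{C}^{4}\otimes\mathbb{C}^{8}$ made of states entangled across that cut. Such a basis is never perfectly LOCC-distinguishable (Horodecki et al.; this is exactly the genuine nonlocality quoted at the start of Sec.~\ref{Q3}). So, with ``locally'' meaning LOCC, the statement is false, and the same objection applies to Theorem~\ref{thm:4qubit-ghz} via the cut $AB|CD$.

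What your parity mechanism does legitimately give is a protocol with EPR pairs shared between Alice (halves $a_1,\dots,a_4$) and each of Bob, Charlie, Dave and Eve. Each of those four projects onto ``system bit $=$ ancilla bit'' or its complement, exactly as Bob does in Step~1. Alice then reads the four flip-invariant parities $A\oplus a_k$ and learns $i$, and the final pair $\{|\varphi_{+}\rangle_i,|\varphi_{-}\rangle_i\}$ is handled as before. By the same cut argument, any configuration of EPR pairs that works must connect all five parties.
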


\begin{figure}[h]
\centering
\includegraphics[width=0.48\textwidth]{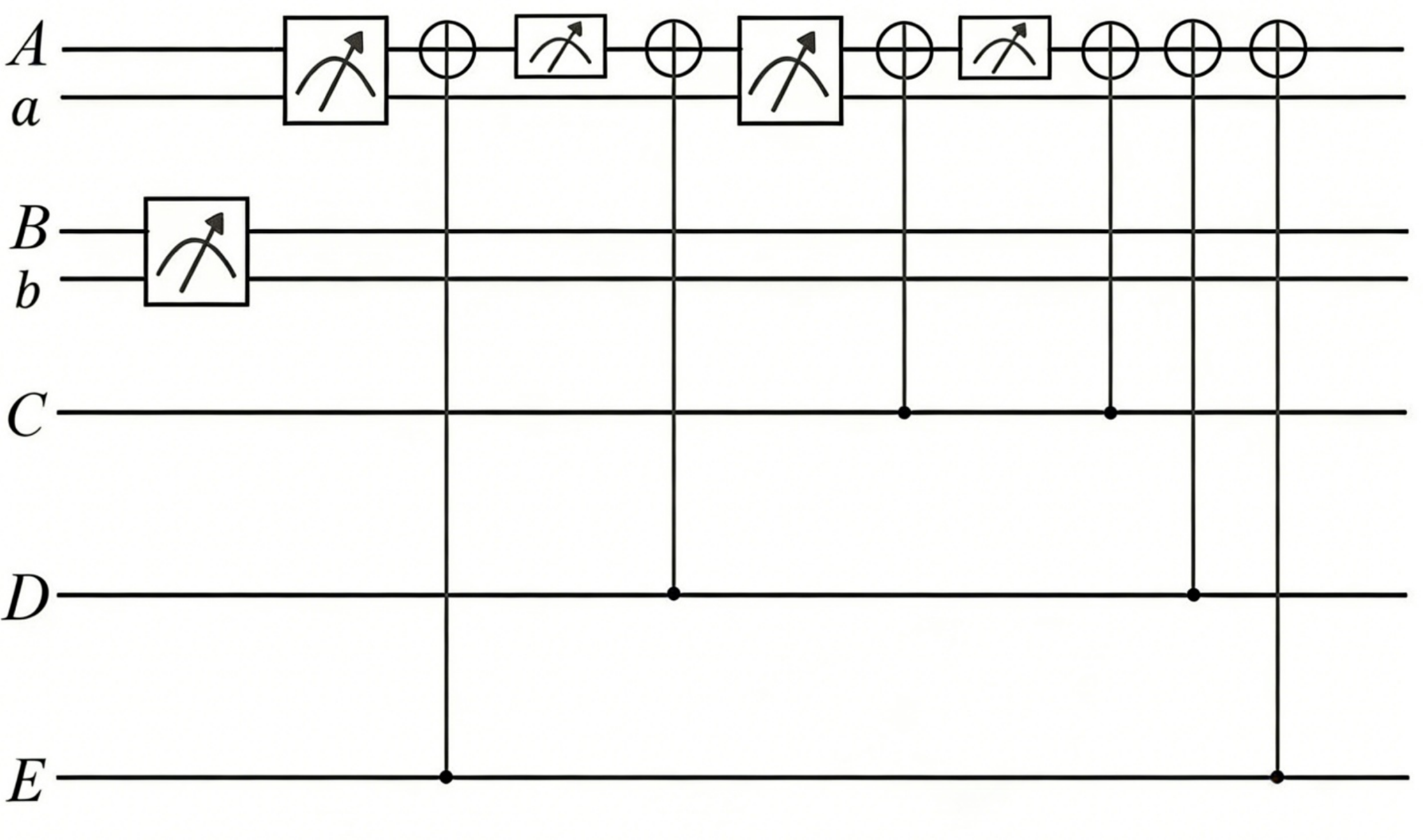}
\caption{Circuit diagram of the discrimination scheme for the genuinely nonlocal set~(\ref{set3.2}). Lines $A$ and $a$ correspond to Alice's main subsystem and ancilla, respectively; lines $B$ and $b$ correspond to Bob's main subsystem and ancilla, respectively. The lower lines $C$, $D$, and $E$ correspond to Charlie, Dave, and Eve. The meter denotes a measurement operation.\label{f3.2}}
\end{figure}

\begin{proof}
The detailed procedure for local discrimination of the set~(\ref{set3.2}) is shown in Fig.~\ref{f3.2}. The protocol proceeds as follows.

\medskip\noindent\textbf{Step 1.}
Alice and Bob share an EPR state $|\phi^{+}(2)\rangle_{ab}=|00\rangle_{ab}+|11\rangle_{ab}$. The initial state is
\[
|\psi\rangle_{ABCDE} \otimes |\phi^{+}(2)\rangle_{ab},
\]
where $a$ and $b$ are ancillary subsystems of Alice and Bob, respectively. Bob performs the measurement
\[
\mathcal{M}_1 \equiv \bigl\{
M_{1,1}=P[|0\rangle_B; |0\rangle_b] + P[|1\rangle_B; |1\rangle_b],\,
M_{1,2}=I - M_{1,1}
\bigr\}.
\]
If outcome $M_{1,1}$ occurs, each state becomes
\[
\begin{aligned}
&|\varphi_{\pm}\rangle_{0} =|00000\rangle|00\rangle \pm |11111\rangle|11\rangle, \\
&|\varphi_{\pm}\rangle_{1} =|00001\rangle|00\rangle \pm |11110\rangle|11\rangle, \\
&|\varphi_{\pm}\rangle_{2} =|00010\rangle|00\rangle \pm |11101\rangle|11\rangle, \\
&|\varphi_{\pm}\rangle_{3} =|00011\rangle|00\rangle \pm |11100\rangle|11\rangle, \\
&|\varphi_{\pm}\rangle_{4} =|00100\rangle|00\rangle \pm |11011\rangle|11\rangle, \\
&|\varphi_{\pm}\rangle_{5} =|00101\rangle|00\rangle \pm |11010\rangle|11\rangle, \\
&|\varphi_{\pm}\rangle_{6} =|00110\rangle|00\rangle \pm |11001\rangle|11\rangle, \\
&|\varphi_{\pm}\rangle_{7} =|00111\rangle|00\rangle \pm |11000\rangle|11\rangle, \\
&|\varphi_{\pm}\rangle_{8} =|01000\rangle|11\rangle \pm |10111\rangle|00\rangle, \\
&|\varphi_{\pm}\rangle_{9} =|01001\rangle|11\rangle \pm |10110\rangle|00\rangle, \\
&|\varphi_{\pm}\rangle_{10}=|01010\rangle|11\rangle \pm |10101\rangle|00\rangle, \\
&|\varphi_{\pm}\rangle_{11}=|01011\rangle|11\rangle \pm |10100\rangle|00\rangle, \\
&|\varphi_{\pm}\rangle_{12}=|01100\rangle|11\rangle \pm |10011\rangle|00\rangle, \\
&|\varphi_{\pm}\rangle_{13}=|01101\rangle|11\rangle \pm |10010\rangle|00\rangle, \\
&|\varphi_{\pm}\rangle_{14}=|01110\rangle|11\rangle \pm |10001\rangle|00\rangle, \\
&|\varphi_{\pm}\rangle_{15}=|01111\rangle|11\rangle \pm |10000\rangle|00\rangle,
\end{aligned}
\]
where $|00\rangle$ and $|11\rangle$ mean $|00\rangle_{ab}$ and $|11\rangle_{ab}$, respectively.

\medskip\noindent\textbf{Step 2.}
Alice makes a projective measurement
\[
\begin{aligned}
\mathcal{M}_2 \equiv \bigl\{\, &
M_{2,1} = P[|0\rangle_A; |0\rangle_a] + P[|1\rangle_A; |1\rangle_a], \\
& M_{2,2} = P[|0\rangle_A; |1\rangle_a] + P[|1\rangle_A; |0\rangle_a]
\bigr\}.
\end{aligned}
\]
If outcome $M_{2,1}$ occurs, the given state is one of
$\{|\varphi_{\pm}\rangle_{0},|\varphi_{\pm}\rangle_{1},|\varphi_{\pm}\rangle_{2},|\varphi_{\pm}\rangle_{3},
|\varphi_{\pm}\rangle_{4},|\varphi_{\pm}\rangle_{5},|\varphi_{\pm}\rangle_{6},|\varphi_{\pm}\rangle_{7}\}$;
if outcome $M_{2,2}$ occurs, the state is one of
$\{|\varphi_{\pm}\rangle_{8},|\varphi_{\pm}\rangle_{9},|\varphi_{\pm}\rangle_{10},|\varphi_{\pm}\rangle_{11},
|\varphi_{\pm}\rangle_{12},|\varphi_{\pm}\rangle_{13},|\varphi_{\pm}\rangle_{14},|\varphi_{\pm}\rangle_{15}\}$.
Denote the two cases by $P_1$ and $P_2$, respectively:
\[
\begin{aligned}
P_1 &= \{|\varphi_{\pm}\rangle_{0},|\varphi_{\pm}\rangle_{1},|\varphi_{\pm}\rangle_{2},|\varphi_{\pm}\rangle_{3},
|\varphi_{\pm}\rangle_{4},|\varphi_{\pm}\rangle_{5},|\varphi_{\pm}\rangle_{6},|\varphi_{\pm}\rangle_{7}\},\\
P_2 &= \{|\varphi_{\pm}\rangle_{8},|\varphi_{\pm}\rangle_{9},|\varphi_{\pm}\rangle_{10},|\varphi_{\pm}\rangle_{11},
|\varphi_{\pm}\rangle_{12},|\varphi_{\pm}\rangle_{13},|\varphi_{\pm}\rangle_{14},|\varphi_{\pm}\rangle_{15}\}.
\end{aligned}
\]

\medskip\noindent\textbf{Step 3.}
Following the idea in Theorem~\ref{thm:4qubit-ghz}, we apply a CNOT gate to further separate the states. Let the quantum states pass through a CNOT gate with Eve as the control qubit and Alice as the target qubit. Then the states become
\[
\begin{aligned}
P_1:\;
\begin{cases}
|\varphi_{\pm}\rangle_{0}=|00000\rangle|00\rangle \pm |01111\rangle|11\rangle,\\
|\varphi_{\pm}\rangle_{1}=|10001\rangle|00\rangle \pm |11110\rangle|11\rangle,\\
|\varphi_{\pm}\rangle_{2}=|00010\rangle|00\rangle \pm |01101\rangle|11\rangle,\\
|\varphi_{\pm}\rangle_{3}=|10011\rangle|00\rangle \pm |11100\rangle|11\rangle,\\
|\varphi_{\pm}\rangle_{4}=|00100\rangle|00\rangle \pm |01011\rangle|11\rangle,\\
|\varphi_{\pm}\rangle_{5}=|10101\rangle|00\rangle \pm |11010\rangle|11\rangle,\\
|\varphi_{\pm}\rangle_{6}=|00110\rangle|00\rangle \pm |01001\rangle|11\rangle,\\
|\varphi_{\pm}\rangle_{7}=|10111\rangle|00\rangle \pm |11000\rangle|11\rangle,
\end{cases}
\end{aligned}
\]
and
\[
\begin{aligned}
P_2:\;
\begin{cases}
|\varphi_{\pm}\rangle_{8}=|01000\rangle|11\rangle \pm |00111\rangle|00\rangle,\\
|\varphi_{\pm}\rangle_{9}=|11001\rangle|11\rangle \pm |10110\rangle|00\rangle,\\
|\varphi_{\pm}\rangle_{10}=|01010\rangle|11\rangle \pm |00101\rangle|00\rangle,\\
|\varphi_{\pm}\rangle_{11}=|11011\rangle|11\rangle \pm |10100\rangle|00\rangle,\\
|\varphi_{\pm}\rangle_{12}=|01100\rangle|11\rangle \pm |00011\rangle|00\rangle,\\
|\varphi_{\pm}\rangle_{13}=|11101\rangle|11\rangle \pm |10010\rangle|00\rangle,\\
|\varphi_{\pm}\rangle_{14}=|01110\rangle|11\rangle \pm |00001\rangle|00\rangle,\\
|\varphi_{\pm}\rangle_{15}=|11111\rangle|11\rangle \pm |10000\rangle|00\rangle.
\end{cases}
\end{aligned}
\]

\medskip\noindent\textbf{Step 4.}
Alice performs the measurement
\[
\mathcal{M}_3 \equiv \bigl\{\, M_{3,1} = |0\rangle_A\langle 0|,\; M_{3,2} = I - M_{3,1} \,\bigr\}.
\]
For the states in $P_1$ and $P_2$ given by Step~3, the outcomes correspond to
\[
P_1:\;
\begin{cases}
M_{3,1} \Rightarrow \{|\varphi_{\pm}\rangle_{0},|\varphi_{\pm}\rangle_{2},|\varphi_{\pm}\rangle_{4},|\varphi_{\pm}\rangle_{6}\},\\
M_{3,2} \Rightarrow \{|\varphi_{\pm}\rangle_{1},|\varphi_{\pm}\rangle_{3},|\varphi_{\pm}\rangle_{5},|\varphi_{\pm}\rangle_{7}\},
\end{cases}
\]
and
\[
P_2:\;
\begin{cases}
M_{3,1} \Rightarrow \{|\varphi_{\pm}\rangle_{8},|\varphi_{\pm}\rangle_{10},|\varphi_{\pm}\rangle_{12},|\varphi_{\pm}\rangle_{14}\},\\
M_{3,2} \Rightarrow \{|\varphi_{\pm}\rangle_{9},|\varphi_{\pm}\rangle_{11},|\varphi_{\pm}\rangle_{13},|\varphi_{\pm}\rangle_{15}\}.
\end{cases}
\]
Thus we obtain
\[
\begin{aligned}
P_{11} &= \{|\varphi_{\pm}\rangle_{0},|\varphi_{\pm}\rangle_{2},|\varphi_{\pm}\rangle_{4},|\varphi_{\pm}\rangle_{6}\},&
P_{12} &= \{|\varphi_{\pm}\rangle_{1},|\varphi_{\pm}\rangle_{3},|\varphi_{\pm}\rangle_{5},|\varphi_{\pm}\rangle_{7}\},\\
P_{21} &= \{|\varphi_{\pm}\rangle_{8},|\varphi_{\pm}\rangle_{10},|\varphi_{\pm}\rangle_{12},|\varphi_{\pm}\rangle_{14}\},&
P_{22} &= \{|\varphi_{\pm}\rangle_{9},|\varphi_{\pm}\rangle_{11},|\varphi_{\pm}\rangle_{13},|\varphi_{\pm}\rangle_{15}\}.
\end{aligned}
\]

\medskip\noindent\textbf{Step 5.}
Apply a CNOT gate again with Dave as the control qubit and Alice as the target qubit. Then the states become
\[
\begin{aligned}
P_{11}:\;
\begin{cases}
|\varphi_{\pm}\rangle_{0}=|00000\rangle|00\rangle \pm |11111\rangle|11\rangle,\\
|\varphi_{\pm}\rangle_{2}=|10010\rangle|00\rangle \pm |01101\rangle|11\rangle,\\
|\varphi_{\pm}\rangle_{4}=|00100\rangle|00\rangle \pm |11011\rangle|11\rangle,\\
|\varphi_{\pm}\rangle_{6}=|10110\rangle|00\rangle \pm |01001\rangle|11\rangle,
\end{cases}
\qquad
P_{12}:\;
\begin{cases}
|\varphi_{\pm}\rangle_{1}=|10001\rangle|00\rangle \pm |01110\rangle|11\rangle,\\
|\varphi_{\pm}\rangle_{3}=|00011\rangle|00\rangle \pm |11100\rangle|11\rangle,\\
|\varphi_{\pm}\rangle_{5}=|10101\rangle|00\rangle \pm |01010\rangle|11\rangle,\\
|\varphi_{\pm}\rangle_{7}=|00111\rangle|00\rangle \pm |11000\rangle|11\rangle,
\end{cases}
\end{aligned}
\]
and
\[
\begin{aligned}
P_{21}:\;
\begin{cases}
|\varphi_{\pm}\rangle_{8}=|01000\rangle|11\rangle \pm |10111\rangle|00\rangle,\\
|\varphi_{\pm}\rangle_{10}=|11010\rangle|11\rangle \pm |00101\rangle|00\rangle,\\
|\varphi_{\pm}\rangle_{12}=|01100\rangle|11\rangle \pm |10011\rangle|00\rangle,\\
|\varphi_{\pm}\rangle_{14}=|11110\rangle|11\rangle \pm |00001\rangle|00\rangle,
\end{cases}
\qquad
P_{22}:\;
\begin{cases}
|\varphi_{\pm}\rangle_{9}=|11001\rangle|11\rangle \pm |00110\rangle|00\rangle,\\
|\varphi_{\pm}\rangle_{11}=|01011\rangle|11\rangle \pm |10100\rangle|00\rangle,\\
|\varphi_{\pm}\rangle_{13}=|11101\rangle|11\rangle \pm |00010\rangle|00\rangle,\\
|\varphi_{\pm}\rangle_{15}=|01111\rangle|11\rangle \pm |10000\rangle|00\rangle.
\end{cases}
\end{aligned}
\]

\medskip\noindent\textbf{Step 6.}
Alice performs the measurement
\[
\mathcal{M}_4 \equiv \bigl\{
M_{4,1} = P[|0\rangle_A; |0\rangle_a] + P[|1\rangle_A; |1\rangle_a],\,
M_{4,2} = I - M_{4,1}
\bigr\}.
\]
For the parts $P_{11}$, $P_{12}$, $P_{21}$, and $P_{22}$ in Step~5, the outcomes correspond to
\[
\begin{aligned}
P_{11}:~
\begin{cases}
M_{4,1} \Rightarrow \{|\varphi_{\pm}\rangle_{0},|\varphi_{\pm}\rangle_{4}\},\\
M_{4,2} \Rightarrow \{|\varphi_{\pm}\rangle_{2},|\varphi_{\pm}\rangle_{6}\},
\end{cases}\qquad
P_{12}:~
\begin{cases}
M_{4,1} \Rightarrow \{|\varphi_{\pm}\rangle_{3},|\varphi_{\pm}\rangle_{7}\},\\
M_{4,2} \Rightarrow \{|\varphi_{\pm}\rangle_{1},|\varphi_{\pm}\rangle_{5}\},
\end{cases}
\end{aligned}
\]
and
\[
\begin{aligned}
P_{21}:~
\begin{cases}
M_{4,1} \Rightarrow \{|\varphi_{\pm}\rangle_{10},|\varphi_{\pm}\rangle_{14}\},\\
M_{4,2} \Rightarrow \{|\varphi_{\pm}\rangle_{8},|\varphi_{\pm}\rangle_{12}\},
\end{cases}\qquad
P_{22}:~
\begin{cases}
M_{4,1} \Rightarrow \{|\varphi_{\pm}\rangle_{9},|\varphi_{\pm}\rangle_{13}\},\\
M_{4,2} \Rightarrow \{|\varphi_{\pm}\rangle_{11},|\varphi_{\pm}\rangle_{15}\}.
\end{cases}
\end{aligned}
\]

As a result, the states in the set~(\ref{set3.2}) can be divided into eight parts:
\[
\setlength{\jot}{1pt}
\begin{aligned}
P_1 &=
\begin{cases}
P_{11} =
\begin{cases}
P_{111}=\{|\varphi_{\pm}\rangle_{0},|\varphi_{\pm}\rangle_{4}\},\\
P_{112}=\{|\varphi_{\pm}\rangle_{2},|\varphi_{\pm}\rangle_{6}\},
\end{cases}\\
P_{12} =
\begin{cases}
P_{121}=\{|\varphi_{\pm}\rangle_{3},|\varphi_{\pm}\rangle_{7}\},\\
P_{122}=\{|\varphi_{\pm}\rangle_{1},|\varphi_{\pm}\rangle_{5}\},
\end{cases}
\end{cases}
\\[2pt]
P_2 &=
\begin{cases}
P_{21}=
\begin{cases}
P_{211}=\{|\varphi_{\pm}\rangle_{10},|\varphi_{\pm}\rangle_{14}\},\\
P_{212}=\{|\varphi_{\pm}\rangle_{8},|\varphi_{\pm}\rangle_{12}\},
\end{cases}\\
P_{22}=
\begin{cases}
P_{221}=\{|\varphi_{\pm}\rangle_{9},|\varphi_{\pm}\rangle_{13}\},\\
P_{222}=\{|\varphi_{\pm}\rangle_{11},|\varphi_{\pm}\rangle_{15}\}.
\end{cases}
\end{cases}.
\end{aligned}
\]

\medskip\noindent\textbf{Step 7.}
Apply a CNOT gate again with Charlie as the control qubit and Alice as the target qubit. Then the states become
\[
\begin{aligned}
P_{111}:\;
\begin{cases}
|\varphi_{\pm}\rangle_{0}=|00000\rangle|00\rangle \pm |01111\rangle|11\rangle,\\
|\varphi_{\pm}\rangle_{4}=|10100\rangle|00\rangle \pm |11011\rangle|11\rangle,
\end{cases}\qquad
P_{112}:\;
\begin{cases}
|\varphi_{\pm}\rangle_{2}=|10010\rangle|00\rangle \pm |11101\rangle|11\rangle,\\
|\varphi_{\pm}\rangle_{6}=|00110\rangle|00\rangle \pm |01001\rangle|11\rangle,
\end{cases}\\
P_{121}:\;
\begin{cases}
|\varphi_{\pm}\rangle_{3}=|00011\rangle|00\rangle \pm |01100\rangle|11\rangle,\\
|\varphi_{\pm}\rangle_{7}=|10111\rangle|00\rangle \pm |11000\rangle|11\rangle,
\end{cases}\qquad
P_{122}:\;
\begin{cases}
|\varphi_{\pm}\rangle_{1}=|10001\rangle|00\rangle \pm |11110\rangle|11\rangle,\\
|\varphi_{\pm}\rangle_{5}=|00101\rangle|00\rangle \pm |01010\rangle|11\rangle,
\end{cases}\\
P_{211}:\;
\begin{cases}
|\varphi_{\pm}\rangle_{10}=|11010\rangle|11\rangle \pm |10101\rangle|00\rangle,\\
|\varphi_{\pm}\rangle_{14}=|01110\rangle|11\rangle \pm |00001\rangle|00\rangle,
\end{cases}\qquad
P_{212}:\;
\begin{cases}
|\varphi_{\pm}\rangle_{8}=|01000\rangle|11\rangle \pm |00111\rangle|00\rangle,\\
|\varphi_{\pm}\rangle_{12}=|11100\rangle|11\rangle \pm |10011\rangle|00\rangle,
\end{cases}\\
P_{221}:\;
\begin{cases}
|\varphi_{\pm}\rangle_{9}=|11001\rangle|11\rangle \pm |10110\rangle|00\rangle,\\
|\varphi_{\pm}\rangle_{13}=|01101\rangle|11\rangle \pm |00010\rangle|00\rangle,
\end{cases}\qquad
P_{222}:\;
\begin{cases}
|\varphi_{\pm}\rangle_{11}=|01011\rangle|11\rangle \pm |00100\rangle|00\rangle,\\
|\varphi_{\pm}\rangle_{15}=|11111\rangle|11\rangle \pm |10000\rangle|00\rangle.
\end{cases}
\end{aligned}
\]

\medskip\noindent\textbf{Step 8.}
Alice performs the measurement
\[
\mathcal{M}_5 \equiv \bigl\{\, M_{5,1} = |0\rangle_A\langle 0|,\; M_{5,2} = I - M_{5,1} \,\bigr\}.
\]
The outcomes identify the index as
\[
\setlength{\jot}{2pt}
\begin{aligned}
P_{111}:~
&\begin{cases}
M_{5,1}\Rightarrow |\varphi_{\pm}\rangle_{0},\\
M_{5,2}\Rightarrow |\varphi_{\pm}\rangle_{4},
\end{cases}\qquad
P_{112}:~
\begin{cases}
M_{5,1}\Rightarrow |\varphi_{\pm}\rangle_{6},\\
M_{5,2}\Rightarrow |\varphi_{\pm}\rangle_{2},
\end{cases}\\
P_{121}:~
&\begin{cases}
M_{5,1}\Rightarrow |\varphi_{\pm}\rangle_{3},\\
M_{5,2}\Rightarrow |\varphi_{\pm}\rangle_{7},
\end{cases}\qquad
P_{122}:~
\begin{cases}
M_{5,1}\Rightarrow |\varphi_{\pm}\rangle_{5},\\
M_{5,2}\Rightarrow |\varphi_{\pm}\rangle_{1},
\end{cases}\\
P_{211}:~
&\begin{cases}
M_{5,1}\Rightarrow |\varphi_{\pm}\rangle_{14},\\
M_{5,2}\Rightarrow |\varphi_{\pm}\rangle_{10},
\end{cases}\qquad
P_{212}:~
\begin{cases}
M_{5,1}\Rightarrow |\varphi_{\pm}\rangle_{8},\\
M_{5,2}\Rightarrow |\varphi_{\pm}\rangle_{12},
\end{cases}\\
P_{221}:~
&\begin{cases}
M_{5,1}\Rightarrow |\varphi_{\pm}\rangle_{13},\\
M_{5,2}\Rightarrow |\varphi_{\pm}\rangle_{9},
\end{cases}\qquad
P_{222}:~
\begin{cases}
M_{5,1}\Rightarrow |\varphi_{\pm}\rangle_{11},\\
M_{5,2}\Rightarrow |\varphi_{\pm}\rangle_{15}.
\end{cases}
\end{aligned}
\]

Obviously, each remaining subset consists of only the pair $\{|\varphi_{+}\rangle_i,|\varphi_{-}\rangle_i\}$, which is locally distinguishable. If outcome $M_{1,2}$ occurs in Step~1, a similar method achieves local discrimination. Therefore, the GHZ basis~(\ref{set3.2}) in the five-qubit system can be perfectly distinguished.

In this scenario, the original states have been transformed due to the three CNOT gates. To recover the original states, it suffices to apply the same three CNOT gates again. As shown in Fig.~\ref{f3.2}, let Alice be the target qubit while Charlie, Dave, and Eve serve successively as the control qubits.
\end{proof}

In this section, we have proposed a new method incorporating CNOT gates to achieve the local discrimination of multipartite orthogonal entangled sets exhibiting nonlocality, such as the sets~(\ref{set3.1}) and~(\ref{set3.2}). For both the four- and five-qubit GHZ bases, the discrimination requires only $1$ ebit of total entanglement, which is significantly less than the $2$--$3$ ebits required in existing protocols, thereby improving resource utilization. Note that the control qubit may vary from one CNOT application to another; the original state can be recovered by reapplying the corresponding CNOT gates.

\section{Discrimination of Strongly Nonlocal OPSs in Four-Partite Systems}\label{Q4}

In the subsequent investigation, we focus on the orthogonal product sets (OPSs) constructed in Ref.~\cite{Ref36}. That reference presents two structures of OPSs with strong quantum nonlocality in four-partite systems: asymmetric and symmetric. In this section, using different types of entangled states, we propose several local discrimination protocols for the asymmetric and symmetric OPSs, respectively. Hereafter, unless otherwise specified, we use ``$A$'' to denote the first subsystem, ``$B$'' the second subsystem, ``$C$'' the third subsystem, and so on.
\subsection{Asymmetric Case}
In the system $\mathcal{C}^{d_1} \otimes \mathcal{C}^{d_2} \otimes \mathcal{C}^{d_3} \otimes \mathcal{C}^{d_4}$ ($d_k\geq 3$ for $k=1,2,3,4$), the strongly nonlocal OPS with asymmetric structure \cite{Ref36} is
\begin{equation}\label{set4.1}
\begin{aligned}
&H_1 = \{|0\rangle_1|0\rangle_2|\beta_j\rangle_3|\alpha_i\rangle_4\}_{i,j}, \\
&H_2 = \{|0\rangle_1|0\rangle_2|d_3'\rangle_3|\beta_j\rangle_4\}_j, \\
&H_3 = \{|\beta_j\rangle_1|\alpha_i\rangle_2|d_3'\rangle_3|d_4'\rangle_4\}_{i,j}, \\
&H_4 = \{|d_1'\rangle_1|\beta_j\rangle_2|d_3'\rangle_3|d_4'\rangle_4\}_j, \\
&H_5 = \{|d_1'\rangle_1|d_2'\rangle_2|\gamma_m\rangle_3|\alpha_i\rangle_4\}_{i,m}, \\
&H_6 = \{|d_1'\rangle_1|d_2'\rangle_2|0\rangle_3|\gamma_m\rangle_4\}_m, \\
&H_7 = \{|\gamma_m\rangle_1|\alpha_i\rangle_2|0\rangle_3|0\rangle_4\}_{i,m}, \\
&H_8 = \{|0\rangle_1|\gamma_m\rangle_2|0\rangle_3|0\rangle_4\}_m, \\
&H_9 = \{|0\rangle_1|\kappa_{I}\rangle_2|0\rangle_3|\gamma_m\rangle_4\}_{I,m}, \\
&H_{10} = \{|d_1'\rangle_1|\kappa_{I}\rangle_2|d_3'\rangle_3|\beta_j\rangle_4\}_{j,I}, \\
&H_{11} = \{|\kappa_{I}\rangle_1|d_2'\rangle_2|\beta_j\rangle_3|d_4'\rangle_4\}_{j,I}, \\
&H_{12} = \{|\kappa_{I}\rangle_1|0\rangle_2|\gamma_m\rangle_3|0\rangle_4\}_{I,m}, \\
&H_{13} = \{|0\rangle_1|\gamma_m\rangle_2|\kappa_{I}\rangle_3|d_4'\rangle_4\}_{I,m}, \\
&H_{14} = \{|d_1'\rangle_1|\beta_j\rangle_2|\kappa_{I}\rangle_3|0\rangle_4\}_{j,I}, \\
&H_{15} = \{|\kappa_{I}\rangle_1|d_2'\rangle_2|d_3'\rangle_3|\beta_j\rangle_4\}_{j,I}, \\
&H_{16} = \{|\kappa_{I}\rangle_1|0\rangle_2|0\rangle_3|\gamma_m\rangle_4\}_{I,m}, \\
&H_{17} = \{|0\rangle_1|d_2'\rangle_2|0 \pm d_3'\rangle_3|\kappa_{I}\rangle_4\}_I, \\
&H_{18} = \{|d_1'\rangle_1|0\rangle_2|0 \pm d_3'\rangle_3|\kappa_{I}\rangle_4\}_I.
\end{aligned}
\end{equation}
Here
\[
|\alpha_i\rangle_k=\sum_{u=0}^{d_k-1}\omega_{d_k}^{iu}|u\rangle,\qquad
|\beta_j\rangle_k=\sum_{u=0}^{d_k-2}\omega_{d_k-1}^{ju}|u\rangle,
\]
\[
|\gamma_m\rangle_k=\sum_{u=0}^{d_k-2}\omega_{d_k-1}^{mu}|u+1\rangle,\qquad
|\kappa_I\rangle_k=\sum_{u=0}^{d_k-3}\omega_{d_k-2}^{Iu}|u+1\rangle,
\]
and \(d_k'=d_k-1\), where \(i\in\mathbb{Z}_{d_k}\), \(j,m\in\mathbb{Z}_{d_k-1}\), \(I\in\mathbb{Z}_{d_k-2}\), and \(k\in\{1,2,3,4\}\).

\begin{theorem}\label{thm:asym-ops-1}
The strongly nonlocal set $\bigcup_{i=1}^{18}H_{i}$ in Eq.~(\ref{set4.1}) can be locally distinguished using the entanglement resource configuration
$\{(1,|\phi^{+}(2)\rangle_{AC});(1,|\phi^{+}(d_{4})\rangle_{CD})\}$.
\end{theorem}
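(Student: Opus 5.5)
Our plan is the standard ``teleport-then-split'' strategy of Ref.~\cite{Ref27}, adapted to the resources $(1,|\phi^{+}(d_4)\rangle_{CD})$ and $(1,|\phi^{+}(2)\rangle_{AC})$. First, Dave teleports his subsystem $D$ to Charlie using the $d_4\otimes d_4$ maximally entangled state $|\phi^{+}(d_4)\rangle_{CD}$. Afterwards Charlie holds the joint system $CD$, and the problem reduces to distinguishing the set $\bigcup_{i=1}^{18}H_i$ viewed as a tripartite OPS in $\mathcal{C}^{d_1}\otimes\mathcal{C}^{d_2}\otimes\mathcal{C}^{d_3d_4}$ among Alice, Bob and Charlie. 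The remaining EPR pair $|\phi^{+}(2)\rangle_{ac}$ between Alice and Charlie is then used, as in Step~1 of the proofs of Theorems~\ref{thm:4qubit-ghz} and~\ref{thm:5qubit-ghz}, to ``copy'' a single bit of classical information about Alice's subsystem into Charlie's laboratory.

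\textbf{Step 1.} Alice performs the measurement $\mathcal{M}_1=\{M_{1,1}=P[|0\rangle_A;|0\rangle_a]+P[(|1\rangle,\dots,|d_1'\rangle)_A;|1\rangle_a],\ M_{1,2}=I-M_{1,1}\}$. This measurement is orthogonality preserving because it is diagonal in the computational basis of $A$ tensored with the ancilla. Each $|\kappa_I\rangle_1$ or $|\gamma_m\rangle_1$ occurs only in states whose $A$-support is contained in $\{1,\dots,d_1'\}$ or straddles $0$ in a controlled way, so the ancilla $c$ now records whether $A$ is $|0\rangle$. The states with $|\beta_j\rangle_1$ (set $H_3$) and $|\gamma_m\rangle_1$ (set $H_7$) become entangled between $A$ and $c$ but remain mutually orthogonal. \textbf{Step 2.} Charlie, holding $C$, $D$ and $c$, performs a sequence of computational-basis-block projections on $CD c$. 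Examples are $P[|d_3'\rangle_C|d_4'\rangle_D]$, $P[|0\rangle_C|0\rangle_D]$, and $P[|0\rangle_C|\{1..d_4'\}\rangle_D;|0\rangle_c]$. Each projection is chosen to isolate one or two blocks $H_i$ according to the joint supports of $C$, $D$ and the flag $c$. For instance, $H_1$ and $H_{16}$ share $D$-supports like $\{0,\dots\}$, but they are separated by $c$, since $A=0$ in $H_1$ while $A\in\kappa$ in $H_{16}$. \textbf{Step 3.} Once a branch contains a single $H_i$ or a small union of blocks with a product structure, the remaining states are distinguished. The parties do this by Fourier-basis measurements on the delocalized factors ($\alpha,\beta,\gamma,\kappa$), performed in an order where each measured party's factor is already orthogonal given the others. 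This is routine, as in Ref.~\cite{Ref36}.

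The main obstacle is Step~2, where a full case tree must be built over the 18 blocks. At each node we must check that the chosen projector on $CDc$ is orthogonality preserving, i.e.\ it never cuts a superposition $|\beta_j\rangle$, $|\gamma_m\rangle$, $|\kappa_I\rangle$ or $|0\pm d_3'\rangle$ on $C$ or $D$. Where a cut is unavoidable, it must be separated beforehand via the flag $c$. The delicate pairs are the ``twisted'' ones where $A$'s support overlaps with $0$ across blocks, for example $H_3$ versus $H_{11}$ and $H_7$ versus $H_{12}$. Our first attempt is to let Charlie split along $|d_3'\rangle_C|d_4'\rangle_D$ versus its complement and then use the flag. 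If that fails, we will refine $\mathcal{M}_1$ so that the flag records $\{0\}$ versus $\{d_1'\}$ versus the middle range, with the middle range mapped jointly with $|1\rangle_a$ as above. Finally, we would verify that every leaf is distinguishable by LOCC, and that the discarded outcome $M_{1,2}$ is symmetric to $M_{1,1}$, so the same analysis applies.
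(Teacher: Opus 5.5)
Your setup is exactly the paper's: you teleport $D$ to Charlie with $|\phi^{+}(d_4)\rangle$, share $|\phi^{+}(2)\rangle_{a\tilde c}$, and apply the same $\mathcal{M}_1$. However, the proposal stops where the proof actually starts. The case tree over the 18 blocks, which is the whole content of the theorem, is never produced, and the mechanism you outline for it would not work.

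After $\mathcal{M}_1$ the supports on $CD\tilde c$ are nested, so Charlie's computational-basis block projections plus the one-bit flag cannot separate the blocks. For example, $H_5$ occupies $\{1,\dots,d_3'\}\times\{0,\dots,d_4'\}\times\{1\}$, which contains the supports of $H_4,H_{10},H_{12},H_{14},H_{15}$. Similarly, $H_1$ occupies $\{0,\dots,d_3-2\}\times\{0,\dots,d_4'\}\times\{0\}$, which contains those of $H_8,H_9,H_{13}$. Without cutting (and thereby destroying orthogonality inside) some block, Charlie gets no further than three groups: $\{H_7\}$, the flag-$0$ cluster $\{H_1,H_2,H_8,H_9,H_{13},H_{17}\}$, and the remaining eleven blocks.

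Your concrete projections fail as stated:
\begin{itemize}
\item The proposed first move, $P[|d_3'd_4'\rangle]$ versus its complement, is not orthogonality preserving. $|d_3'd_4'\rangle$ lies in the support of every $|\gamma_m\rangle_3|\alpha_i\rangle_4\in H_5$, and all of $H_5$ carries flag $1$, so the flag cannot rescue it.
\item $P[|0\rangle_C|0\rangle_D]$ without a flag restriction cuts $H_1$.
\item $P[|0\rangle_C|\{1,\dots,d_4'\}\rangle_D;|0\rangle_c]$ cuts $H_1$ and also the $|0\pm d_3'\rangle$ of $H_{17}$.
\end{itemize}
Your fallback, a three-valued flag $\{0\}/\text{middle}/\{d_1'\}$, cannot be realized with the qubit ancilla of $|\phi^{+}(2)\rangle_{AC}$. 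It would need $|\phi^{+}(3)\rangle_{AC}$ (which the paper uses for the symmetric set in Theorem~\ref{thm:sym-ops-1}), i.e.\ a different resource configuration. Finally, the Fourier measurements of your Step~3 only apply once a single $H_i$ (or a pair split by one further computational-basis projection) remains. They do nothing for the eleven-block remainder or the six-block flag-$0$ cluster.

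The missing ingredient is to interleave Alice's and Bob's computational-basis splits with Charlie's, in an order dictated by which block each projector would cut. The real obstructions are:
\begin{itemize}
\item $H_7$: its $|\gamma_m\rangle_1$ is cut by any projector isolating $|d_1'\rangle_A$. Note that after $\mathcal{M}_1$ it is \emph{not} entangled with the ancillas; only $H_3$ is.
\item $H_5$: cut by any projector isolating $|d_3'd_4'\rangle$.
\item $H_3$: its two flag branches must stay together, or its $j$-label is erased.
\item $H_{17},H_{18}$: cut by any split of $C$ between $0$ and $d_3'$.
\end{itemize}
Your ``delicate'' pairs $H_3/H_{11}$ and $H_7/H_{12}$ already have disjoint $CD$-supports.

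The paper's protocol proceeds accordingly:
\begin{itemize}
\item Charlie first removes $H_7$ with $P[|00\rangle_{\widetilde C};|1\rangle_{\tilde c}]$.
\item Alice then applies $P[|d_1'\rangle_A;|1\rangle_a]$, splitting off $\{H_4,H_5,H_6,H_{10},H_{14},H_{18}\}$.
\item In that branch, Bob's $|d_2'\rangle_B$ removes $\{H_5,H_6\}$. Only then does Charlie use $|d_3'd_4'\rangle$ and $|(1,\dots,d_3'-1)\circ 0\rangle$ with flag $1$.
\item In the other branch, Charlie projects onto $|d_3'd_4'\rangle_{\widetilde C}\otimes I_{\tilde c}$, giving $H_3$. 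He also projects onto the flag-$1$ part of $CD$ minus $\{|00\rangle,|d_3'd_4'\rangle\}$, giving $\{H_{11},H_{12},H_{15},H_{16}\}$, which Bob's $|0\rangle_B$ then splits.
\item The flag-$0$ cluster is split by Bob's $|0\rangle_B$, followed by Charlie's $P[|00\rangle;|0\rangle]$ and $P[|(1,\dots,d_3'-1)\circ d_4'\rangle;|0\rangle]$.
\end{itemize}
Without such an explicitly verified sequence, the sufficiency of the stated resources is not established.
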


\begin{theorem}\label{thm:asym-ops-2}
The set $\bigcup_{i=1}^{18}H_{i}$ in Eq.~(\ref{set4.1}) can be locally distinguished using the resource configuration
$\{(1,|\phi^{+}(2)\rangle_{AD});(1,|\phi^{+}(3)\rangle_{CD})\}$.
\end{theorem}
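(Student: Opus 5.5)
Our plan follows the standard entanglement-assisted strategy of Ref.~\cite{Ref27}. First we use the two resource states to ``merge'' parts of the information, and then we finish with a sequence of orthogonality-preserving local projective measurements on the resulting bipartite-like structure. Concretely, Dave shares $|\phi^{+}(3)\rangle_{cd}=|00\rangle+|11\rangle+|22\rangle$ with Charlie, and Alice shares $|\phi^{+}(2)\rangle_{a'd'}$ with Dave.

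\textbf{Step 1: the three-dimensional resource.} Dave's basis states in (\ref{set4.1}) fall into three ``classes'': $|0\rangle_4$, $|d_4'\rangle_4$, and the middle block $\{1,\dots,d_4-2\}$ (together with the superpositions $\alpha,\beta,\gamma$ that overlap these). We therefore let Dave perform
\[
M=P[|0\rangle_D;|0\rangle_d]+P[(|1\rangle,\dots,|d_4-2\rangle)_D;|1\rangle_d]+P[|d_4'\rangle_D;|2\rangle_d],
\]
together with its cyclic shifts, which sum to $I$. This correlates Charlie's ancilla $c$ with the coarse class of Dave's subsystem.

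This measurement is orthogonality preserving, because each term acts on the computational basis blockwise and the shift only relabels the ancilla. After it, Charlie holds a $3$-level record of Dave's class. Charlie can then jointly measure $C$ and $c$, which separates the groups in which $D$ is $|0\rangle$, $|d_4'\rangle$, or spread over $\alpha/\beta/\gamma/\kappa$.

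\textbf{Step 2: the qubit resource.} Next Dave (or Alice) uses $|\phi^{+}(2)\rangle_{AD}$ to copy a single binary coarse-graining of Alice's subsystem, namely $\{|0\rangle,\dots,|d_1-2\rangle\}$ versus $|d_1'\rangle$, into Dave's ancilla. He does this via $P[(|0\rangle,\dots,|d_1-2\rangle)_A;|0\rangle_{a'}]+P[|d_1'\rangle_A;|1\rangle_{a'}]$ and its complement. This binary split is exactly what distinguishes, for instance, $H_{17}$ from $H_{18}$ and $H_4,H_5,H_6,H_{10},H_{14}$ from the remaining groups.

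\textbf{Step 3: the case tree.} With these two records in place, we proceed through the tree of outcomes. In each branch we list the surviving $H_k$ and exhibit a party whose computational-basis projector (e.g.\ onto $|0\rangle$, $|d_k'\rangle$, or the middle block) splits the branch while keeping all states orthogonal. We continue until each branch contains a single $H_k$, which is a set of product states in Fourier-type bases and is therefore locally distinguishable by measuring in the $\alpha,\beta,\gamma,\kappa$ bases.

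\textbf{Main obstacle.} We expect the main difficulty to be the branches in which the $\alpha_i$ on subsystem 4 (in $H_1,H_5$) straddle all three of Dave's classes. Here the three-dimensional resource must not destroy the Fourier superposition. We will check that, after Dave's measurement, the state restricted to each outcome remains of the form $|\psi\rangle\otimes(\text{shared superposition on }Dd)$, so that orthogonality within $\{\alpha_i\}$ is recovered by a suitable measurement on $Dd$ together with $Cc$, as in Ref.~\cite{Ref27}. If this does not work directly, we would instead let Charlie first peel off $H_1$–$H_2$ by projecting onto $\{|0\rangle_3,\dots,|d_3-2\rangle_3\}$ versus $|d_3'\rangle_3$ before using the ancilla.
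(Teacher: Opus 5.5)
\textbf{There is a genuine gap: you route the coarse-grained information in the wrong direction, and the protocol stalls before your case tree can begin.}

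After your two copy measurements, Charlie holds $C$ plus a record of the three-valued class of $D$. Dave holds $D$ plus a record of the binary class of $A$; his own ancilla $d$ only duplicates what he already knows. Alice and Bob hold nothing beyond their own subsystems. The partner ancillas stay orthogonal for different classes. So any operator Charlie later applies to $C\otimes c$ acts on the original states as an operator on $CD$ of the form $\sum_k E_k\otimes\Pi^D_k$, where $\Pi^D_k$ is the projector onto your $k$-th class of $D$. Likewise, anything Dave applies acts as an operator on $AD$.

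But the set~(\ref{set4.1}) is strongly nonlocal in the sense of Definition~4: it is locally irreducible in every tripartition, in particular in $\{A\}\{B\}\{CD\}$ and $\{AD\}\{B\}\{C\}$. After your Steps 1--2 no party can do more than a merged party $CD$ or $AD$ already could, so no useful orthogonality-preserving measurement is available.

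You can see this on $H_7=|\gamma_m\rangle_1|\alpha_i\rangle_2|0\rangle_3|0\rangle_4$. In Charlie's view ($C=0$, $D$-class $\{0\}$) it coincides with components of $H_1$ and $H_8$. In Dave's view ($D=0$, $A$ in the class $\{0,\dots,d_1-2\}$) it coincides with components of $H_1,H_8,H_{12}$. Since $H_1$ spreads over all these classes through $|\beta_j\rangle_3|\alpha_i\rangle_4$, neither party can split $H_7$ off without destroying orthogonality inside $H_1$.

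Your fallback fails as well. Projecting $C$ onto $\{|0\rangle,\dots,|d_3-2\rangle\}$ versus $|d_3'\rangle$ is not orthogonality preserving, because $|\gamma_m\rangle_3$ in $H_5$ and $|0\pm d_3'\rangle_3$ in $H_{17},H_{18}$ straddle that split. The binary split you transfer from $A$ is also the wrong one. The split $\{0,\dots,d_1-2\}$ versus $\{d_1'\}$ is straddled by $|\gamma_m\rangle_1$, and once $H_7$ is gone Alice can make it locally with no entanglement. What separates $H_7$ from $H_1$ and $H_8$ is $A=0$ versus $A\neq 0$.

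\textbf{What is missing is a party that sees three subsystems at once.} The paper uses the same two resources in the opposite direction:
\begin{itemize}
\item Alice copies the split $\{|0\rangle\}$ versus $\{|1\rangle,\dots,|d_1'\rangle\}$ into Dave's qubit ancilla $v_1$.
\item Charlie copies the split $\{|0\rangle\}$, $\{|1\rangle,\dots,|d_3'-1\rangle\}$, $\{|d_3'\rangle\}$, with its cyclic shifts, into Dave's qutrit ancilla $v_2$.
\end{itemize}
Dave then effectively holds the $A$-class, the $C$-class and $D$, a view not covered by irreducibility in tripartitions. His first filter $P[|0\rangle_D;|1\rangle_{v_1};|0\rangle_{v_2}]$ isolates exactly $H_7$: requiring $A\neq 0$ excludes $H_1,H_8$, and requiring $C=0$ excludes $H_5,H_{10},H_{12},H_{14}$.

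After that, Alice can project onto $|d_1'\rangle$ locally. The remaining branches are closed by further joint filters of Dave, together with plain computational-basis projections by Alice and Bob. Examples are $P[|d_4'\rangle_D;|1\rangle_{v_1};|2\rangle_{v_2}]$ for $H_4$, $P[|0\rangle_D;|1\rangle_{v_1};|1\rangle_{v_2}]$ for $H_{14}$, and $P[|d_4'\rangle_D;(|0\rangle,|1\rangle)_{v_1};|2\rangle_{v_2}]$ for $H_3$.

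Finally, your Step 3 is only a promise of a case tree. Even with the correct routing, the content of the proof is precisely that explicit tree, with each projector checked against every $H_k$ whose support it might cut.
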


\begin{theorem}\label{thm:asym-ops-3}
The entanglement resource configuration
$\{(1,|\phi^{+}(2)\rangle_{AD});(1,|\phi^{+}(2)\rangle_{CD});(\frac{r}{s},|\phi^{+}(2)\rangle_{BD})\}$
is sufficient for local discrimination of the strongly nonlocal set~(\ref{set4.1}), where
$s=2\big(\sum_{1\leq i<j\leq 4}d_id_j-2\sum_{i=1}^{4}d_i-d_2-d_3+2\big)$
and
$r=\sum_{1\leq i<j\leq 4}d_id_j-4\sum_{i=1}^{4}d_i+d_1(d_3+d_4)+d_2-d_3+6$.
\end{theorem}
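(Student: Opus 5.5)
The protocol will be an explicit LOCC procedure in the style of Theorem~\ref{thm:4qubit-ghz}. Each shared EPR pair will first be ``activated'' by a measurement of the form $P[(\cdots)_X;|0\rangle_x]+P[(\cdots)_X;|1\rangle_x]$. This correlates a two-valued local label of one party with the ancilla of a partner, so the partner can later make a measurement that uses the other party's label. That is the usual way an EPR pair replaces a higher-dimensional maximally entangled state.

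\textbf{Step 1.} Use $|\phi^{+}(2)\rangle_{AD}$ as in Theorem~\ref{thm:asym-ops-2}. Alice projects $A$ onto $\{|0\rangle\}$ versus $\mathrm{span}\{|1\rangle,\dots,|d_1'\rangle\}$, correlated with $|0\rangle_a$ or $|1\rangle_a$. This transfers the bit ``$A=0$ or not'' to Dave's ancilla $d$.

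\textbf{Step 2.} Use $|\phi^{+}(2)\rangle_{CD}$. Charlie correlates the bit ``$C=d_3'$ or not'' with the ancilla $c$ in the same way.

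\textbf{Step 3.} Dave now holds two binary labels together with his own qubit-dimensional information. He performs orthogonality-preserving projections on $D\otimes d\otimes d'$ that split off the blocks $H_1,H_2,H_9,H_{13},H_{17}$ (those with $A=0$) from those with $A=d_1'$. He further splits them by whether $C=d_3'$. This mirrors the splitting in Theorem~\ref{thm:asym-ops-2}. There the $|\phi^{+}(3)\rangle_{CD}$ carried the three-valued label $C\in\{0\},\{d_3'\},\text{rest}$. Here one of those three classes must instead be resolved by another party's local measurement. After these steps, most of the resulting subsets are local product sets whose remaining ambiguity lies in a single party's Fourier-type basis ($\alpha,\beta,\gamma,\kappa$). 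These are finished by successive local projections, exactly as in the standard LOCC elimination of a non-strongly-nonlocal OPS.

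\textbf{Step 4.} The leftover ambiguous branches arise where $B$ also needs to be correlated with $D$, for instance to separate $H_{10}$ from $H_{14}$ or $H_3$ from $H_4$. In those branches a third EPR pair $|\phi^{+}(2)\rangle_{BD}$ is consumed, to pass the bit ``$B=d_2'$ or $0$ versus middle'' to Dave. In every other branch no BD pair is needed. The BD pair is therefore used only on a fraction of runs, and this gives the average $r/s$.

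\textbf{Computing $r/s$.} To justify the number, I will count states, assuming the states are equiprobable. The total cardinality $|\bigcup H_i|$ should simplify to $s$ (or $s/2$ with the factor two coming from normalization of the counting). The ``bad'' blocks needing BD entanglement should sum to $r$. Concretely, I will sum $|H_i|$ using $|H_1|=d_3'd_4$, $|H_9|=(d_2-2)(d_4-1)$, and so on, and check that the sum matches $\sum d_id_j-2\sum d_i-d_2-d_3+2$. I will then identify the blocks in Step 4 and verify that their sizes add up to $r$.

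\textbf{Main obstacle.} I expect two difficulties. The first is choosing Dave's projectors in Step 3 so that they are orthogonality-preserving for all 18 families simultaneously; the $\gamma$, $\kappa$ and $0\pm d_3'$ superpositions straddle the projector boundaries. The second is pinning down exactly which families require the BD pair, so that the counts reproduce the stated $r$. I would first try treating $H_3,H_4,H_7,H_8,H_{10},H_{14}$ as the families needing $B$-information, compute their sizes, and then adjust the set until the count matches $r$.
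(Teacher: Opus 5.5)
\textbf{Gap 1: the protocol and the value of $r/s$ are not established.} Your outline is not yet a proof. No projector is written down, and orthogonality preservation is never checked; you flag this yourself as the main obstacle. You also propose to get $r/s$ by ``adjusting the set until the count matches $r$''. That is circular: the fraction of runs that consume the $BD$ pair has to be read off from a fixed protocol.

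\textbf{Gap 2: your guess of where $B$-information is needed is wrong.} Because of this, the fitting cannot succeed. In the paper's protocol, Charlie's bit is ``$C=0$ or not'', not ``$C=d_3'$ or not''. Together with Alice's bit ``$A=0$ or not'', this lets Dave isolate $H_7$ immediately with $P[|0\rangle_D;|1\rangle_{v_1};|0\rangle_{v_2}]$. This step must come first, because $|\gamma_m\rangle_1$ in $H_7$ contains $|d_1'\rangle$. With your label for $C$ the step is unavailable, since $H_5,H_7,H_{12},H_{14}$ all have components with $A\neq 0$, $C\neq d_3'$, $D=0$. Only after $H_7$ is removed can Alice herself (not Dave) project onto $|d_1'\rangle$. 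The branch $\{H_4,H_5,H_6,H_{10},H_{14},H_{18}\}$ is then finished without any $BD$ pair:
\begin{itemize}
\item Bob's projection onto $|d_2'\rangle$ gives $\{H_5,H_6\}$;
\item Dave's $P[|d_4'\rangle_D;|1\rangle_{v_1};|1\rangle_{v_2}]$ gives $H_4$;
\item Charlie's projector onto $\mathrm{span}\{|1\rangle,\dots,|d_3'-1\rangle\}$ separates $H_{14}$ from $\{H_{10},H_{18}\}$.
\end{itemize}
So $H_{10}$ versus $H_{14}$, and $H_3$ versus $H_4$, need no $B$-information at all.

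\textbf{The missing idea.} The $BD$ pair is distributed only in the complementary branch $A\neq d_1'$, which contains $H_1,H_2,H_3,H_8,H_9,H_{11},H_{12},H_{13},H_{15},H_{16},H_{17}$. In this branch:
\begin{itemize}
\item Dave cannot split off $H_3$ using only the two binary labels.
\item Bob cannot measure $B$ directly, because of the $|\alpha_i\rangle_2$ superposition in $H_3$.
\item So Bob correlates ``$B=0$ or not'' with his ancilla. One projective measurement by Dave on $D$ and his three ancillas, followed by Charlie's $|d_3'\rangle$ projection, then finishes the discrimination.
\end{itemize}

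\textbf{The count.} With equiprobable states, $s$ is exactly $|\bigcup_i H_i|$. The factor $2$ is not a normalization artifact; it reflects that the families pair up with equal sizes: $H_1/H_5$, $H_2/H_6$, $H_3/H_7$, $H_4/H_8$, $H_9/H_{10}$, $H_{11}/H_{12}$, $H_{13}/H_{14}$, $H_{15}/H_{16}$, $H_{17}/H_{18}$. The numerator $r$ is exactly $|H_1|+|H_2|+|H_3|+|H_8|+|H_9|+|H_{11}|+|H_{12}|+|H_{13}|+|H_{15}|+|H_{16}|+|H_{17}|$. The extra term $d_1(d_3+d_4)$ in $r$ arises because both members of the pairs $H_{11}/H_{12}$ and $H_{15}/H_{16}$ lie in this branch. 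None of these four families is in your candidate list $H_3,H_4,H_7,H_8,H_{10},H_{14}$, so no adjustment of that list reproduces $r$.
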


The proofs of Theorems~\ref{thm:asym-ops-1}, \ref{thm:asym-ops-2}, and~\ref{thm:asym-ops-3} are given in Appendices~\ref{T3}, \ref{T4}, and~\ref{T5}, respectively. In the protocol of Theorem~\ref{thm:asym-ops-1}, the total entanglement consumption is \(1+\log_2 d_4\) ebits. In the protocols of Theorems~\ref{thm:asym-ops-2} and~\ref{thm:asym-ops-3}, the total consumption is \(1+\log_2 3\) ebits and \(2+\frac{r}{s}\) ebits, respectively.

For the asymmetric OPS~(\ref{set4.1}), we have proposed three local discrimination protocols. The protocol in Theorem~\ref{thm:asym-ops-1} relies on quantum teleportation to transfer subsystem \(D\) to party \(C\), after which entanglement is established between parties \(A\) and \(C\) to complete the discrimination. Consequently, most of the entanglement is consumed by the teleportation step, and the total consumption increases with the dimension \(d_4\). To reduce resource consumption, the protocols in Theorems~\ref{thm:asym-ops-2} and~\ref{thm:asym-ops-3} avoid teleportation and instead attempt to complete the discrimination by establishing entanglement only between suitable pairs of parties. When \(d_4>3\), both Theorem~\ref{thm:asym-ops-2} and Theorem~\ref{thm:asym-ops-3} consume strictly less entanglement than Theorem~\ref{thm:asym-ops-1}, since \(\frac{r}{s}<1\).

Comparing Theorems~\ref{thm:asym-ops-2} and~\ref{thm:asym-ops-3}, we find that the difference in entanglement consumption is relatively small. The entanglement required in Theorem~\ref{thm:asym-ops-3} depends explicitly on the dimensions \(d_1,d_2,d_3,d_4\). As shown in Fig.~\ref{f4.1}, for \(d_1=d_2=4\), the total entanglement consumed in Theorem~\ref{thm:asym-ops-3} is larger than that of Theorem~\ref{thm:asym-ops-2} when \(d_3\) and \(d_4\) are relatively small, whereas it becomes smaller when these dimensions are sufficiently large.

\begin{figure}[h]
\centering
\includegraphics[width=0.44\textwidth]{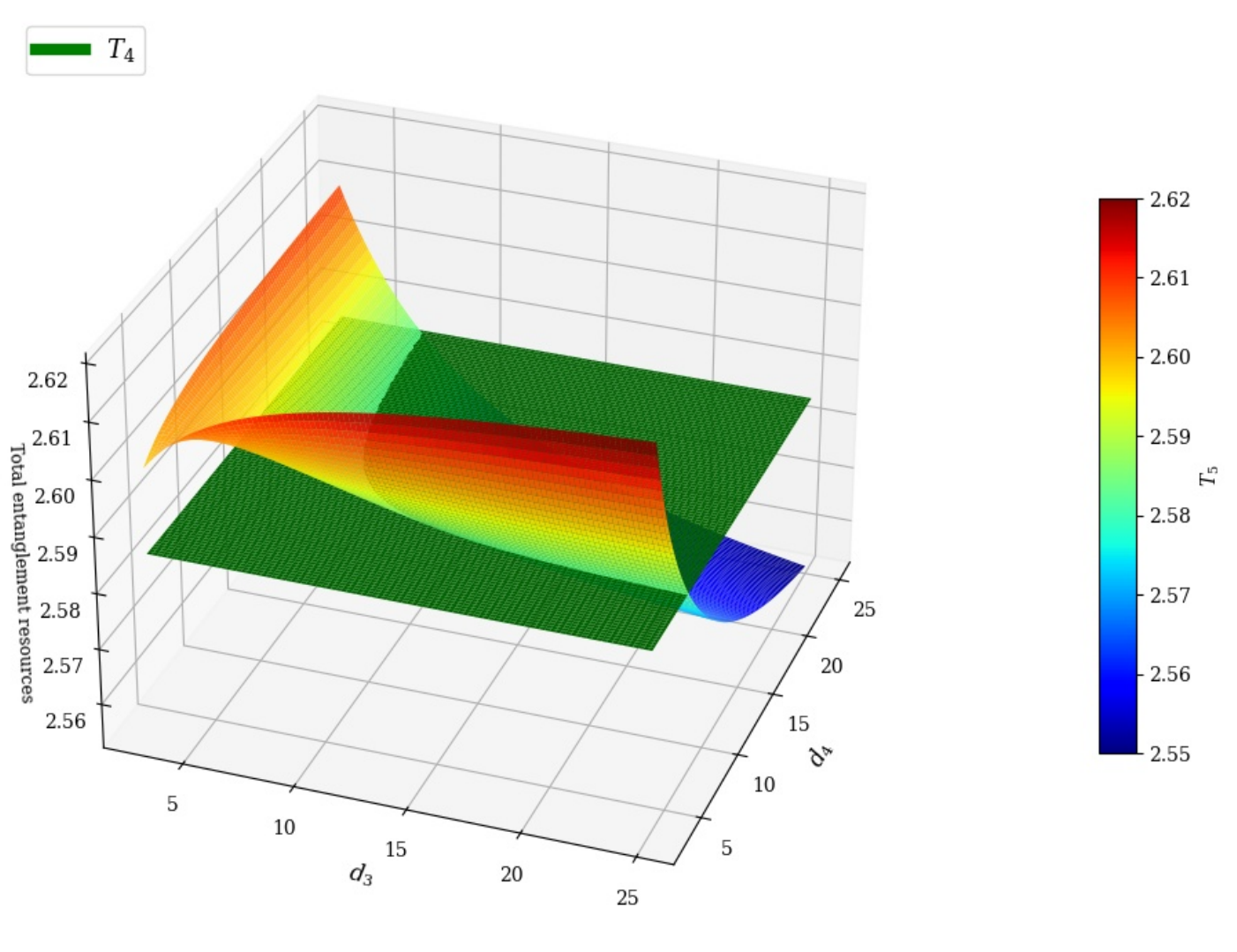}
\caption{Comparison of total entanglement consumed by Theorem~\ref{thm:asym-ops-2} and Theorem~\ref{thm:asym-ops-3} for $d_1=4$ and $d_2=4$. The green plane $T_4$ and the gradient surface $T_5$ correspond to Theorem~\ref{thm:asym-ops-2} and Theorem~\ref{thm:asym-ops-3}, respectively.\label{f4.1}}
\end{figure}

\subsection{Symmetric Case}

In this subsection, we consider a class of strongly nonlocal OPSs with symmetric structure. In \(\mathbb{C}^{d_1}\otimes\mathbb{C}^{d_2}\otimes\mathbb{C}^{d_3}\otimes\mathbb{C}^{d_4}\) (\(d_k\ge 3\) for \(k=1,2,3,4\)), the OPS constructed in Ref.~\cite{Ref36} is given by
\begin{equation}\label{set4.2}
\begin{aligned}
&H_{1,1} = \{|\kappa_{I}\rangle_1 |\kappa_{I}\rangle_2 |0\rangle_3 |\gamma_m\rangle_4\}_{m,\,I|_1,\,I|_2}, \\
&H_{1,2} = \{|\kappa_{I}\rangle_1 |0\rangle_2 |\gamma_m\rangle_3 |\kappa_{I}\rangle_4\}_{m,\,I|_1,\,I|_4}, \\
&H_{1,3} = \{|0\rangle_1|\gamma_m\rangle_2 |\kappa_{I}\rangle_3 |\kappa_{I}\rangle_4\}_{m,\,I|_3,\,I|_4}, \\
&H_{1,4} = \{|\gamma_m\rangle_1 |\kappa_{I}\rangle_2 |\kappa_{I}\rangle_3 |0\rangle_4\}_{m,\,I|_2,\,I|_3}, \\
&H_{2,1} = \{|d_1'\rangle_1 |0\rangle_2 |0\rangle_3 |\beta_j\rangle_4\}_j, \\
&H_{2,2} = \{|0\rangle_1 |0\rangle_2 |\beta_j\rangle_3 |d_4'\rangle_4\}_j, \\
&H_{2,3} = \{|0\rangle_1 |\beta_j\rangle_2 |d_3'\rangle_3 |0\rangle_4\}_j, \\
&H_{2,4} = \{|\beta_j\rangle_1 |d_2'\rangle_2 |0\rangle_3 |0\rangle_4\}_j, \\
&H_{3,1} = \{|0\rangle_1 |0\rangle_2 |d_3'\rangle_3 |\gamma_m\rangle_4\}_m, \\
&H_{3,2} = \{|0\rangle_1 |d_2'\rangle_2 |\gamma_m\rangle_3 |0\rangle_4\}_m, \\
&H_{3,3} = \{|d_1'\rangle_1 |\gamma_m\rangle_2 |0\rangle_3 |0\rangle_4\}_m, \\
&H_{3,4} = \{|\gamma_m\rangle_1 |0\rangle_2 |0\rangle_3 |d_4'\rangle_4\}_m, \\
&H_{4,1} = \{|\kappa_{I}\rangle_1 |d_2'\rangle_2 |d_3'\rangle_3 |0\pm d_4'\rangle_4\}_I, \\
&H_{4,2} = \{|d_1'\rangle_1 |d_2'\rangle_2 |0\pm d_3'\rangle_3 |\kappa_{I}\rangle_4\}_I, \\
&H_{4,3} = \{|d_1'\rangle_1 |0\pm d_2'\rangle_2 |\kappa_{I}\rangle_3 |d_4'\rangle_4\}_I, \\
&H_{4,4} = \{|0\pm d_1'\rangle_1 |\kappa_{I}\rangle_2 |d_3'\rangle_3 |d_4'\rangle_4\}_I, \\
&H_{5,1} = \{|d_1'\rangle_1 |d_2'\rangle_2 |\kappa_{I}\rangle_3 |\beta_j\rangle_4\}_{j,I}, \\
&H_{5,2} = \{|d_1'\rangle_1 |\kappa_{I}\rangle_2 |\beta_j\rangle_3 |d_4'\rangle_4\}_{j,I}, \\
&H_{5,3} = \{|\kappa_{I}\rangle_1 |\beta_j\rangle_2 |d_3'\rangle_3 |d_4'\rangle_4\}_{j,I}, \\
&H_{5,4} = \{|\beta_j\rangle_1 |d_2'\rangle_2 |d_3'\rangle_3 |\kappa_{I}\rangle_4\}_{j,I}, \\
&H_{6,1} = \{|0\rangle_1 |d_2'\rangle_2 |d_3'\rangle_3 |d_4'\rangle_4\}, \\
&H_{6,2} = \{|d_1'\rangle_1 |d_2'\rangle_2 |d_3'\rangle_3 |0\rangle_4\}, \\
&H_{6,3} = \{|d_1'\rangle_1 |d_2'\rangle_2 |0\rangle_3 |d_4'\rangle_4\}, \\
&H_{6,4} = \{|d_1'\rangle_1 |0\rangle_2 |d_3'\rangle_3 |d_4'\rangle_4\}, \\
&H_{7,1} = \{|0\rangle_1 |\kappa_{I}\rangle_2 |d_3'\rangle_3 |\kappa_{I}\rangle_4\}_{I|_2,\,I|_4}, \\
&H_{7,2} = \{|\kappa_{I}\rangle_1 |d_2'\rangle_2 |\kappa_{I}\rangle_3 |0\rangle_4\}_{I|_1,\,I|_3}, \\
&H_{7,3} = \{|d_1'\rangle_1 |\kappa_{I}\rangle_2 |0\rangle_3 |\kappa_{I}\rangle_4\}_{I|_2,\,I|_4}, \\
&H_{7,4} = \{|\kappa_{I}\rangle_1 |0\rangle_2 |\kappa_{I}\rangle_3 |d_4'\rangle_4\}_{I|_1,\,I|_3}, \\
&H_{8,1} = \{|d_1'\rangle_1 |\kappa_{I}\rangle_2 |d_3'\rangle_3 |\kappa_{I}\rangle_4\}_{I|_2,\,I|_4}, \\
&H_{8,2} = \{|\kappa_{I}\rangle_1 |d_2'\rangle_2 |\kappa_{I}\rangle_3 |d_4'\rangle_4\}_{I|_1,\,I|_3}.
\end{aligned}
\end{equation}
Here $|\beta_j\rangle_k = \sum_{u=0}^{d_k-2} \omega_{d_k-1}^{ju} |u\rangle$,
$|\gamma_m\rangle_k = \sum_{u=0}^{d_k-2} \omega_{d_k-1}^{mu} |u+1\rangle$,
$|\kappa_I\rangle_k = \sum_{u=0}^{d_k-3} \omega_{d_k-2}^{Iu} |u+1\rangle$,
and $d_k' = d_k - 1$ for $j,m \in \mathbb{Z}_{d_k-1}$, $I \in \mathbb{Z}_{d_k-2}$, and $k\in\{1,2,3,4\}$.

\begin{theorem}\label{thm:sym-ops-1}
The strongly nonlocal set given by Eq.~(\ref{set4.2}) can be locally distinguished using the entanglement resource configuration
\[
\{(1,|\phi^{+}(3)\rangle_{AC});(1,|\phi^{+}(d_{4})\rangle_{CD})\}.
\]
\end{theorem}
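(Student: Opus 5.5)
We will follow the same two-stage strategy as the teleportation-based protocol of Theorem~\ref{thm:asym-ops-1}. First, Dave teleports his subsystem $D$ to Charlie using the $d_4\otimes d_4$ maximally entangled state $|\phi^{+}(d_4)\rangle_{CD}$. This consumes exactly $(1,|\phi^{+}(d_4)\rangle_{CD})$, and Charlie then holds the joint system $CD$ of dimension $d_3d_4$. The problem is thereby reduced to distinguishing the set~(\ref{set4.2}), viewed as a tripartite OPS in $\mathbb{C}^{d_1}\otimes\mathbb{C}^{d_2}\otimes\mathbb{C}^{d_3d_4}$, with the single resource $|\phi^{+}(3)\rangle_{ac}$ shared between Alice and Charlie.

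In the second stage, the qutrit pair is used to ``copy'' a three-outcome coarse-graining of Alice's basis onto Charlie's side. Alice's computational basis naturally splits into the three blocks $\{|0\rangle\}$, $\{|1\rangle,\dots,|d_1-2\rangle\}$ and $\{|d_1'\rangle\}$. These are exactly the supports that separate the vectors $|0\rangle$, $|\kappa_I\rangle$ and $|d_1'\rangle$. The vectors $|\beta_j\rangle$, $|\gamma_m\rangle$ and $|0\pm d_1'\rangle$ straddle two blocks. Alice will perform
\[
M_{1,1}=P[|0\rangle_A;|0\rangle_a]+P[(|1\rangle,\dots,|d_1-2\rangle)_A;|1\rangle_a]+P[|d_1'\rangle_A;|2\rangle_a],\qquad M_{1,2},\ M_{1,3}
\]
together with the two cyclic shifts of the ancilla label. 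These projectors preserve orthogonality, because they act only block-diagonally on Alice's side. For outcome $M_{1,1}$, each state becomes a superposition in which the block label of $A$ is correlated with Charlie's ancilla $|k\rangle_c$. After this measurement Charlie effectively knows which block Alice's component lies in, branch by branch, and the other outcomes are handled symmetrically.

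Next, Charlie measures his combined system $CcD$ with projectors of the form $P[(\cdot)_{CD};|k\rangle_c]$. He chooses them to peel off the groups $H_{1,\cdot},\dots,H_{8,\cdot}$ in turn. He starts with the subsets whose $CD$-support is disjoint from the others, for example those with $|d_3'\rangle_3|d_4'\rangle_4$ or $|0\rangle_3|0\rangle_4$ paired with a fixed $A$-block. Once the ancilla has broken the ``cyclic'' symmetry responsible for strong nonlocality, each remaining subset should be locally reducible without further entanglement. Bob then measures in the blocks $\{|0\rangle\},\{|1\rangle,\dots,|d_2-2\rangle\},\{|d_2'\rangle\}$, and the remaining parties finish with Fourier-basis measurements to resolve the indices $j,m,I$.

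The main obstacle is the bookkeeping in the second stage. Every measurement by Charlie, and later by Bob, must be checked to be orthogonality preserving on all thirty-two families. The critical cases are $H_{1,4}$, $H_{2,4}$, $H_{3,4}$ and $H_{4,4}$, together with $H_{5,4}$. In these families Alice holds a superposition ($|\gamma_m\rangle_1$, $|\beta_j\rangle_1$, $|0\pm d_1'\rangle_1$) that crosses block boundaries, so after $M_{1,1}$ they carry two different ancilla labels. I would first verify that each of these becomes, after Alice's measurement, a state whose two branches land on distinct $|k\rangle_c$ in a way that is consistent across families. This consistency is exactly what a qutrit, and not merely a qubit, is needed for. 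Only after that would I design the remaining measurements.
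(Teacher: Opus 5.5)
The setup is the paper's own. You teleport $D$ to Charlie with $|\phi^{+}(d_4)\rangle$, share $|\phi^{+}(3)\rangle_{a\tilde c}$, and have Alice perform the three-block cyclic measurement, which is exactly the paper's $\mathcal{M}_1$; the other two outcomes are equivalent up to relabeling the ancilla. But the proof only starts there. The content of the theorem is an explicit chain of orthogonality-preserving measurements on $\widetilde{C}\tilde c$ and $B$ that isolates all thirty families, and you defer it entirely.

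The two concrete hints you give would also fail:
\begin{enumerate}
\item Peeling off with a single ancilla label does not preserve orthogonality for the straddling families you flagged, such as $H_{2,4}$, $H_{3,4}$, $H_{4,4}$. Their two branches carry different labels, so a one-label projector collapses the $j$, $m$ or $\pm$ index. Your ``disjoint'' examples also overlap other families: $|d_3'd_4'\rangle$ with label $0$ meets $H_{3,1}$ through $|d_3'\rangle|\gamma_m\rangle$, and $|00\rangle$ with label $2$ meets $H_{2,1}$ through $|0\rangle|\beta_j\rangle$.
\item Bob's three-block measurement destroys the orthogonality of $H_{4,3}$, whose Bob vector is $|0\pm d_2'\rangle$. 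Charlie cannot remove $H_{4,3}$ first. Its $\widetilde{C}$-support ($\kappa_I\circ d_4'$, label $2$) lies inside that of $H_{5,2}$ ($\beta_j\circ d_4'$, label $2$), and the two families agree on $A$, so only Bob can separate them.
\end{enumerate}
Incidentally, $\mathcal{M}_1$ preserves orthogonality because each block carries a distinct, perfectly correlated ancilla label, not merely because it is block diagonal. An untagged block measurement such as your proposed one for Bob does not.

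The paper resolves this through a specific ordering:
\begin{enumerate}
\item Charlie's eight-outcome $\mathcal{M}_2$ splits off thirteen families. It uses two-label projectors such as $P[|00\rangle_{\widetilde{C}};(|0\rangle,|1\rangle)_{\tilde c}]$ for $H_{2,4}$. It also bundles $H_{2,1}$ with $H_{3,3},H_{4,2},H_{7,3},H_{8,1}$, to be separated later.
\item Bob then makes only the coarse split $\{|0\rangle,|d_2'\rangle\}$ versus the middle block. This keeps $H_{4,3}$ intact while separating it from $H_{5,2}$.
\item Charlie's branch measurements $\mathcal{M}_4$ and $\mathcal{M}_6$, plus one more measurement by Bob, finish the job. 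They again use two-label projectors: $P[|0d_4'\rangle;(|1\rangle,|2\rangle)]$ for $H_{3,4}$ and $P[|d_3'd_4'\rangle;(|0\rangle,|2\rangle)]$ for $H_{4,4}$, with $H_{1,4}$ left as the complement.
\end{enumerate}
Until you construct such a sequence and check it on every family, your proposal establishes the resource count but not the discrimination.
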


\begin{theorem}\label{thm:sym-ops-2}
The entanglement resource configuration
\[
\{(1,|\phi^{+}(2)\rangle_{AC});(1,|\phi^{+}(3)\rangle_{AD});(\tfrac{r'}{s'},|\phi^{+}(2)\rangle_{BD})\}
\]
is sufficient for the local discrimination of the strongly nonlocal set~(\ref{set4.2}), where
\[
s'=-2\sum_{1\leq i<j\leq 4}d_id_j+3\sum_{i=1}^{4}d_i+d_1d_3(1+d_2+d_4)+d_2d_4(1+d_1+d_3)-4
\]
and
\[
r'=s'-(d_1+d_2+d_3)+3.
\]
\end{theorem}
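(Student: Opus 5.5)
We plan to follow the template of the protocols behind Theorems~\ref{thm:asym-ops-2} and~\ref{thm:asym-ops-3}. Each EPR-type resource will be used to perform an orthogonality-preserving ``entangled projection'' of the form $M_i=P[(|j_{i1}\rangle,|j_{i2}\rangle)_X;|k_i\rangle_x]+P[|j_i'\rangle_X;|k_i'\rangle_x]$. These projections split the set~(\ref{set4.2}) into subsets that no longer have strong nonlocality, and the subsets are then finished by ordinary LOCC. Teleportation is avoided altogether.

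\textbf{Step 1: the resource $|\phi^{+}(3)\rangle_{AD}$.} Alice and Dave share $|\phi^{+}(3)\rangle_{ad}=|00\rangle+|11\rangle+|22\rangle$. Dave measures with three projectors that correlate his subsystem $D$ with the ancilla value $k\in\{0,1,2\}$. The three blocks of $D$ are $\{|0\rangle\}$, the middle block $\{|1\rangle,\dots,|d_4-2\rangle\}$ (the support of $|\kappa_I\rangle_4$), and $\{|d_4'\rangle\}$; any state with $D$-support crossing blocks, such as $|\beta_j\rangle_4$, $|\gamma_m\rangle_4$ or $|0\pm d_4'\rangle_4$, is paired off consistently. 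The effect is that Alice's ancilla $a$ now ``knows'' the block of $D$. Alice then measures $A\otimes a$ with projectors that separate $|0\rangle_1$, the $\kappa$-block and $|d_1'\rangle_1$, conditioned on the block of $D$. This separates the symmetric families, for instance $H_{2,1}$ from $H_{6,4}$, $H_{1,2}$ from $H_{7,4}$, and $H_{8,1}$ from $H_{7,1}$.

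\textbf{Step 2: the resource $|\phi^{+}(2)\rangle_{AC}$.} In the same way, Charlie correlates the two-outcome split $\{|0\rangle\}$ versus $\{|1\rangle,\dots,|d_3'\rangle\}$ of $C$ with his ancilla. Alice then performs a joint measurement that distinguishes the families coupled across the $A$--$C$ cut, such as $H_{1,4}$ and $H_{7,2}$, and $H_{4,1}$ and $H_{5,3}$. After these two steps, we expect most groups to be distinguishable by successive local projective measurements. We will check orthogonality preservation group by group, exploiting the cyclic symmetry of~(\ref{set4.2}).

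\textbf{Step 3: the residual $B$--$D$ entanglement, used only probabilistically.} Exactly one class of outcomes should remain locally irreducible: those involving $|\kappa_I\rangle_2$ and $|\kappa_I\rangle_4$ together, namely $H_{1,1}$, $H_{1,3}$, $H_{7,1}$, $H_{7,3}$ and $H_{8,1}$. Only on this class is an EPR pair $|\phi^{+}(2)\rangle_{BD}$ consumed, after which the remaining states are separated by local Fourier-basis measurements. The average consumption is therefore the probability of landing in that branch, computed with a uniform prior over the basis states. That probability is (number of states in the branches that need the $BD$ pair)/(total number of states). We count $|H_{x,y}|$ from the index ranges $j,m\in\mathbb{Z}_{d_k-1}$ and $I\in\mathbb{Z}_{d_k-2}$, and then verify that the total equals $s'$ and that the count of the branch requiring the pair equals $r'=s'-(d_1+d_2+d_3)+3$. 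Equivalently, the complementary branch has $d_1+d_2+d_3-3$ states, which looks like $H_{2,\ast}$/$H_{3,\ast}$-type families resolved early.

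\textbf{Main obstacle.} The hardest part is designing the measurements in Steps 1--2 so that they are orthogonality preserving for every one of the 30 families, including the superposition states $|0\pm d_k'\rangle$ whose supports straddle blocks. It is also delicate to ensure the state counts match $s'$ and $r'$ exactly. If the straightforward block choice fails, we will first try swapping the roles of $A$ and $D$, letting Alice measure first and pass information through the ancilla. This mirrors what worked for~(\ref{set4.1}) in Theorem~\ref{thm:asym-ops-3}.
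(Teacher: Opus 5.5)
\textbf{Verdict: there is a genuine gap, in Step~3.} Your Steps~1--2 match the paper's opening moves. Charlie correlates $\{|0\rangle\}$ versus $\{|1\rangle,\dots,|d_3'\rangle\}$ with the $AC$ pair. Dave correlates the three blocks $\{|0\rangle\}$, $\{|1\rangle,\dots,|d_4'-1\rangle\}$, $\{|d_4'\rangle\}$ with $|\phi^{+}(3)\rangle$. Alice then makes a single four-outcome measurement on $A a_1 a_2$.

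Step~3 contradicts itself. The five families you single out ($H_{1,1},H_{1,3},H_{7,1},H_{7,3},H_{8,1}$) contain, for example, $7$ of the $s'=50$ states when all $d_k=3$, whereas $r'=44$. The count you plan to verify says the third pair is consumed on everything except $d_1+d_2+d_3-3$ states. In the paper that complement is exactly $H_{2,3}\cup H_{3,2}\cup H_{2,4}$, isolated by the outcomes $M_{3,1},M_{3,2}$ of Alice's measurement. On both remaining outcomes (13 and 14 families) the third pair is always brought in. Bob correlates $\{|0\rangle\}$ versus $\{|1\rangle,\dots,|d_2'\rangle\}$ of $B$ with it, and Alice then filters on $A$ jointly with all three labels $a_1,a_2,a_3$. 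Ordinary local projections finish from there.

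\textbf{Why your residual class is wrong.} With only the $C$- and $D$-labels, the paper's 14-family branch ($H_{1,1},H_{1,2},H_{1,3},H_{2,2},H_{3,1},H_{3,4},H_{4,3},H_{4,4},H_{5,2},H_{5,4},H_{6,1},H_{6,3},H_{6,4},H_{7,1}$) admits no orthogonality-preserving block projection on any side:
on $B$, $H_{1,3}$ ($|\gamma_m\rangle_2$) and $H_{4,3}$ ($|0\pm d_2'\rangle_2$) join all three blocks;
on $C$, $H_{1,2}$ ($|\gamma_m\rangle_3$) and $H_{2,2}$ ($|\beta_j\rangle_3$) do the same;
on $D$, $H_{1,1}$ ($|\gamma_m\rangle_4$) joins the two occupied blocks;
on $A a_1 a_2$, the families $H_{1,1},H_{3,4},H_{5,2},H_{4,4},H_{2,2},H_{3,1},H_{5,4}$ overlap pairwise and link all eight occupied cells (block of $A$, $a_1$, $a_2$) into one component.
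So the part that needs $B$-information is far larger than the five families you list.

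\textbf{Your Step~2 examples also fail.} $H_{1,4}$ and $H_{7,2}$ share the cell ($\kappa$-block of $A$, $a_1=1$, $a_2=0$), and $H_{4,1}$ and $H_{5,3}$ share ($\kappa$-block, $a_1=1$, $a_2=2$). No measurement on $A a_1 a_2$ can separate them; they differ only on $B$.

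\textbf{Where the third pair sits.} The paper's proof actually shares the third pair between Alice and Bob ($|\phi^{+}(2)\rangle_{a_3b}$), so all three labels end up with Alice. With a genuine $BD$ pair, as in the statement and in your plan, the $B$-label sits with Dave while the $C$-label stays with Alice. The finishing filters then have to be built differently. That orthogonality-preserving measurement design, including for the $|0\pm d_k'\rangle$ families, is the real content of the proof, and your proposal leaves it open.
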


The detailed proofs of Theorems~\ref{thm:sym-ops-1} and~\ref{thm:sym-ops-2} are given in Appendices~\ref{T6} and~\ref{T7}, respectively. In the protocol of Theorem~\ref{thm:sym-ops-1}, one use of quantum teleportation is required, leading to a total entanglement consumption of \(\log_2 3+\log_2 d_4\) ebits. By contrast, the protocol of Theorem~\ref{thm:sym-ops-2} avoids teleportation and consumes \(1+\frac{r'}{s'}+\log_2 3\) ebits. Since \(\frac{r'}{s'}<1\), we have \(1+\frac{r'}{s'}<\log_2 d_4\) whenever \(d_4>4\). Hence, for \(d_4>4\), Theorem~\ref{thm:sym-ops-2} is more entanglement-efficient than Theorem~\ref{thm:sym-ops-1}. Moreover, compared with Theorem~\ref{thm:sym-ops-1}, Theorem~\ref{thm:sym-ops-2} only requires maximally entangled states of lower local dimension, which are generally easier to realize experimentally. Therefore, the scheme in Theorem~\ref{thm:sym-ops-2} is preferable both in terms of entanglement cost and practical implementability.

The entanglement-assisted local discrimination protocols proposed in Theorems~\ref{thm:sym-ops-1} and~\ref{thm:sym-ops-2} follow the same general construction principles as those in Theorems~\ref{thm:asym-ops-1} and~\ref{thm:asym-ops-2}, respectively. Comparing the two OPSs~(\ref{set4.1}) and~(\ref{set4.2}), we also find that the set with symmetric structure appears to be harder to distinguish perfectly by LOCC, owing to its invariance under cyclic permutations of the parties.
\section{Discrimination of Symmetric Strongly Nonlocal OPSs in Five-Partite Systems}\label{Q5}

Unlike the previous sections, in this section we explore local discrimination protocols by increasing the number of auxiliary subsystems involved in a shared maximally entangled resource. In Ref.~\cite{Ref36}, strongly nonlocal OPSs with symmetric structure were further studied in \(n\)-partite systems. Taking the five-partite case as an example, we propose several local discrimination protocols using only EPR states, GHZ states, and \(|F\rangle\) states, respectively. In the quantum system
\[
\mathbb{C}^{d_1}\otimes\mathbb{C}^{d_2}\otimes\mathbb{C}^{d_3}\otimes\mathbb{C}^{d_4}\otimes\mathbb{C}^{d_5},
\qquad d_k\ge 3\ \text{for}\ k=1,2,\ldots,5,
\]
the strongly nonlocal OPS is given by
\begin{equation}\label{set5}
\begin{aligned}
&H_1 = \{|\alpha_i\rangle_1 |\alpha_i\rangle_2 |p\rangle_3 |0\rangle_4 |0\rangle_5\}, \\
&H_2 = \{|\alpha_i\rangle_1 |p\rangle_2 |0\rangle_3 |0\rangle_4 |\alpha_i\rangle_5\}, \\
&H_3 = \{|p\rangle_1 |0\rangle_2 |0\rangle_3 |\alpha_i\rangle_4 |\alpha_i\rangle_5\}, \\
&H_4 = \{|0\rangle_1 |0\rangle_2 |\alpha_i\rangle_3 |\alpha_i\rangle_4 |p\rangle_5\}, \\
&H_5 = \{|0\rangle_1 |\alpha_i\rangle_2 |\alpha_i\rangle_3 |p\rangle_4 |0\rangle_5\}, \\
&H_6 = \{|\gamma_{m}\rangle_1 |0\rangle_2 |p\rangle_3 |0\rangle_4 |1\rangle_5\}, \\
&H_7 = \{|0\rangle_1 |p\rangle_2 |0\rangle_3 |1\rangle_4 |\gamma_{m}\rangle_5\}, \\
&H_8 = \{|p\rangle_1 |0\rangle_2 |1\rangle_3 |\gamma_{m}\rangle_4 |0\rangle_5\}, \\
&H_9 = \{|0\rangle_1 |1\rangle_2 |\gamma_{m}\rangle_3 |0\rangle_4 |p\rangle_5\}, \\
&H_{10} = \{|1\rangle_1 |\gamma_{m}\rangle_2 |0\rangle_3 |p\rangle_4 |0\rangle_5\}.
\end{aligned}
\end{equation}
Here
\[
|\alpha_i\rangle_k=\sum_{u=0}^{d_k-1}\omega_{d_k}^{iu}|u\rangle,\qquad
|\gamma_m\rangle_k=\sum_{u=0}^{d_k-2}\omega_{d_k-1}^{mu}|u+1\rangle,
\]
and
\[
|p\rangle_k\in\{|1\rangle_k,|2\rangle_k,\ldots,|d_k-1\rangle_k\},
\]
where \(i\in\mathbb{Z}_{d_k}\), \(m\in\mathbb{Z}_{d_k-1}\), and \(k\in\{1,2,\ldots,5\}\).

\begin{theorem}\label{thm:fivepartite-1}
The strongly nonlocal set \(\bigcup_{i=1}^{10}H_i\) given by Eq.~(\ref{set5}) can be locally distinguished using the entanglement resource configuration
\[
\{(1,|\phi^{+}(2)\rangle_{AE});(1,|\phi^{+}(2)\rangle_{BE});(1,|\phi^{+}(2)\rangle_{CE});(1,|\phi^{+}(2)\rangle_{DE})\}.
\]
\end{theorem}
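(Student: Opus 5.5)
We plan to follow the standard Cohen/Rout-type entanglement-assisted protocol. Party $E$ acts as the hub. Each of $A,B,C,D$ shares one EPR pair $|\phi^{+}(2)\rangle$ with $E$, so four ebits are consumed in total. Each EPR pair serves as a one-bit ``flag'' for one part: that part performs a controlled projection correlating a two-element split of its computational basis with its ancilla, and $E$ can later read the correlation off its own ancilla. Concretely, for $X\in\{A,B,C,D\}$ with ancillas $x$ (at $X$) and $e_X$ (at $E$), party $X$ first performs
\[
M_{X,1}=P\bigl[|0\rangle_X;|0\rangle_x\bigr]+P\bigl[(|1\rangle,\ldots,|d_X-1\rangle)_X;|1\rangle_x\bigr],\qquad M_{X,2}=I-M_{X,1}.
\]
This is orthogonality-preserving on every state of $\bigcup_{i=1}^{10}H_i$, since it only tags the support $\{|0\rangle\}$ versus $\{|1\rangle,\ldots\}$. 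The split is natural because every state in Eq.~(\ref{set5}) is built from $|0\rangle$, $|1\rangle$, $|p\rangle$ with $p\ge1$, $|\gamma_m\rangle$ (supported on $\{1,\ldots,d-1\}$), and the full Fourier vectors $|\alpha_i\rangle$. If outcome $M_{X,2}$ occurs, we treat it symmetrically by relabelling the ancilla.

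After the four tags, $E$ holds its own system together with four ancilla qubits. These encode, branch by branch, whether the local component on each of $A,B,C,D$ lies in $\{|0\rangle\}$ or in its complement. The next step is the casework. We check that in $H_1,\dots,H_{10}$ the only obstructions to local discrimination are the $|\alpha_i\rangle$ and $|\gamma_m\rangle$ vectors, which straddle or sit inside these supports. Once the support pattern on $A$--$D$ is available at $E$, $E$ can measure projectors of the form $P[\,\cdot\,_E;\text{ancilla pattern}]$ and split the set into subsets that are each locally distinguishable without further entanglement. For instance, $H_1$ has pattern $(\ast,\ast,\neq0,0,0)$ while $H_4$ has $(0,0,\ast,\ast,\ge1)$. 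Combining $E$'s own value ($0$, $1$, or $p\ge 2$) with the flag pattern separates the ten blocks. Inside each block the remaining states are products of Fourier-type vectors, which are identified by each party measuring in the appropriate Fourier basis and broadcasting the result, using that $|\alpha_i\rangle$ and $|\gamma_m\rangle$ are orthonormal families. The overlapping entries such as $|p\rangle$ versus $|1\rangle$ are resolved by successive computational-basis measurements once the cross-block ambiguity is removed.

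We would carry this out in the following order:
\begin{enumerate}
\item tag all four parties;
\item have $E$ perform a two-outcome or multi-outcome measurement combining $|0\rangle_E$, $|1\rangle_E$, $\{|p\rangle_E\}$ with the ancilla patterns, writing the resulting subsets as in the proofs of Theorems~\ref{thm:4qubit-ghz} and \ref{thm:5qubit-ghz};
\item finish each subset with orthogonality-preserving local measurements, and use the fact that two orthogonal states are LOCC-distinguishable \cite{Ref60} for any residual pairs.
\end{enumerate}

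We expect the main obstacle to be step~2. We must verify that every block, in particular the pairs with overlapping supports such as $H_1$ and $H_5$ on party $B$, or $H_6$ and $H_{10}$ on party $A$ with $|\gamma_m\rangle$ versus $|1\rangle$, is separated cleanly, so that no measurement destroys orthogonality across blocks. If the single $\{0\}$ versus complement tag proves too coarse for some party, we would first try changing that party's split to $\{0,1\}$ versus $\{2,\ldots\}$ while keeping a qubit ancilla. This would keep the resource configuration as stated.
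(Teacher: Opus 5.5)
Your plan is essentially the paper's proof: the same four EPR pairs shared with $E$, the same $\{|0\rangle\}$-versus-complement tagging measurements on $A,B,C,D$, and then a single multi-outcome projective measurement by $E$ on its own system together with the four flag qubits, which isolates each $H_i$ (each $H_i$ is individually locally distinguishable). The casework you deferred does close if $E$ distinguishes only $|0\rangle_E$ from $(|1\rangle,\ldots,|d_5-1\rangle)_E$ and acts as the identity on $E$ for the flag patterns $(\ast,1,0,0)$ of $H_2$ and $(1,0,0,\ast)$ of $H_3$; the finer $|1\rangle_E$ versus $|p\ge 2\rangle_E$ split you mention is unnecessary and must be avoided, because $E$ carries $|\gamma_m\rangle$ in $H_7$ and $|\alpha_i\rangle$ in $H_2,H_3$, and splitting there would destroy orthogonality inside those blocks.
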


\begin{theorem}\label{thm:fivepartite-2}
The set \(\bigcup_{i=1}^{10}H_i\) given by Eq.~(\ref{set5}) can be locally distinguished using the entanglement resource configuration
\[
\{(1,|G\rangle_{ADE});(1,|G\rangle_{BDE});(1,|G\rangle_{CDE})\}.
\]
\end{theorem}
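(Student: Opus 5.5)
We will mimic the structure that presumably underlies Theorem~\ref{thm:fivepartite-1}: there, each EPR pair $|\phi^{+}(2)\rangle_{XE}$ ($X\in\{A,B,C,D\}$) lets party $X$ and Eve jointly perform a measurement that "copies" a coarse-grained piece of $X$'s information onto a correlated ancilla held by $E$. The plan is to show that each three-party GHZ state $|G\rangle_{XDE}$, $X\in\{A,B,C\}$, can play the role of the two EPR pairs $XE$ and (part of) $DE$ simultaneously. The reason is that a GHZ state with ancillas $x,d,e$ is maximally correlated in the computational basis, exactly as an EPR pair is. So a measurement of the form $P[(\ldots)_X;|0\rangle_x]+P[(\ldots)_X;|1\rangle_x]$ by party $X$ transfers a binary label to both $d$ and $e$ at once. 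Thus $D$ and $E$ both acquire a classical-correlated copy of which block $X$'s state lies in.

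First, each of $A$, $B$, $C$ performs, on its main system and its GHZ ancilla, a two-outcome orthogonality-preserving projection of the form
\[
M_{X}=P[|0\rangle_X;|0\rangle_x]+P[(|1\rangle,\ldots,|d_X-1\rangle)_X;|1\rangle_x]
\]
(and its complement $I-M_X$, handled symmetrically by relabeling). This separates, in the enlarged space, the component $|0\rangle$ from the "$p$/$\gamma_m$'' block of each state. It is precisely the dichotomy used by the sets $H_1,\dots,H_{10}$ in Eq.~(\ref{set5}), where every local factor is $|0\rangle$, $|1\rangle$, $|p\rangle$, a $\gamma_m$ supported on $\{1,\ldots,d_k-1\}$, or a Fourier state $|\alpha_i\rangle$. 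After this step, the states with $|\alpha_i\rangle$ on one of $A,B,C$ become entangled superpositions whose branch label is available on $D$'s and $E$'s ancillas. Then $D$ and $E$ can act conditionally: for instance, $E$ measures $P[|0\rangle_E;\cdot]$ versus the rest jointly with its three ancilla labels. This plays the role of the measurements with the $DE$ EPR pair in Theorem~\ref{thm:fivepartite-1}, but here no extra $DE$ resource is needed, because the correlations between $d$ and $e$ coming from each GHZ already suffice.

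Next, we will go through the ten blocks in the order dictated by the cyclic symmetry. Measurements alternate between $D$ and $E$, each being orthogonality-preserving in the computational basis (projections like $|0\rangle_E\langle0|$, $|1\rangle_D\langle1|$, or projections onto $\{|1\rangle,\ldots\}$ tensored with ancilla labels). These progressively isolate $H_6,\dots,H_{10}$ (the $\gamma_m$ blocks) and then $H_1,\dots,H_5$. Once a block is isolated, its members are either product states with distinct computational-basis labels ($p$) or Fourier families $\{|\alpha_i\rangle|\alpha_i\rangle\}$ or $\{|\gamma_m\rangle\}$. These can be finished by measuring in the Fourier basis on one party after the ancillas are disentangled. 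Disentangling is done by measuring each ancilla in the $\{|0\rangle\pm|1\rangle\}$ basis and applying the phase correction. This is standard, as in the final step of the GHZ protocols of Sec.~\ref{Q3}.

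The main obstacle is the verification that, after the $A$, $B$, $C$ projections, the intermediate measurements by $D$ and $E$ remain orthogonality-preserving for \emph{every} surviving state. This concerns especially the sets $H_1$, $H_2$, $H_5$, where $|\alpha_i\rangle$ sits on two parties one of which is $A$, $B$ or $C$, and whose branches now carry different ancilla labels. I would handle this by tabulating, for each $H_k$, the support of the state on the pair (main system of $D$, $E$) together with the ancilla strings $(d,e)$. I would then choose the $D$/$E$ projectors as unions of these supports, checking that no projector cuts through a single Fourier superposition on the unmeasured parties. If a conflict arises, I would reorder so that $E$ (which holds all three $e$-ancillas) acts first.
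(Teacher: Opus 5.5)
\paragraph{Verdict.} Your first step is exactly the paper's: Alice, Bob and Charlie each apply $P[|0\rangle_X;|0\rangle_x]+P[(|1\rangle,\ldots,|d_X'\rangle)_X;|1\rangle_x]$, with the other outcomes handled by relabeling. Your overall plan also matches it: afterwards only Dave and Eve act, using computational-basis projectors on their main systems tensored with the copied labels. But the proposal is incomplete. You never write down the Dave and Eve measurements, and the step you call ``the main obstacle'' is the whole content of the theorem. The one schedule you do commit to, isolating $H_6,\dots,H_{10}$ first, would fail.

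\paragraph{Why the proposed order fails.} Write $(x_A,x_B,x_C)$ for the label carried by Dave's ancillas $v_1v_2v_3$ and Eve's ancillas $e_1e_2e_3$ (your $d$- and $e$-ancillas). Here $\ast$ marks a coherent superposition of both values coming from an $|\alpha_i\rangle$ factor, ``all'' means a full Fourier state, and ${\ge}1$ means support in $\{|1\rangle,\ldots,|d_k-1\rangle\}$. After the first step (all outcomes $M_{\cdot,1}$), the labels and the $D$- and $E$-supports are:
$H_1$: $(\ast,\ast,1)$, $0$, $0$;
$H_2$: $(\ast,1,0)$, $0$, all;
$H_3$: $(1,0,0)$, all, all;
$H_4$: $(0,0,\ast)$, all, ${\ge}1$;
$H_5$: $(0,\ast,\ast)$, ${\ge}1$, $0$;
$H_6$: $(1,0,1)$, $0$, $1$;
$H_7$: $(0,1,0)$, $1$, ${\ge}1$;
$H_8$: $(1,0,1)$, ${\ge}1$, $0$;
$H_9$: $(0,1,1)$, $0$, ${\ge}1$;
$H_{10}$: $(1,1,0)$, ${\ge}1$, $0$.

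Writing out the orthogonality conditions for the families $|\alpha_i\rangle_2|\alpha_{i'}\rangle_3$ and $|\alpha_i\rangle_1|\alpha_j\rangle_5$ gives three obstructions.
\begin{itemize}
\item Any orthogonality-preserving projector of Dave that fixes $|1\rangle_D|010\rangle_v$ must contain the $p=1$ states of $H_5$ entirely. So Dave can never split $H_7$ from $H_5$.
\item Any such projector of Eve that contains $H_7$ must contain all of $H_2$, because $H_2$ carries the full $|\alpha_j\rangle_5$ in the sector $(0,1,0)$. Through $|0\rangle_E|010\rangle_e$ it then also contains $H_5$. Hence $H_7$ can be split off only by Eve, and only after Dave has removed $H_2$.
\item Dave cannot separate $H_1$ from $H_6$ or $H_9$, since $H_1$ occupies the same vectors $|0\rangle_D|101\rangle_v$ and $|0\rangle_D|011\rangle_v$. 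So Eve must split those two off before Dave filters $H_1$.
\end{itemize}
Your fallback of ``letting $E$ act first'' is therefore not enough. The filters have to alternate, with Eve acting before and after Dave.

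\paragraph{What closes the gap.} The paper uses the following three rounds.
\begin{enumerate}
\item Eve measures $P[I_E;|1\rangle_{e_1};|0\rangle_{e_2};|0\rangle_{e_3}]\Rightarrow H_3$, $P[|1\rangle_E;|1\rangle_{e_1};|0\rangle_{e_2};|1\rangle_{e_3}]\Rightarrow H_6$, $P[(|1\rangle,\ldots,|d_5'\rangle)_E;|0\rangle_{e_1};|1\rangle_{e_2};|1\rangle_{e_3}]\Rightarrow H_9$, and $P[(|1\rangle,\ldots,|d_5'\rangle)_E;|0\rangle_{e_1};|0\rangle_{e_2};I_{e_3}]\Rightarrow H_4$, plus the complement.
\item Dave measures $P[(|1\rangle,\ldots,|d_4'\rangle)_D;|1\rangle_{v_1};|1\rangle_{v_2};|0\rangle_{v_3}]\Rightarrow H_{10}$, $P[(|1\rangle,\ldots,|d_4'\rangle)_D;|1\rangle_{v_1};|0\rangle_{v_2};|1\rangle_{v_3}]\Rightarrow H_8$, $P[|0\rangle_D;I_{v_1};I_{v_2};|1\rangle_{v_3}]\Rightarrow H_1$, and $P[|0\rangle_D;I_{v_1};|1\rangle_{v_2};|0\rangle_{v_3}]\Rightarrow H_2$. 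The complement leaves $\{H_5,H_7\}$.
\item Eve measures $\{|0\rangle_E\langle 0|,\,I-|0\rangle_E\langle 0|\}$, which separates $H_5$ from $H_7$.
\end{enumerate}
With the table above, it is a one-line check that each projector contains or annihilates every state still in play.

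\paragraph{Your ending and one conceptual correction.} Once this is inserted, your ending is valid and slightly more explicit than the paper's appeal to each $H_k$ being LOCC distinguishable. You measure the three ancillas of each GHZ state in $\{|0\rangle\pm|1\rangle\}$, undo the local phase $|0\rangle\langle 0|\pm\sum_{u\ge 1}|u\rangle\langle u|$, and then measure the product bases of the identified $H_k$. One correction: $|G\rangle_{XDE}$ does not stand in for the $DE$ pair of Theorem~\ref{thm:fivepartite-1} through $d$--$e$ correlations. It gives Dave his own copies of the $A,B,C$ labels, so Dave filters with his own main system instead of forwarding it to Eve.
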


\begin{theorem}\label{thm:fivepartite-3}
The entanglement resource configuration
\[
\{(1,|F\rangle_{ACDE});(1,|F\rangle_{BCDE})\}
\]
is sufficient for the local discrimination of the strongly nonlocal set~(\ref{set5}).
\end{theorem}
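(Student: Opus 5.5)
Throughout, $|F\rangle=|0000\rangle+|1111\rangle$ (unnormalized), shared on ancillas $a,c,d,e$ for the first copy and on $b',c',d',e'$ for the second. The plan mirrors Theorem~\ref{thm:fivepartite-2}, with each four-partite state $|F\rangle$ doing the work of one GHZ state plus one extra leg. The guiding observation is the following. A state $|F\rangle_{acde}$ lets one party $X\in\{A,C\}$ ``copy'' a two-outcome property of its subsystem onto the ancillas of $C,D,E$. For example, $X$ performs a measurement of the form
\[
P[S_X;|0\rangle_x]+P[\bar S_X;|1\rangle_x],
\]
where $\{S_X,\bar S_X\}$ is a splitting of the computational basis of $X$. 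After this, the ancillas $c,d,e$ carry the correlated bit. Each of the three parties $C,D,E$ can then implement a measurement that is jointly conditioned on its own subsystem and on that bit.

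The natural splitting is $\{|0\rangle\}$ versus $\{|1\rangle,\dots,|d_k-1\rangle\}$. The set~(\ref{set5}) is built from vectors $|0\rangle$, $|1\rangle$, $|p\rangle$, $|\gamma_m\rangle$ (supported on $\{1,\dots,d_k-1\}$) and $|\alpha_i\rangle$ (full support). Consequently, only $|\alpha_i\rangle$ factors straddle this splitting, and it is precisely these factors that make the set locally irreducible. The steps are as follows.

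\textbf{Step 1.} $A$ and $B$ each use their $|F\rangle$ to teleport-free ``broadcast'' the bit $[\,\text{subsystem}\ne 0\,]$ of parties $1$ and $2$ to $C,D,E$. These are exactly the parties whose $|\alpha_i\rangle$ factors are paired with factors on the other parties in $H_1,H_2,H_5$. After these two measurements, every state becomes a superposition of branches tagged by ancilla strings. The branches are labelled by $(x_A,x_B)\in\{0,1\}^2$, and each branch is a product state.

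\textbf{Step 2.} $C$, $D$ and $E$ now perform orthogonality-preserving measurements of the form $P[(\cdot)_X;|k\rangle_x]$. They do this in the order $E$, $D$, $C$, mimicking the cyclic structure of~(\ref{set5}), so that in each round one block $H_k$, or a union of blocks, is isolated. For instance, $E$ separates $|0\rangle_5$ from $|1\rangle_5$ and from $|p\rangle_5$ in the branch where $x_A=0$. It can do so because in that branch the $|\alpha_i\rangle_1$ factor has collapsed onto $|0\rangle_1$, so the pairing with $H_2,H_3$ is broken.

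\textbf{Step 3.} Proceed by induction on the remaining blocks. Once a single block $H_k$ remains, its states are products of Fourier-type bases on at most two parties, times fixed vectors. Such a block is distinguished locally: the holder of the ancilla-tagged copy measures in the appropriate Fourier basis on the combined system ``subsystem $\otimes$ ancilla''. This is possible because the ancilla only tags which computational vectors are present, so the Fourier phases are still readable.

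\textbf{Main obstacle.} The hard step is checking that every measurement in Step 2 is orthogonality-preserving on all remaining states. This is most delicate for the blocks $H_1$ and $H_2$, where $|\alpha_i\rangle_1$ is entangled, after Step 1, with the ancilla bit held by three parties simultaneously. I would first try to treat the $x_A=0$ and $x_A=1$ branches separately. In the $x_A=0$ branch the $|\alpha_i\rangle_1$ factor has become $|0\rangle_1$, and in the $x_A=1$ branch it becomes the restricted Fourier vector on $\{1,\dots,d_1-1\}$. I would then verify that the phase information indexing $i$ survives in both branches, so that the final Fourier measurement still identifies $i$. If this fails, the fallback is to let $C$ hold the ancilla copy used last, since $C$ shares both $|F\rangle$ states. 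Then $C$ can jointly condition on $x_A$ and $x_B$, and the cases with $x_A\ne x_B$ can be resolved.
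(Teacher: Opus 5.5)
\textbf{There is a genuine gap: you do not construct the measurement sequence, and the one concrete measurement you give fails.}

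Your architecture matches the paper's. $A$ and $B$ copy the bit ``subsystem $\neq 0$'' into the legs of the two $|F\rangle$ states. Then $E$, $D$ and $C$ apply computational-basis projectors conditioned on their own subsystem and on the copied tags. But the substance of the theorem is an explicit sequence of orthogonality-preserving measurements that isolates all ten blocks, and your proposal does not provide one. ``Proceed by induction on the remaining blocks'' is not an argument.

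Your example measurement is not orthogonality-preserving. Conditioned only on $x_A=0$, $E$ cannot split $|0\rangle_5$ from the nonzero levels. The reason is that $H_2$ is present in the branch $(x_A,x_B)=(0,1)$ with the full-support factor $|\alpha_i\rangle_5$. Collapsing party~1 onto $|0\rangle_1$ does nothing to that factor. Projecting it onto $|0\rangle_5$ then sends all $H_2$ states with the same $p$ to a single vector.

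Your fallback also misreads the resource. Not only $C$ but also $D$ and $E$ hold a leg of \emph{both} $|F\rangle$ states, so each of them can condition on the joint tag $(x_A,x_B)$. This is the missing idea.

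With joint conditioning, record for each block which tags it carries, and whether parties $3,4,5$ are $|0\rangle$ or supported on $\{|1\rangle,\dots\}$ in each branch. For example, the $(0,0)$ branch contains only $H_1,H_4,H_5$, and only $H_4$ is nonzero on $E$ there. The paper's protocol is then as follows.
\begin{enumerate}
\item $E$ projects onto $\{|1\rangle,\dots,|d_5-1\rangle\}_E$ with tag $(0,0)$, isolating $H_4$.
\item $D$ projects onto its nonzero levels with $x_A=0$, giving $\{H_5,H_7\}$, and with tag $(1,1)$, giving $H_{10}$.
\item $C$ projects onto $|0\rangle_C$ with $x_B=1$, giving $H_2$, and with tag $(1,0)$, giving $H_3$.
\item $E$ projects onto its nonzero levels with tags $(1,0)$ and $(0,1)$, giving $H_6$ and $H_9$.
\item $D$ separates $H_1$ from $H_8$ using $|0\rangle_D$.
\end{enumerate}
At every step each surviving block lies entirely inside or entirely outside each projector, which is what makes the measurements orthogonality-preserving. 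The order matters: $H_4$ has full support on $D$ in the $(0,0)$ branch, so $E$ must remove it before $D$ acts.

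Your Step~3 also needs correcting. The relative phase between the $|0\rangle_1$ and $\{|1\rangle,\dots\}_1$ components is stored coherently across all four legs of $|F\rangle$. The truncated vectors $\sum_{u\geq 1}\omega_{d_1}^{iu}|u\rangle$ have pairwise overlap $-1$, so $A$ cannot read $i$ on its own. First $C$, $D$ and $E$ must measure their legs in the $\{|0\rangle\pm|1\rangle\}$ basis and announce the parity. $A$ can then correct the sign and recombine the two branches coherently before the Fourier measurement.
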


When subsystems \(A\), \(B\), \(C\), and \(D\) each share one EPR pair with subsystem \(E\), the discrimination task can be completed in two steps. In the second step, the measurement on subsystem \(E\) simultaneously imposes five filtering conditions on subsystem \(E\) and the auxiliary subsystems, namely
\[
\{|i\rangle_E;\ |i\rangle_{e_1};\ |i\rangle_{e_2};\ |i\rangle_{e_3};\ |i\rangle_{e_4}\}.
\]
Thus, a single orthogonality-preserving projective measurement on party \(E\) suffices to identify all subsets in Eq.~(\ref{set5}). See Appendix~\ref{T8} for details.

When only GHZ states are shared, perfect discrimination can be achieved provided that subsystems \(A\), \(B\), and \(C\) each share one GHZ state with subsystems \(D\) and \(E\). In this case, the measurements on parties \(D\) and \(E\) each impose four filtering conditions simultaneously:
\[
\{|i\rangle_X;\ |i\rangle_{x_1};\ |i\rangle_{x_2};\ |i\rangle_{x_3}\},
\]
where \((X,x)\) denotes either \((D,v)\) or \((E,e)\). When only \(|F\rangle\) states are shared, subsystems \(A\) and \(B\) each share one \(|F\rangle\) state with subsystems \(C\), \(D\), and \(E\), respectively. In that case, three filtering conditions can be imposed through measurements on parties \(C\), \(D\), and \(E\). The detailed proofs of Theorems~\ref{thm:fivepartite-2} and~\ref{thm:fivepartite-3} are given in Appendices~\ref{T9} and~\ref{T10}, respectively.

As the number of auxiliary subsystems increases, the filtering power of local measurements can be enhanced. Compared with Theorem~\ref{thm:fivepartite-1}, although the measurements on party \(E\) in Theorems~\ref{thm:fivepartite-2} and~\ref{thm:fivepartite-3} are individually weaker, the measurements on the other parties become stronger. Therefore, for the strongly nonlocal OPS~(\ref{set5}), the discrimination task can be completed under LOCC using four EPR pairs, three GHZ states, or two \(|F\rangle\) states, respectively. The choice among Theorems~\ref{thm:fivepartite-1}--\ref{thm:fivepartite-3} may be guided by implementation constraints.

\section{Conclusion}\label{Q6}
In this paper, we have studied efficient entanglement-assisted discrimination protocols for nonlocal sets of orthogonal states. In particular, for genuinely nonlocal GHZ bases in four- and five-qubit systems, we showed that by incorporating CNOT gates into the discrimination process, only one shared EPR pair is sufficient to complete the task. In \(\mathbb{C}^{d_1}\otimes\mathbb{C}^{d_2}\otimes\mathbb{C}^{d_3}\otimes\mathbb{C}^{d_4}\) (\(d_k\ge 3\), \(k=1,2,3,4\)), for asymmetric and symmetric OPSs with strong quantum nonlocality, we provided several efficient protocols based on different entanglement resources and methods, including quantum teleportation and the use of shared EPR pairs or \(3\otimes 3\) maximally entangled states. For the protocols considered here, those without quantum teleportation tend to be more efficient in terms of total entanglement consumption. In addition, for strongly nonlocal OPSs in five-partite systems, we presented three protocols based on four EPR pairs, three GHZ states, and two \(|F\rangle\) states, respectively. Our analysis shows that the filtering power of measurements on a given party generally increases as the number of auxiliary subsystems increases. How to exploit auxiliary subsystems more effectively remains an important problem in the design of improved entanglement-assisted discrimination protocols.
\begin{appendix}

\section{The proof of Theorem 3}\label{T3}

Suppose that the whole quantum system is shared among Alice, Bob, Charlie, and Dave. First, subsystem \(D\) is teleported to Charlie by using the entanglement resource \(\left|\phi^{+}(d_4)\right\rangle\), and the resulting joint subsystem is denoted by \(\widetilde{C}\). To distinguish the states locally, Alice and Charlie further share a maximally entangled state \(\vert \phi^{+}(2)\rangle_{a\tilde{c}}\). The initial state is
\[
\begin{aligned}
\left|\psi\right\rangle_{AB\widetilde{C}} \otimes \left|\phi^{+}(2)\right\rangle_{a\tilde{c}}.
\end{aligned}
\]
In fact, it is sufficient to distinguish the subsets \(H_r\) \((r=1,2,\ldots,18)\), because each subset is LOCC distinguishable. The protocol proceeds as follows.

\textbf{Step 1.} Alice performs the measurement
\[
\begin{aligned}
\mathcal{M}_1 \equiv \bigl\{&
M_{1,1}=P\bigl[|0\rangle_A;|0\rangle_a\bigr]
+P\bigl[(|1\rangle,\ldots,|d_1'\rangle)_A;|1\rangle_a\bigr],\\
& M_{1,2}=I-M_{1,1}\bigr\}.
\end{aligned}
\]
Suppose that the outcome corresponding to \(M_{1,1}\) occurs. Then
\[
\begin{aligned}
&H_1 \rightarrow |0\rangle|0\rangle|\beta_j\circ\alpha_i\rangle|00\rangle,\\
&H_2 \rightarrow |0\rangle|0\rangle|d_3'\circ\beta_j\rangle|00\rangle,\\
&H_3 \rightarrow |0\rangle|\alpha_i\rangle|d_3'\circ d_4'\rangle|00\rangle
+|\beta_j'\rangle|\alpha_i\rangle|d_3'\circ d_4'\rangle|11\rangle,\\
&H_4 \rightarrow |d_1'\rangle|\beta_j\rangle|d_3'\circ d_4'\rangle|11\rangle,\\
&H_5 \rightarrow |d_1'\rangle|d_2'\rangle|\gamma_m\circ\alpha_i\rangle|11\rangle,\\
&H_6 \rightarrow |d_1'\rangle|d_2'\rangle|0\circ\gamma_m\rangle|11\rangle,\\
&H_7 \rightarrow |\gamma_m\rangle|\alpha_i\rangle|0\circ0\rangle|11\rangle,\\
&H_8 \rightarrow |0\rangle|\gamma_m\rangle|0\circ0\rangle|00\rangle,\\
&H_9 \rightarrow |0\rangle|\kappa_I\rangle|0\circ\gamma_m\rangle|00\rangle,\\
&H_{10} \rightarrow |d_1'\rangle|\kappa_I\rangle|d_3'\circ\beta_j\rangle|11\rangle,\\
&H_{11} \rightarrow |\kappa_I\rangle|d_2'\rangle|\beta_j\circ d_4'\rangle|11\rangle,\\
&H_{12} \rightarrow |\kappa_I\rangle|0\rangle|\gamma_m\circ0\rangle|11\rangle,\\
&H_{13} \rightarrow |0\rangle|\gamma_m\rangle|\kappa_I\circ d_4'\rangle|00\rangle,\\
&H_{14} \rightarrow |d_1'\rangle|\beta_j\rangle|\kappa_I\circ0\rangle|11\rangle,\\
&H_{15} \rightarrow |\kappa_I\rangle|d_2'\rangle|d_3'\circ\beta_j\rangle|11\rangle,\\
&H_{16} \rightarrow |\kappa_I\rangle|0\rangle|0\circ\gamma_m\rangle|11\rangle,\\
&H_{17} \rightarrow |0\rangle|d_2'\rangle|(0\pm d_3')\circ\kappa_I\rangle|00\rangle,\\
&H_{18} \rightarrow |d_1'\rangle|0\rangle|(0\pm d_3')\circ\kappa_I\rangle|11\rangle.
\end{aligned}
\]
Here
\[
|\beta_j'\rangle=\sum_{u=1}^{d_1-2}\omega_{d_1-1}^{ju}|u\rangle.
\]
Moreover, \(|00\rangle\) and \(|11\rangle\) stand for \(|00\rangle_{a\tilde{c}}\) and \(|11\rangle_{a\tilde{c}}\), respectively. The symbol ``\(\circ\)'' denotes the union of parties. For example,
\[
|\psi_1\circ\psi_2\rangle_{\widetilde{C}}=|\psi_1\rangle_C|\psi_2\rangle_D
\]
for any two states \(|\psi_1\rangle_C\) and \(|\psi_2\rangle_D\). Likewise,
\[
|(0,\ldots,d_3')\circ(0,\ldots,d_4')\rangle_{\widetilde{C}}
\]
denotes the ordered list
\[
(|00\rangle,\ldots,|0d_4'\rangle,|10\rangle,\ldots,|d_3'd_4'\rangle)_{\widetilde{C}}.
\]

\textbf{Step 2.} \(\widetilde{C}\) performs the measurement
\[
\begin{aligned}
\mathcal{M}_2\equiv \bigl\{
M_{2,1}=P[|00\rangle_{\widetilde{C}};|1\rangle_{\tilde{c}}],
\;
M_{2,2}=I-M_{2,1}
\bigr\}.
\end{aligned}
\]
If outcome \(M_{2,1}\) occurs, then the subset is \(H_7\). Otherwise, the state belongs to one of the remaining subsets.

\textbf{Step 3.} Alice performs the measurement
\[
\begin{aligned}
\mathcal{M}_3\equiv \bigl\{
M_{3,1}=P[|d_1'\rangle_A;|1\rangle_a],
\;
M_{3,2}=I-M_{3,1}
\bigr\}.
\end{aligned}
\]

If outcome \(M_{3,1}\) occurs, then the state belongs to one of
\[
\{H_4,H_5,H_6,H_{10},H_{14},H_{18}\}.
\]
Bob then performs the measurement
\[
\begin{aligned}
\mathcal{M}_4\equiv \bigl\{
M_{4,1}=|d_2'\rangle_B\langle d_2'|,
\;
M_{4,2}=I-M_{4,1}
\bigr\}.
\end{aligned}
\]
If outcome \(M_{4,1}\) occurs, the remaining possibilities are \(\{H_5,H_6\}\), which are LOCC distinguishable. For outcome \(M_{4,2}\), the remaining subsets are \(H_4,H_{10},H_{14},H_{18}\). Next, \(\widetilde{C}\) performs
\[
\mathcal{M}_4' \equiv \bigl\{
M_{4,1}'=P[|d_3'd_4'\rangle_{\widetilde{C}};|1\rangle_{\tilde{c}}],
\;
M_{4,2}'=P[|(1,\ldots,d_3'-1)\circ 0\rangle_{\widetilde{C}};|1\rangle_{\tilde{c}}],
\;
M_{4,3}'=I-M_{4,1}'-M_{4,2}'
\bigr\}.
\]
The corresponding results for \(M_{4,1}'\), \(M_{4,2}'\), and \(M_{4,3}'\) are \(H_4, H_{14}\) and \(\{H_{10},H_{18}\}\), respectively. These sets are all LOCC distinguishable.

If outcome \(M_{3,2}\) occurs, then the state belongs to one of
\[
\{H_1,H_2,H_3,H_8,H_9,H_{11},H_{12},H_{13},H_{15},H_{16},H_{17}\}.
\]
We proceed to the next step.

\textbf{Step 4.} \(\widetilde{C}\) performs the measurement
\[
\begin{aligned}
\mathcal{M}_5\equiv \bigl\{&
M_{5,1}=P[|(1,\ldots,d_3')\circ(0,\ldots,d_4'-1)\rangle_{\widetilde{C}};|1\rangle_{\tilde{c}}]\\
&\qquad\qquad
+P[|(0,\ldots,d_3'-1)\circ (1,\ldots,d_4')\rangle_{\widetilde{C}};|1\rangle_{\tilde{c}}],\\
&M_{5,2}=P[|d_3'd_4'\rangle_{\widetilde{C}};I_{\tilde{c}}],\\
&M_{5,3}=I-M_{5,1}-M_{5,2}
\bigr\}.
\end{aligned}
\]
If outcome \(M_{5,1}\) occurs, then the remaining subsets are
\[
\{H_{11},H_{12},H_{15},H_{16}\}.
\]
Bob then performs
\[
\mathcal{M}_5'\equiv \bigl\{
M_{5,1}'=|0\rangle_B\langle 0|,
\;
M_{5,2}'=I-M_{5,1}'
\bigr\}.
\]
The corresponding results for \(M_{5,1}'\) and \(M_{5,2}'\) are \(\{H_{12},H_{16}\}\) and \(\{H_{11},H_{15}\}\), respectively. These sets are LOCC distinguishable.

If outcome \(M_{5,2}\) occurs, then the subset is \(H_3\). If outcome \(M_{5,3}\) occurs, then the remaining possibilities are
\[
\{H_1,H_2,H_8,H_9,H_{13},H_{17}\}.
\]

\textbf{Step 5.} Bob performs the measurement
\[
\begin{aligned}
\mathcal{M}_6\equiv \bigl\{
M_{6,1}=|0\rangle_B\langle 0|,
\;
M_{6,2}=I-M_{6,1}
\bigr\}.
\end{aligned}
\]
If outcome \(M_{6,1}\) occurs, then the remaining possibilities are \(\{H_1,H_2\}\), which are LOCC distinguishable. If outcome \(M_{6,2}\) occurs, then the remaining subsets are
\[
\{H_8,H_9,H_{13},H_{17}\}.
\]

\textbf{Step 6.} \(\widetilde{C}\) performs the measurement
\[
\begin{aligned}
\mathcal{M}_7\equiv \bigl\{&
M_{7,1}=P[|00\rangle_{\widetilde{C}};|0\rangle_{\tilde{c}}],\\
&M_{7,2}=P[|(1,\ldots,d_3'-1)\circ d_4'\rangle_{\widetilde{C}};|0\rangle_{\tilde{c}}],\\
&M_{7,3}=I-M_{7,1}-M_{7,2}
\bigr\}.
\end{aligned}
\]
If outcome \(M_{7,1}\) occurs, then the subset is \(H_8\). If outcome \(M_{7,2}\) occurs, then the subset is \(H_{13}\). If outcome \(M_{7,3}\) occurs, then the remaining possibilities are \(\{H_9,H_{17}\}\), which are LOCC distinguishable.

If outcome \(M_{1,2}\) occurs in Step 1, a similar argument yields a protocol that perfectly distinguishes the states in set~(\ref{set4.1}) by LOCC.

\section{The proof of Theorem 4}\label{T4}

Similarly, it is sufficient to distinguish the subsets \(H_r\) \((r=1,2,\ldots,18)\) in set~(\ref{set4.1}). Without using teleportation, let Alice and Dave share an EPR state \(\vert \phi^{+}(2)\rangle\), and let Charlie and Dave share a maximally entangled state \(\vert \phi^{+}(3)\rangle\). The initial state is
\[
\begin{aligned}
\left|\psi\right\rangle_{ABCD}\otimes\left|\phi^{+}(2)\right\rangle_{av_1}\otimes\left|\phi^{+}(3)\right\rangle_{cv_2},
\end{aligned}
\]
where \(a\) is Alice's ancillary system, \(c\) is Charlie's ancillary system, and \(v_1,v_2\) are Dave's ancillary systems.

\textbf{Step 1.} Alice and Charlie perform the measurements
\[
\begin{aligned}
\mathcal{M}_1\equiv \bigl\{&
M_{1,1}=P\bigl[|0\rangle_A;|0\rangle_a\bigr]
+P\bigl[(|1\rangle,\ldots,|d_1'\rangle)_A;|1\rangle_a\bigr],\\
&M_{1,2}=I-M_{1,1}
\bigr\},
\end{aligned}
\]
and
\[
\begin{aligned}
\mathcal{M}_2\equiv \bigl\{&
M_{2,1}=P\bigl[|0\rangle_C;|0\rangle_c\bigr]
+P\bigl[(|1\rangle,\ldots,|d_3'-1\rangle)_C;|1\rangle_c\bigr]
+P\bigl[|d_3'\rangle_C;|2\rangle_c\bigr],\\
&M_{2,2}=P\bigl[|0\rangle_C;|1\rangle_c\bigr]
+P\bigl[(|1\rangle,\ldots,|d_3'-1\rangle)_C;|2\rangle_c\bigr]
+P\bigl[|d_3'\rangle_C;|0\rangle_c\bigr],\\
&M_{2,3}=I-M_{2,1}-M_{2,2}
\bigr\},
\end{aligned}
\]
respectively. Suppose that the outcomes corresponding to \(M_{1,1}\) and \(M_{2,1}\) occur. Then
\[
\begin{aligned}
&H_1 \rightarrow |0\rangle_1|0\rangle_2|\beta_j'\rangle_3|\alpha_i\rangle_4|00\rangle_{av_1}|11\rangle_{cv_2}\\
&\qquad\qquad+|0\rangle_1|0\rangle_2|0\rangle_3|\alpha_i\rangle_4|00\rangle_{av_1}|00\rangle_{cv_2},\\
&H_2\rightarrow |0\rangle_1|0\rangle_2|d_3'\rangle_3|\beta_j\rangle_4|00\rangle_{av_1}|22\rangle_{cv_2},\\
&H_3 \rightarrow |0\rangle_1|\alpha_i\rangle_2|d_3'\rangle_3|d_4'\rangle_4|00\rangle_{av_1}|22\rangle_{cv_2}\\
&\qquad\qquad+|\beta_j'\rangle_1|\alpha_i\rangle_2|d_3'\rangle_3|d_4'\rangle_4|11\rangle_{av_1}|22\rangle_{cv_2},\\
&H_4 \rightarrow |d_1'\rangle_1|\beta_j\rangle_2|d_3'\rangle_3|d_4'\rangle_4|11\rangle_{av_1}|22\rangle_{cv_2},\\
&H_5 \rightarrow |d_1'\rangle_1|d_2'\rangle_2|\gamma_m'\rangle_3|\alpha_i\rangle_4|11\rangle_{av_1}|11\rangle_{cv_2}\\
&\qquad\qquad+|d_1'\rangle_1|d_2'\rangle_2|d_3'\rangle_3|\alpha_i\rangle_4|11\rangle_{av_1}|22\rangle_{cv_2},\\
&H_6 \rightarrow |d_1'\rangle_1|d_2'\rangle_2|0\rangle_3|\gamma_m\rangle_4|11\rangle_{av_1}|00\rangle_{cv_2},\\
&H_7 \rightarrow |\gamma_m\rangle_1|\alpha_i\rangle_2|0\rangle_3|0\rangle_4|11\rangle_{av_1}|00\rangle_{cv_2},\\
&H_8 \rightarrow |0\rangle_1|\gamma_m\rangle_2|0\rangle_3|0\rangle_4|00\rangle_{av_1}|00\rangle_{cv_2},\\
&H_9 \rightarrow |0\rangle_1|\kappa_I\rangle_2|0\rangle_3|\gamma_m\rangle_4|00\rangle_{av_1}|00\rangle_{cv_2},\\
&H_{10} \rightarrow |d_1'\rangle_1|\kappa_I\rangle_2|d_3'\rangle_3|\beta_j\rangle_4|11\rangle_{av_1}|22\rangle_{cv_2},\\
&H_{11} \rightarrow |\kappa_I\rangle_1|d_2'\rangle_2|\beta_j'\rangle_3|d_4'\rangle_4|11\rangle_{av_1}|11\rangle_{cv_2}\\
&\qquad\qquad+|\kappa_I\rangle_1|d_2'\rangle_2|0\rangle_3|d_4'\rangle_4|11\rangle_{av_1}|00\rangle_{cv_2},\\
&H_{12} \rightarrow |\kappa_I\rangle_1|0\rangle_2|\gamma_m'\rangle_3|0\rangle_4|11\rangle_{av_1}|11\rangle_{cv_2}\\
&\qquad\qquad+|\kappa_I\rangle_1|0\rangle_2|d_3'\rangle_3|0\rangle_4|11\rangle_{av_1}|22\rangle_{cv_2},\\
&H_{13} \rightarrow |0\rangle_1|\gamma_m\rangle_2|\kappa_I\rangle_3|d_4'\rangle_4|00\rangle_{av_1}|11\rangle_{cv_2},\\
&H_{14} \rightarrow |d_1'\rangle_1|\beta_j\rangle_2|\kappa_I\rangle_3|0\rangle_4|11\rangle_{av_1}|11\rangle_{cv_2},\\
&H_{15} \rightarrow |\kappa_I\rangle_1|d_2'\rangle_2|d_3'\rangle_3|\beta_j\rangle_4|11\rangle_{av_1}|22\rangle_{cv_2},\\
&H_{16} \rightarrow |\kappa_I\rangle_1|0\rangle_2|0\rangle_3|\gamma_m\rangle_4|11\rangle_{av_1}|00\rangle_{cv_2},\\
&H_{17} \rightarrow |0\rangle_1|d_2'\rangle_2|0\rangle_3|\kappa_I\rangle_4|00\rangle_{av_1}|00\rangle_{cv_2}\\
&\qquad\qquad\pm|0\rangle_1|d_2'\rangle_2|d_3'\rangle_3|\kappa_I\rangle_4|00\rangle_{av_1}|22\rangle_{cv_2},\\
&H_{18} \rightarrow |d_1'\rangle_1|0\rangle_2|0\rangle_3|\kappa_I\rangle_4|11\rangle_{av_1}|00\rangle_{cv_2}\\
&\qquad\qquad\pm|d_1'\rangle_1|0\rangle_2|d_3'\rangle_3|\kappa_I\rangle_4|11\rangle_{av_1}|22\rangle_{cv_2},
\end{aligned}
\]
where
\[
|\beta_j'\rangle_k=\sum_{u=1}^{d_k-2}\omega_{d_k-1}^{ju}|u\rangle,
\qquad
|\gamma_m'\rangle_k=\sum_{u=0}^{d_k-3}\omega_{d_k-1}^{mu}|u+1\rangle
\]
for \(k=1,2,3,4\).

\textbf{Step 2.} Dave performs the measurement
\[
\begin{aligned}
\mathcal{M}_3\equiv \bigl\{
M_{3,1}=P[|0\rangle_D;|1\rangle_{v_1};|0\rangle_{v_2}],
\;
M_{3,2}=I-M_{3,1}
\bigr\}.
\end{aligned}
\]
If outcome \(M_{3,1}\) occurs, then the subset is \(H_7\). Otherwise, the state belongs to one of the remaining subsets.

\textbf{Step 3.} Alice performs the measurement
\[
\begin{aligned}
\mathcal{M}_4\equiv \bigl\{
M_{4,1}=|d_1'\rangle_A\langle d_1'|,
\;
M_{4,2}=I-M_{4,1}
\bigr\}.
\end{aligned}
\]
The corresponding results are
\[
\begin{aligned}
&M_{4,1}\Rightarrow
\begin{cases}
H_4,H_5,H_6,\\
H_{10},H_{14},H_{18},
\end{cases}\\
&M_{4,2}\Rightarrow
\begin{cases}
H_1,H_2,H_3,H_8,H_9,H_{11},\\
H_{12},H_{13},H_{15},H_{16},H_{17}.
\end{cases}
\end{aligned}
\]

\textbf{Step 4.} For the branch corresponding to \(M_{4,1}\), Bob performs
\[
\mathcal{M}_5\equiv \bigl\{
M_{5,1}=|d_2'\rangle_B\langle d_2'|,
\;
M_{5,2}=I-M_{5,1}
\bigr\}.
\]
If outcome \(M_{5,1}\) occurs, the remaining possibilities are \(\{H_5,H_6\}\), which can be perfectly distinguished by LOCC. Otherwise, we proceed to the next step.

\textbf{Step 5.} Dave performs the measurement
\[
\begin{aligned}
\mathcal{M}_6\equiv \bigl\{&
M_{6,1}=P[|d_4'\rangle_D;|1\rangle_{v_1};|2\rangle_{v_2}],\\
&M_{6,2}=P[|0\rangle_D;|1\rangle_{v_1};|1\rangle_{v_2}],\\
&M_{6,3}=I-M_{6,1}-M_{6,2}
\bigr\}.
\end{aligned}
\]
If outcome \(M_{6,1}\) occurs, then the subset is \(H_4\); if outcome \(M_{6,2}\) occurs, then the subset is \(H_{14}\); if outcome \(M_{6,3}\) occurs, then the remaining possibilities are \(\{H_{10},H_{18}\}\), which are LOCC distinguishable.

\textbf{Step 4$'$.} For the branch corresponding to \(M_{4,2}\) in Step 3, Dave performs
\[
\begin{aligned}
\mathcal{M}_7\equiv \bigl\{&
M_{7,1}=P[|d_4'\rangle_D;(|0\rangle,|1\rangle)_{v_1};|2\rangle_{v_2}],\\
&M_{7,2}=I-M_{7,1}
\bigr\}.
\end{aligned}
\]
If outcome \(M_{7,1}\) occurs, then the subset is \(H_3\). Otherwise, we continue.

\textbf{Step 5$'$.} Bob performs the measurement
\[
\begin{aligned}
\mathcal{M}_8\equiv \bigl\{
M_{8,1}=|0\rangle_B\langle 0|,
\;
M_{8,2}=I-M_{8,1}
\bigr\}.
\end{aligned}
\]
The corresponding results are
\[
\begin{aligned}
&M_{8,1}\Rightarrow H_1,H_2,H_{12},H_{16},\\
&M_{8,2}\Rightarrow H_8,H_9,H_{11},H_{13},H_{15},H_{17}.
\end{aligned}
\]

For the branch corresponding to \(M_{8,1}\), Alice performs
\[
\mathcal{M}_9\equiv \bigl\{
M_{9,1}=P[|0\rangle_A;|0\rangle_a],
\;
M_{9,2}=I-M_{9,1}
\bigr\}.
\]
If outcome \(M_{9,1}\) occurs, then the remaining possibilities are \(\{H_1,H_2\}\), which can be perfectly distinguished by measuring subsystem \(C\). Otherwise, the remaining possibilities are \(\{H_{12},H_{16}\}\), which can be perfectly distinguished by measuring subsystem \(D\).

For the branch corresponding to \(M_{8,2}\), Alice again performs
\[
\mathcal{M}_9\equiv \bigl\{
M_{9,1}=P[|0\rangle_A;|0\rangle_a],
\;
M_{9,2}=I-M_{9,1}
\bigr\}.
\]
If outcome \(M_{9,1}\) occurs, then the remaining possibilities are
\[
\{H_8,H_9,H_{13},H_{17}\}.
\]
Dave then performs
\[
\mathcal{M}_{10}\equiv \bigl\{
M_{10,1}=P[|0\rangle_D;|0\rangle_{v_1};|0\rangle_{v_2}],
\;
M_{10,2}=P[|d_4'\rangle_D;|0\rangle_{v_1};|1\rangle_{v_2}],
\;
M_{10,3}=I-M_{10,1}-M_{10,2}
\bigr\}.
\]
If outcome \(M_{10,1}\) occurs, then the subset is \(H_8\); if outcome \(M_{10,2}\) occurs, then the subset is \(H_{13}\); if outcome \(M_{10,3}\) occurs, then the remaining possibilities are \(\{H_9,H_{17}\}\), which are LOCC distinguishable.

If outcome \(M_{9,2}\) occurs, then the remaining possibilities are \(\{H_{11},H_{15}\}\), which are LOCC distinguishable.

If any other outcomes occur in Step 1, one can construct a similar protocol to distinguish the states in set~(\ref{set4.1}) locally.

\section{The proof of Theorem 5}\label{T5}

It is known that each subset \(H_r\) \((r=1,2,\ldots,18)\) in set~(\ref{set4.1}) is locally distinguishable. Thus, we only need to identify these subsets by LOCC. Let one EPR state \(\vert \phi^{+}(2)\rangle\) be shared between Alice and Dave, and another EPR state \(\vert \phi^{+}(2)\rangle\) be shared between Charlie and Dave. The initial state is
\[
\begin{aligned}
\left|\psi\right\rangle_{ABCD}\otimes\left|\phi^{+}(2)\right\rangle_{av_1}\otimes\left|\phi^{+}(2)\right\rangle_{cv_2},
\end{aligned}
\]
where \(a\) is Alice's ancillary system, \(c\) is Charlie's ancillary system, and \(v_1,v_2\) are Dave's ancillary systems.

\textbf{Step 1.} Alice and Charlie perform the measurements
\[
\begin{aligned}
\mathcal{M}_1\equiv \bigl\{&
M_{1,1}=P\bigl[|0\rangle_A;|0\rangle_a\bigr]
+P\bigl[(|1\rangle,\ldots,|d_1'\rangle)_A;|1\rangle_a\bigr],\\
&M_{1,2}=I-M_{1,1}
\bigr\},
\end{aligned}
\]
and
\[
\begin{aligned}
\mathcal{M}_2\equiv \bigl\{&
M_{2,1}=P\bigl[|0\rangle_C;|0\rangle_c\bigr]
+P\bigl[(|1\rangle,\ldots,|d_3'\rangle)_C;|1\rangle_c\bigr],\\
&M_{2,2}=I-M_{2,1}
\bigr\},
\end{aligned}
\]
respectively. Suppose that the outcomes corresponding to \(M_{1,1}\) and \(M_{2,1}\) occur. Then
\[
\begin{aligned}
&H_1\rightarrow |0\rangle_1|0\rangle_2|\beta_j'\rangle_3|\alpha_i\rangle_4|00\rangle_{av_1}|11\rangle_{cv_2}\\
&\qquad\qquad+|0\rangle_1|0\rangle_2|0\rangle_3|\alpha_i\rangle_4|00\rangle_{av_1}|00\rangle_{cv_2},\\
&H_2\rightarrow |0\rangle_1|0\rangle_2|d_3'\rangle_3|\beta_j\rangle_4|00\rangle_{av_1}|11\rangle_{cv_2},\\
&H_3\rightarrow |0\rangle_1|\alpha_i\rangle_2|d_3'\rangle_3|d_4'\rangle_4|00\rangle_{av_1}|11\rangle_{cv_2}\\
&\qquad\qquad+|\beta_j'\rangle_1|\alpha_i\rangle_2|d_3'\rangle_3|d_4'\rangle_4|11\rangle_{av_1}|11\rangle_{cv_2},\\
&H_4\rightarrow |d_1'\rangle_1|\beta_j\rangle_2|d_3'\rangle_3|d_4'\rangle_4|11\rangle_{av_1}|11\rangle_{cv_2},\\
&H_5\rightarrow |d_1'\rangle_1|d_2'\rangle_2|\gamma_m\rangle_3|\alpha_i\rangle_4|11\rangle_{av_1}|11\rangle_{cv_2},\\
&H_6\rightarrow |d_1'\rangle_1|d_2'\rangle_2|0\rangle_3|\gamma_m\rangle_4|11\rangle_{av_1}|00\rangle_{cv_2},\\
&H_7\rightarrow |\gamma_m\rangle_1|\alpha_i\rangle_2|0\rangle_3|0\rangle_4|11\rangle_{av_1}|00\rangle_{cv_2},\\
&H_8\rightarrow |0\rangle_1|\gamma_m\rangle_2|0\rangle_3|0\rangle_4|00\rangle_{av_1}|00\rangle_{cv_2},\\
&H_9\rightarrow |0\rangle_1|\kappa_I\rangle_2|0\rangle_3|\gamma_m\rangle_4|00\rangle_{av_1}|00\rangle_{cv_2},\\
&H_{10}\rightarrow |d_1'\rangle_1|\kappa_I\rangle_2|d_3'\rangle_3|\beta_j\rangle_4|11\rangle_{av_1}|11\rangle_{cv_2},\\
&H_{11}\rightarrow |\kappa_I\rangle_1|d_2'\rangle_2|\beta_j'\rangle_3|d_4'\rangle_4|11\rangle_{av_1}|11\rangle_{cv_2}\\
&\qquad\qquad+|\kappa_I\rangle_1|d_2'\rangle_2|0\rangle_3|d_4'\rangle_4|11\rangle_{av_1}|00\rangle_{cv_2},\\
&H_{12}\rightarrow |\kappa_I\rangle_1|0\rangle_2|\gamma_m\rangle_3|0\rangle_4|11\rangle_{av_1}|11\rangle_{cv_2},\\
&H_{13}\rightarrow |0\rangle_1|\gamma_m\rangle_2|\kappa_I\rangle_3|d_4'\rangle_4|00\rangle_{av_1}|11\rangle_{cv_2},\\
&H_{14}\rightarrow |d_1'\rangle_1|\beta_j\rangle_2|\kappa_I\rangle_3|0\rangle_4|11\rangle_{av_1}|11\rangle_{cv_2},\\
&H_{15}\rightarrow |\kappa_I\rangle_1|d_2'\rangle_2|d_3'\rangle_3|\beta_j\rangle_4|11\rangle_{av_1}|11\rangle_{cv_2},\\
&H_{16}\rightarrow |\kappa_I\rangle_1|0\rangle_2|0\rangle_3|\gamma_m\rangle_4|11\rangle_{av_1}|00\rangle_{cv_2},\\
&H_{17}\rightarrow |0\rangle_1|d_2'\rangle_2|0\rangle_3|\kappa_I\rangle_4|00\rangle_{av_1}|00\rangle_{cv_2}\\
&\qquad\qquad\pm|0\rangle_1|d_2'\rangle_2|d_3'\rangle_3|\kappa_I\rangle_4|00\rangle_{av_1}|11\rangle_{cv_2},\\
&H_{18}\rightarrow |d_1'\rangle_1|0\rangle_2|0\rangle_3|\kappa_I\rangle_4|11\rangle_{av_1}|00\rangle_{cv_2}\\
&\qquad\qquad\pm|d_1'\rangle_1|0\rangle_2|d_3'\rangle_3|\kappa_I\rangle_4|11\rangle_{av_1}|11\rangle_{cv_2}.
\end{aligned}
\]
Here
\[
|\beta_j'\rangle_k=\sum_{u=1}^{d_k-2}\omega_{d_k-1}^{ju}|u\rangle
\]
for \(k=1,2,3,4\).

\textbf{Step 2.} Dave performs
\[
\begin{aligned}
\mathcal{M}_3\equiv \bigl\{
M_{3,1}=P[|0\rangle_D;|1\rangle_{v_1};|0\rangle_{v_2}],
\;
M_{3,2}=I-M_{3,1}
\bigr\}.
\end{aligned}
\]
If outcome \(M_{3,1}\) occurs, then the subset is \(H_7\). Otherwise, the state belongs to one of the remaining subsets.

\textbf{Step 3.} Alice performs
\[
\begin{aligned}
\mathcal{M}_4\equiv \bigl\{
M_{4,1}=|d_1'\rangle_A\langle d_1'|,
\;
M_{4,2}=I-M_{4,1}
\bigr\}.
\end{aligned}
\]
The corresponding results are
\[
\begin{aligned}
&M_{4,1}\Rightarrow
\begin{cases}
H_4,H_5,H_6,\\
H_{10},H_{14},H_{18},
\end{cases}\\
&M_{4,2}\Rightarrow
\begin{cases}
H_1,H_2,H_3,H_8,H_9,H_{11},\\
H_{12},H_{13},H_{15},H_{16},H_{17}.
\end{cases}
\end{aligned}
\]

\textbf{Step 4.} If outcome \(M_{4,1}\) occurs, Bob performs
\[
\mathcal{M}_5\equiv \bigl\{
M_{5,1}=|d_2'\rangle_B\langle d_2'|,
\;
M_{5,2}=I-M_{5,1}
\bigr\}.
\]
If outcome \(M_{5,1}\) occurs, then the remaining possibilities are \(\{H_5,H_6\}\), which can be perfectly distinguished by measuring subsystem \(C\). Otherwise, Dave performs
\[
\mathcal{M}_6\equiv \bigl\{
M_{6,1}=P[|d_4'\rangle_D;|1\rangle_{v_1};|1\rangle_{v_2}],
\;
M_{6,2}=I-M_{6,1}
\bigr\}.
\]
If outcome \(M_{6,1}\) occurs, then the subset is \(H_4\). Otherwise, Charlie performs
\[
\mathcal{M}_7\equiv \bigl\{
M_{7,1}=|1\rangle_C\langle 1|+\cdots+|d_3'-1\rangle_C\langle d_3'-1|,
\;
M_{7,2}=I-M_{7,1}
\bigr\}.
\]
The corresponding results are \(H_{14}\) and \(\{H_{10},H_{18}\}\), respectively. These sets are LOCC distinguishable.

If outcome \(M_{4,2}\) occurs, let Bob and Dave additionally share a maximally entangled state \(\left|\phi^{+}(2)\right\rangle\). Then the relevant initial state becomes
\[
\begin{aligned}
\left|\psi\right\rangle_{ABCD}
\otimes\left|\phi^{+}(2)\right\rangle_{av_1}
\otimes\left|\phi^{+}(2)\right\rangle_{bv_3}
\otimes\left|\phi^{+}(2)\right\rangle_{cv_2}.
\end{aligned}
\]

Bob now performs
\[
\begin{aligned}
\mathcal{M}_8\equiv \bigl\{&
M_{8,1}=P\bigl[|0\rangle_B;|0\rangle_b\bigr]
+P\bigl[(|1\rangle,\ldots,|d_2'\rangle)_B;|1\rangle_b\bigr],\\
&M_{8,2}=I-M_{8,1}
\bigr\}.
\end{aligned}
\]
Suppose that outcome \(M_{8,1}\) occurs. Then
\[
\begin{aligned}
&H_1 \rightarrow |0\rangle_1|0\rangle_2|\beta_j'\rangle_3|\alpha_i\rangle_4|00\rangle_{av_1}|00\rangle_{bv_3}|11\rangle_{cv_2}\\
&\qquad\qquad+|0\rangle_1|0\rangle_2|0\rangle_3|\alpha_i\rangle_4|00\rangle_{av_1}|00\rangle_{bv_3}|00\rangle_{cv_2},\\
&H_2 \rightarrow |0\rangle_1|0\rangle_2|d_3'\rangle_3|\beta_j\rangle_4|00\rangle_{av_1}|00\rangle_{bv_3}|11\rangle_{cv_2},\\
&H_3 \rightarrow |\beta_j'\rangle_1|\alpha_i'\rangle_2|d_3'\rangle_3|d_4'\rangle_4|11\rangle_{av_1}|11\rangle_{bv_3}|11\rangle_{cv_2}\\
&\qquad\qquad+|\beta_j'\rangle_1|0\rangle_2|d_3'\rangle_3|d_4'\rangle_4|11\rangle_{av_1}|00\rangle_{bv_3}|11\rangle_{cv_2}\\
&\qquad\qquad+|0\rangle_1|\alpha_i'\rangle_2|d_3'\rangle_3|d_4'\rangle_4|00\rangle_{av_1}|11\rangle_{bv_3}|11\rangle_{cv_2}\\
&\qquad\qquad+|0\rangle_1|0\rangle_2|d_3'\rangle_3|d_4'\rangle_4|00\rangle_{av_1}|00\rangle_{bv_3}|11\rangle_{cv_2},\\
&H_8 \rightarrow |0\rangle_1|\gamma_m\rangle_2|0\rangle_3|0\rangle_4|00\rangle_{av_1}|11\rangle_{bv_3}|00\rangle_{cv_2},\\
&H_9 \rightarrow |0\rangle_1|\kappa_I\rangle_2|0\rangle_3|\gamma_m\rangle_4|00\rangle_{av_1}|11\rangle_{bv_3}|00\rangle_{cv_2},\\
&H_{11} \rightarrow |\kappa_I\rangle_1|d_2'\rangle_2|\beta_j'\rangle_3|d_4'\rangle_4|11\rangle_{av_1}|11\rangle_{bv_3}|11\rangle_{cv_2}\\
&\qquad\qquad+|\kappa_I\rangle_1|d_2'\rangle_2|0\rangle_3|d_4'\rangle_4|11\rangle_{av_1}|11\rangle_{bv_3}|00\rangle_{cv_2},\\
&H_{12} \rightarrow |\kappa_I\rangle_1|0\rangle_2|\gamma_m\rangle_3|0\rangle_4|11\rangle_{av_1}|00\rangle_{bv_3}|11\rangle_{cv_2},\\
&H_{13} \rightarrow |0\rangle_1|\gamma_m\rangle_2|\kappa_I\rangle_3|d_4'\rangle_4|00\rangle_{av_1}|11\rangle_{bv_3}|11\rangle_{cv_2},\\
&H_{15} \rightarrow |\kappa_I\rangle_1|d_2'\rangle_2|d_3'\rangle_3|\beta_j\rangle_4|11\rangle_{av_1}|11\rangle_{bv_3}|11\rangle_{cv_2},\\
&H_{16} \rightarrow |\kappa_I\rangle_1|0\rangle_2|0\rangle_3|\gamma_m\rangle_4|11\rangle_{av_1}|00\rangle_{bv_3}|00\rangle_{cv_2},\\
&H_{17} \rightarrow |0\rangle_1|d_2'\rangle_2|0\rangle_3|\kappa_I\rangle_4|00\rangle_{av_1}|11\rangle_{bv_3}|00\rangle_{cv_2}\\
&\qquad\qquad\pm|0\rangle_1|d_2'\rangle_2|d_3'\rangle_3|\kappa_I\rangle_4|00\rangle_{av_1}|11\rangle_{bv_3}|11\rangle_{cv_2},
\end{aligned}
\]
where
\[
|\alpha_i'\rangle_k=\sum_{u=1}^{d_k-1}\omega_{d_k}^{iu}|u\rangle
\]
for \(k=1,2,3,4\).

We then proceed as follows.

\textbf{Step 5.} Dave performs
\[
\begin{aligned}
\mathcal{M}_9\equiv \bigl\{&
M_{9,1}=P[|0\rangle_D;|0\rangle_{v_1};|1\rangle_{v_3};|0\rangle_{v_2}],\\
&M_{9,2}=P[|0\rangle_D;|1\rangle_{v_1};|0\rangle_{v_3};|1\rangle_{v_2}],\\
&M_{9,3}=P[(|0\rangle,\ldots,|d_4'-1\rangle)_D;|1\rangle_{v_1};|1\rangle_{v_3};|1\rangle_{v_2}],\\
&M_{9,4}=P[(|1\rangle,\ldots,|d_4'\rangle)_D;|1\rangle_{v_1};|0\rangle_{v_3};|0\rangle_{v_2}],\\
&M_{9,5}=P[(|1\rangle,\ldots,|d_4'\rangle)_D;|0\rangle_{v_1};|1\rangle_{v_3};|0\rangle_{v_2}]+P[(|1\rangle,\ldots,|d_4'-1\rangle)_D;|0\rangle_{v_1};|1\rangle_{v_3};|1\rangle_{v_2}],\\
&M_{9,6}=I-M_{9,1}-\cdots-M_{9,5}
\bigr\}.
\end{aligned}
\]
The corresponding subsets are
\[
\begin{aligned}
&M_{9,1}\Rightarrow H_8,\qquad &&M_{9,4}\Rightarrow H_{16},\\
&M_{9,2}\Rightarrow H_{12},\qquad &&M_{9,5}\Rightarrow H_9,H_{17},\\
&M_{9,3}\Rightarrow H_{15},\qquad &&M_{9,6}\Rightarrow H_1,H_2,H_3,H_{11},H_{13}.
\end{aligned}
\]
The pair \(\{H_9,H_{17}\}\) is LOCC distinguishable. Thus, if outcome \(M_{9,6}\) occurs, we move to the next step.

\textbf{Step 6.} Charlie performs
\[
\begin{aligned}
\mathcal{M}_{10}\equiv \bigl\{
M_{10,1}=|d_3'\rangle_C\langle d_3'|,
\;
M_{10,2}=I-M_{10,1}
\bigr\}.
\end{aligned}
\]
The corresponding results are \(\{H_2,H_3\}\) and \(\{H_1,H_{11},H_{15}\}\), respectively. These sets are LOCC distinguishable.

If outcome \(M_{1,2}\) in Step 1 or outcome \(M_{8,2}\) in Step 4 occurs, a similar argument gives an LOCC protocol for distinguishing the states in set~(\ref{set4.1}).

\section{The proof of Theorem 6}\label{T6}

First, we teleport subsystem \(D\) to Charlie and denote the joint subsystem of Charlie and Dave by \(\widetilde{C}\). A state \(|\phi^{+}(3)\rangle\) is shared between Alice and Charlie. The initial state is
\[
\begin{aligned}
\left|\psi\right\rangle_{AB\widetilde{C}}\otimes\left|\phi^{+}(3)\right\rangle_{a\tilde{c}}.
\end{aligned}
\]

\textbf{Step 1.} Alice performs the measurement
\[
\begin{aligned}
\mathcal{M}_1\equiv \bigl\{&
M_{1,1}=P\bigl[|0\rangle_A;|0\rangle_a\bigr]
+P\bigl[(|1\rangle,\cdots,|d_1'-1\rangle)_A;|1\rangle_a\bigr]
+P\bigl[|d_1'\rangle_A;|2\rangle_a\bigr],\\
&M_{1,2}=P\bigl[|0\rangle_A;|1\rangle_a\bigr]
+P\bigl[(|1\rangle,\cdots,|d_1'-1\rangle)_A;|2\rangle_a\bigr]
+P\bigl[|d_1'\rangle_A;|0\rangle_a\bigr],\\
&M_{1,3}=I-M_{1,1}-M_{1,2}
\bigr\}.
\end{aligned}
\]
Suppose that outcome \(M_{1,1}\) occurs. Then
\[
\begin{aligned}
H_{1,1} &\rightarrow |\kappa_I\rangle|\kappa_I\rangle|0\circ\gamma_m\rangle|11\rangle,\\
H_{1,2} &\rightarrow |\kappa_I\rangle|0\rangle|\gamma_m\circ\kappa_I\rangle|11\rangle,\\
H_{1,3} &\rightarrow |0\rangle|\gamma_m\rangle|\kappa_I\circ\kappa_I\rangle|00\rangle,\\
H_{1,4} &\rightarrow |\gamma_m'\rangle|\kappa_I\rangle|\kappa_I\circ0\rangle|11\rangle
+|d_1'\rangle|\kappa_I\rangle|\kappa_I\circ0\rangle|22\rangle,\\
H_{2,1} &\rightarrow |d_1'\rangle|0\rangle|0\circ\beta_j\rangle|22\rangle,\\
H_{2,2} &\rightarrow |0\rangle|0\rangle|\beta_j\circ d_4'\rangle|00\rangle,\\
H_{2,3} &\rightarrow |0\rangle|\beta_j\rangle|d_3'\circ0\rangle|00\rangle,\\
H_{2,4} &\rightarrow |\beta_j'\rangle|d_2'\rangle|0\circ0\rangle|11\rangle
+|0\rangle|d_2'\rangle|0\circ0\rangle|00\rangle,\\
H_{3,1} &\rightarrow |0\rangle|0\rangle|d_3'\circ\gamma_m\rangle|00\rangle,\\
H_{3,2} &\rightarrow |0\rangle|d_2'\rangle|\gamma_m\circ0\rangle|00\rangle,\\
H_{3,3} &\rightarrow |d_1'\rangle|\gamma_m\rangle|0\circ0\rangle|22\rangle,\\
H_{3,4} &\rightarrow |\gamma_m'\rangle|0\rangle|0\circ d_4'\rangle|11\rangle
+|d_1'\rangle|0\rangle|0\circ d_4'\rangle|22\rangle,\\
H_{4,1} &\rightarrow |\kappa_I\rangle|d_2'\rangle|d_3'\circ(0\pm d_4')\rangle|11\rangle,\\
H_{4,2} &\rightarrow |d_1'\rangle|d_2'\rangle|(0\pm d_3')\circ\kappa_I\rangle|22\rangle,\\
H_{4,3} &\rightarrow |d_1'\rangle|0\pm d_2'\rangle|\kappa_I\circ d_4'\rangle|22\rangle,\\
H_{4,4} &\rightarrow |0\rangle|\kappa_I\rangle|d_3'\circ d_4'\rangle|00\rangle
\pm|d_1'\rangle|\kappa_I\rangle|d_3'\circ d_4'\rangle|22\rangle,\\
H_{5,1} &\rightarrow |d_1'\rangle|d_2'\rangle|\kappa_I\circ\beta_j\rangle|22\rangle,\\
H_{5,2} &\rightarrow |d_1'\rangle|\kappa_I\rangle|\beta_j\circ d_4'\rangle|22\rangle,\\
H_{5,3} &\rightarrow |\kappa_I\rangle|\beta_j\rangle|d_3'\circ d_4'\rangle|11\rangle,\\
H_{5,4} &\rightarrow |\beta_j'\rangle|d_2'\rangle|d_3'\circ\kappa_I\rangle|11\rangle
+|0\rangle|d_2'\rangle|d_3'\circ\kappa_I\rangle|00\rangle,\\
H_{6,1} &\rightarrow |0\rangle|d_2'\rangle|d_3'\circ d_4'\rangle|00\rangle,\\
H_{6,2} &\rightarrow |d_1'\rangle|d_2'\rangle|d_3'\circ0\rangle|22\rangle,\\
H_{6,3} &\rightarrow |d_1'\rangle|d_2'\rangle|0\circ d_4'\rangle|22\rangle,\\
H_{6,4} &\rightarrow |d_1'\rangle|0\rangle|d_3'\circ d_4'\rangle|22\rangle,\\
H_{7,1} &\rightarrow |0\rangle|\kappa_I\rangle|d_3'\circ\kappa_I\rangle|00\rangle,\\
H_{7,2} &\rightarrow |\kappa_I\rangle|d_2'\rangle|\kappa_I\circ0\rangle|11\rangle,\\
H_{7,3} &\rightarrow |d_1'\rangle|\kappa_I\rangle|0\circ\kappa_I\rangle|22\rangle,\\
H_{7,4} &\rightarrow |\kappa_I\rangle|0\rangle|\kappa_I\circ d_4'\rangle|11\rangle,\\
H_{8,1} &\rightarrow |d_1'\rangle|\kappa_I\rangle|d_3'\circ\kappa_I\rangle|22\rangle,\\
H_{8,2} &\rightarrow |\kappa_I\rangle|d_2'\rangle|\kappa_I\circ d_4'\rangle|11\rangle,
\end{aligned}
\]
where \(|00\rangle\), \(|11\rangle\), and \(|22\rangle\) mean \(|00\rangle_{a\tilde{c}}\), \(|11\rangle_{a\tilde{c}}\), and \(|22\rangle_{a\tilde{c}}\), respectively;
\[
|\beta_j'\rangle=\sum_{u=1}^{d_1-2}\omega_{d_1-1}^{ju}|u\rangle,
\qquad
|\gamma_m'\rangle=\sum_{u=0}^{d_1-3}\omega_{d_1-1}^{mu}|u+1\rangle.
\]
The symbol ``\(\circ\)'' has the same meaning as in the proof of Theorem~3.

\textbf{Step 2.} Charlie performs
\[
\begin{aligned}
\mathcal{M}_2\equiv \bigl\{&
M_{2,1}=P[|d_3'0\rangle_{\widetilde{C}};|2\rangle_{\tilde{c}}],\\
&M_{2,2}=P[|(1,\ldots,d_3')\circ0\rangle_{\widetilde{C}};|0\rangle_{\tilde{c}}],\\
&M_{2,3}=P[|d_3'\circ(0,d_4')\rangle_{\widetilde{C}};|1\rangle_{\tilde{c}}],\\
&M_{2,4}=P[|0\circ(0,\ldots,d_4'-1)\rangle_{\widetilde{C}};|2\rangle_{\tilde{c}}]\\
&\qquad\qquad
+P[|d_3'\circ(1,\ldots,d_4'-1)\rangle_{\widetilde{C}};|2\rangle_{\tilde{c}}],\\
&M_{2,5}=P[|00\rangle_{\widetilde{C}};(|0\rangle,|1\rangle)_{\tilde{c}}],\\
&M_{2,6}=P[|(1,\ldots,d_3'-1)\circ(1,\ldots,d_4'-1)\rangle_{\widetilde{C}};|0\rangle_{\tilde{c}}],\\
&M_{2,7}=P[|(0,\ldots,d_3'-1)\circ d_4'\rangle_{\widetilde{C}};|0\rangle_{\tilde{c}}],\\
&M_{2,8}=I-M_{2,1}-\cdots-M_{2,7}
\bigr\}.
\end{aligned}
\]
The corresponding subsets are
\[
\begin{aligned}
&M_{2,1}\Rightarrow H_{6,2},\\
&M_{2,2}\Rightarrow H_{2,3},H_{3,2},\\
&M_{2,3}\Rightarrow H_{4,1},H_{5,3},\\
&M_{2,4}\Rightarrow H_{2,1},H_{3,3},H_{4,2},H_{7,3},H_{8,1},\\
&M_{2,5}\Rightarrow H_{2,4},\\
&M_{2,6}\Rightarrow H_{1,3},\\
&M_{2,7}\Rightarrow H_{2,2}.
\end{aligned}
\]
The pairs \(\{H_{2,3},H_{3,2}\}\) and \(\{H_{4,1},H_{5,3}\}\) are LOCC distinguishable. For the set
\[
\{H_{2,1},H_{3,3},H_{4,2},H_{7,3},H_{8,1}\},
\]
Bob performs
\[
\mathcal{M}_{2}'\equiv \bigl\{
M_{2,1}'=|0\rangle_B\langle 0|,
\;
M_{2,2}'=I-M_{2,1}'
\bigr\}.
\]
If outcome \(M_{2,1}'\) occurs, the subset is \(H_{2,1}\); otherwise, the remaining possibilities are \(\{H_{3,3},H_{4,2},H_{7,3},H_{8,1}\}\). Charlie then performs
\[
\mathcal{M}_{2}''\equiv \bigl\{
M_{2,1}''=P[|00\rangle_{\widetilde{C}};|2\rangle_{\tilde{c}}],
\;
M_{2,2}''=I-M_{2,1}''
\bigr\}.
\]
If outcome \(M_{2,1}''\) occurs, the subset is \(H_{3,3}\). Otherwise, Bob performs
\[
\mathcal{M}_{2}'''\equiv \bigl\{
M_{2,1}'''=|d_2'\rangle_B\langle d_2'|,
\;
M_{2,2}'''=I-M_{2,1}'''
\bigr\}.
\]
If outcome \(M_{2,1}'''\) occurs, the subset is \(H_{4,2}\); otherwise, the remaining possibilities are \(\{H_{7,3},H_{8,1}\}\), which are LOCC distinguishable.

If outcome \(M_{2,8}\) occurs, the remaining 17 subsets remain. We then move to the next step.

\textbf{Step 3.} Bob performs
\[
\begin{aligned}
\mathcal{M}_3\equiv \bigl\{
M_{3,1}=|0\rangle_B\langle 0|+|d_2'\rangle_B\langle d_2'|,
\;
M_{3,2}=I-M_{3,1}
\bigr\}.
\end{aligned}
\]
The corresponding results are
\[
\begin{aligned}
&M_{3,1}\Rightarrow
\begin{cases}
H_{1,2},H_{3,1},H_{3,4},H_{4,3},H_{5,1},H_{5,4},\\
H_{6,1},H_{6,3},H_{6,4},H_{7,2},H_{7,4},H_{8,2},
\end{cases}\\
&M_{3,2}\Rightarrow
\begin{cases}
H_{1,1},H_{1,4},H_{4,4},H_{5,2},H_{7,1}.
\end{cases}
\end{aligned}
\]

\textbf{Step 4.} For the branch corresponding to \(M_{3,1}\), Charlie performs
\[
\begin{aligned}
\mathcal{M}_4\equiv \bigl\{&
M_{4,1}=P[|(1,\ldots,d_3'-1)\circ0\rangle_{\widetilde{C}};|1\rangle_{\tilde{c}}],\\
&M_{4,2}=P[|(1,\ldots,d_3'-1)\circ d_4'\rangle_{\widetilde{C}};|1\rangle_{\tilde{c}}],\\
&M_{4,3}=P[|d_3'd_4'\rangle_{\widetilde{C}};|2\rangle_{\tilde{c}}],\\
&M_{4,4}=P[|(1,\ldots,d_3'-1)\circ d_4'\rangle_{\widetilde{C}};|2\rangle_{\tilde{c}}],\\
&M_{4,5}=P[|(1,\ldots,d_3'-1)\circ(0,\ldots,d_4'-1)\rangle_{\widetilde{C}};|2\rangle_{\tilde{c}}],\\
&M_{4,6}=P[|0d_4'\rangle_{\widetilde{C}};(|1\rangle,|2\rangle)_{\tilde{c}}],\\
&M_{4,7}=I-M_{4,1}-\cdots-M_{4,6}
\bigr\}.
\end{aligned}
\]
The corresponding subsets are
\[
\begin{aligned}
&M_{4,1}\Rightarrow H_{7,2},\qquad &&M_{4,2}\Rightarrow H_{7,4},H_{8,2},\\
&M_{4,3}\Rightarrow H_{6,4},\qquad &&M_{4,4}\Rightarrow H_{4,3},\\
&M_{4,5}\Rightarrow H_{5,1},\qquad &&M_{4,6}\Rightarrow H_{3,4},H_{6,3}.
\end{aligned}
\]
The pairs \(\{H_{7,4},H_{8,2}\}\) and \(\{H_{3,4},H_{6,3}\}\) are LOCC distinguishable. If outcome \(M_{4,7}\) occurs, the remaining possibilities are
\[
\{H_{1,2},H_{3,1},H_{5,4},H_{6,1}\}.
\]

\textbf{Step 5.} Bob performs
\[
\begin{aligned}
\mathcal{M}_5\equiv \bigl\{
M_{5,1}=|0\rangle_B\langle 0|,
\;
M_{5,2}=I-M_{5,1}
\bigr\}.
\end{aligned}
\]
If outcome \(M_{5,1}\) occurs, the remaining possibilities are \(\{H_{1,2},H_{3,1}\}\); otherwise, they are \(\{H_{5,4},H_{6,1}\}\). Each pair is LOCC distinguishable.

\textbf{Step 4$'$.} For the branch corresponding to \(M_{3,2}\) in Step 3, Charlie performs
\[
\begin{aligned}
\mathcal{M}_6\equiv \bigl\{&
M_{6,1}=P[|d_3'\circ(1,\ldots,d_4'-1)\rangle_{\widetilde{C}};|0\rangle_{\tilde{c}}],\\
&M_{6,2}=P[|0\circ(1,\ldots,d_4')\rangle_{\widetilde{C}};|1\rangle_{\tilde{c}}],\\
&M_{6,3}=P[|(0,\ldots,d_3'-1)\circ d_4'\rangle_{\widetilde{C}};|2\rangle_{\tilde{c}}],\\
&M_{6,4}=P[|d_3'd_4'\rangle_{\widetilde{C}};(|0\rangle,|2\rangle)_{\tilde{c}}],\\
&M_{6,5}=I-M_{6,1}-\cdots-M_{6,4}
\bigr\}.
\end{aligned}
\]
The corresponding subsets are
\[
\begin{aligned}
&M_{6,1}\Rightarrow H_{7,1},\qquad &&M_{6,2}\Rightarrow H_{1,1},\\
&M_{6,3}\Rightarrow H_{5,2},\qquad &&M_{6,4}\Rightarrow H_{4,4},\\
&M_{6,5}\Rightarrow H_{1,4}.
\end{aligned}
\]

If any other outcomes occur in Step 1, similar protocols can be constructed to distinguish the corresponding subsets perfectly by LOCC.

\section{The proof of Theorem 7}\label{T7}

Let Alice and Charlie share an EPR state \(\vert \phi^{+}(2)\rangle\), while Alice and Dave share a maximally entangled state \(\vert \phi^{+}(3)\rangle\). The initial state is
\[
\begin{aligned}
|\varphi\rangle_{ABCD}\otimes|\phi^{+}(2)\rangle_{a_1c}\otimes|\phi^{+}(3)\rangle_{a_2v},
\end{aligned}
\]
where \(a_1\) and \(a_2\) are Alice's ancillary systems, \(c\) is Charlie's ancillary system, and \(v\) is Dave's ancillary system.

\textbf{Step 1.} Charlie and Dave perform the measurements
\[
\begin{aligned}
\mathcal{M}_1\equiv \bigl\{&
M_{1,1}=P\bigl[|0\rangle_C;|0\rangle_c\bigr]
+P\bigl[(|1\rangle,\dots,|d_3'\rangle)_C;|1\rangle_c\bigr],\\
&M_{1,2}=I-M_{1,1}
\bigr\},
\end{aligned}
\]
and
\[
\begin{aligned}
\mathcal{M}_2\equiv \bigl\{&
M_{2,1}=P\bigl[|0\rangle_D;|0\rangle_v\bigr]
+P\bigl[(|1\rangle,\dots,|d_4'-1\rangle)_D;|1\rangle_v\bigr]
+P\bigl[|d_4'\rangle_D;|2\rangle_v\bigr],\\
&M_{2,2}=P\bigl[|0\rangle_D;|1\rangle_v\bigr]
+P\bigl[(|1\rangle,\dots,|d_4'-1\rangle)_D;|2\rangle_v\bigr]
+P\bigl[|d_4'\rangle_D;|0\rangle_v\bigr],\\
&M_{2,3}=I-M_{2,1}-M_{2,2}
\bigr\},
\end{aligned}
\]
respectively. Suppose that the outcomes corresponding to \(M_{1,1}\) and \(M_{2,1}\) occur. Then
\[
\begin{aligned}
&H_{1,1}\rightarrow |\kappa_I\rangle_1|\kappa_I\rangle_2|0\rangle_3|\gamma_m'\rangle_4|00\rangle_{a_1c}|11\rangle_{a_2v}\\
&\qquad\qquad+|\kappa_I\rangle_1|\kappa_I\rangle_2|0\rangle_3|d_4'\rangle_4|00\rangle_{a_1c}|22\rangle_{a_2v},\\
&H_{1,2}\rightarrow |\kappa_I\rangle_1|0\rangle_2|\gamma_m\rangle_3|\kappa_I\rangle_4|11\rangle_{a_1c}|11\rangle_{a_2v},\\
&H_{1,3}\rightarrow |0\rangle_1|\gamma_m\rangle_2|\kappa_I\rangle_3|\kappa_I\rangle_4|11\rangle_{a_1c}|11\rangle_{a_2v},\\
&H_{1,4}\rightarrow |\gamma_m\rangle_1|\kappa_I\rangle_2|\kappa_I\rangle_3|0\rangle_4|11\rangle_{a_1c}|00\rangle_{a_2v},\\
&H_{2,1}\rightarrow |d_1'\rangle_1|0\rangle_2|0\rangle_3|0\rangle_4|00\rangle_{a_1c}|00\rangle_{a_2v}\\
&\qquad\qquad+|d_1'\rangle_1|0\rangle_2|0\rangle_3|\beta_j'\rangle_4|00\rangle_{a_1c}|11\rangle_{a_2v},\\
&H_{2,2}\rightarrow |0\rangle_1|0\rangle_2|\beta_j'\rangle_3|d_4'\rangle_4|11\rangle_{a_1c}|22\rangle_{a_2v}\\
&\qquad\qquad+|0\rangle_1|0\rangle_2|0\rangle_3|d_4'\rangle_4|00\rangle_{a_1c}|22\rangle_{a_2v},\\
&H_{2,3}\rightarrow |0\rangle_1|\beta_j\rangle_2|d_3'\rangle_3|0\rangle_4|11\rangle_{a_1c}|00\rangle_{a_2v},\\
&H_{2,4}\rightarrow |\beta_j\rangle_1|d_2'\rangle_2|0\rangle_3|0\rangle_4|00\rangle_{a_1c}|00\rangle_{a_2v},\\
&H_{3,1}\rightarrow |0\rangle_1|0\rangle_2|d_3'\rangle_3|\gamma_m'\rangle_4|11\rangle_{a_1c}|11\rangle_{a_2v}\\
&\qquad\qquad+|0\rangle_1|0\rangle_2|d_3'\rangle_3|d_4'\rangle_4|11\rangle_{a_1c}|22\rangle_{a_2v},\\
&H_{3,2}\rightarrow |0\rangle_1|d_2'\rangle_2|\gamma_m\rangle_3|0\rangle_4|11\rangle_{a_1c}|00\rangle_{a_2v},\\
&H_{3,3}\rightarrow |d_1'\rangle_1|\gamma_m\rangle_2|0\rangle_3|0\rangle_4|00\rangle_{a_1c}|00\rangle_{a_2v},\\
&H_{3,4}\rightarrow |\gamma_m\rangle_1|0\rangle_2|0\rangle_3|d_4'\rangle_4|00\rangle_{a_1c}|22\rangle_{a_2v},\\
&H_{4,1}\rightarrow |\kappa_I\rangle_1|d_2'\rangle_2|d_3'\rangle_3|0\rangle_4|11\rangle_{a_1c}|00\rangle_{a_2v}\\
&\qquad\qquad\pm|\kappa_I\rangle_1|d_2'\rangle_2|d_3'\rangle_3|d_4'\rangle_4|11\rangle_{a_1c}|22\rangle_{a_2v},\\
&H_{4,2}\rightarrow |d_1'\rangle_1|d_2'\rangle_2|0\rangle_3|\kappa_I\rangle_4|00\rangle_{a_1c}|11\rangle_{a_2v}\\
&\qquad\qquad\pm|d_1'\rangle_1|d_2'\rangle_2|d_3'\rangle_3|\kappa_I\rangle_4|11\rangle_{a_1c}|11\rangle_{a_2v},\\
&H_{4,3}\rightarrow |d_1'\rangle_1|0\pm d_2'\rangle_2|\kappa_I\rangle_3|d_4'\rangle_4|11\rangle_{a_1c}|22\rangle_{a_2v},\\
&H_{4,4}\rightarrow |0\pm d_1'\rangle_1|\kappa_I\rangle_2|d_3'\rangle_3|d_4'\rangle_4|11\rangle_{a_1c}|22\rangle_{a_2v},\\
&H_{5,1}\rightarrow |d_1'\rangle_1|d_2'\rangle_2|\kappa_I\rangle_3|\beta_j'\rangle_4|11\rangle_{a_1c}|11\rangle_{a_2v}\\
&\qquad\qquad+|d_1'\rangle_1|d_2'\rangle_2|\kappa_I\rangle_3|0\rangle_4|11\rangle_{a_1c}|00\rangle_{a_2v},\\
&H_{5,2}\rightarrow |d_1'\rangle_1|\kappa_I\rangle_2|\beta_j'\rangle_3|d_4'\rangle_4|11\rangle_{a_1c}|22\rangle_{a_2v}\\
&\qquad\qquad+|d_1'\rangle_1|\kappa_I\rangle_2|0\rangle_3|d_4'\rangle_4|00\rangle_{a_1c}|22\rangle_{a_2v},\\
&H_{5,3}\rightarrow |\kappa_I\rangle_1|\beta_j\rangle_2|d_3'\rangle_3|d_4'\rangle_4|11\rangle_{a_1c}|22\rangle_{a_2v},\\
&H_{5,4}\rightarrow |\beta_j\rangle_1|d_2'\rangle_2|d_3'\rangle_3|\kappa_I\rangle_4|11\rangle_{a_1c}|11\rangle_{a_2v},\\
&H_{6,1}\rightarrow |0\rangle_1|d_2'\rangle_2|d_3'\rangle_3|d_4'\rangle_4|11\rangle_{a_1c}|22\rangle_{a_2v},\\
&H_{6,2}\rightarrow |d_1'\rangle_1|d_2'\rangle_2|d_3'\rangle_3|0\rangle_4|11\rangle_{a_1c}|00\rangle_{a_2v},\\
&H_{6,3}\rightarrow |d_1'\rangle_1|d_2'\rangle_2|0\rangle_3|d_4'\rangle_4|00\rangle_{a_1c}|22\rangle_{a_2v},\\
&H_{6,4}\rightarrow |d_1'\rangle_1|0\rangle_2|d_3'\rangle_3|d_4'\rangle_4|11\rangle_{a_1c}|22\rangle_{a_2v},\\
&H_{7,1}\rightarrow |0\rangle_1|\kappa_I\rangle_2|d_3'\rangle_3|\kappa_I\rangle_4|11\rangle_{a_1c}|11\rangle_{a_2v},\\
&H_{7,2}\rightarrow |\kappa_I\rangle_1|d_2'\rangle_2|\kappa_I\rangle_3|0\rangle_4|11\rangle_{a_1c}|00\rangle_{a_2v},\\
&H_{7,3}\rightarrow |d_1'\rangle_1|\kappa_I\rangle_2|0\rangle_3|\kappa_I\rangle_4|00\rangle_{a_1c}|11\rangle_{a_2v},\\
&H_{7,4}\rightarrow |\kappa_I\rangle_1|0\rangle_2|\kappa_I\rangle_3|d_4'\rangle_4|11\rangle_{a_1c}|22\rangle_{a_2v},\\
&H_{8,1}\rightarrow |d_1'\rangle_1|\kappa_I\rangle_2|d_3'\rangle_3|\kappa_I\rangle_4|11\rangle_{a_1c}|11\rangle_{a_2v},\\
&H_{8,2}\rightarrow |\kappa_I\rangle_1|d_2'\rangle_2|\kappa_I\rangle_3|d_4'\rangle_4|11\rangle_{a_1c}|22\rangle_{a_2v}.
\end{aligned}
\]
Here
\[
|\beta_j'\rangle_k=\sum_{u=1}^{d_k-2}\omega_{d_k-1}^{ju}|u\rangle,
\qquad
|\gamma_m'\rangle_k=\sum_{u=0}^{d_k-3}\omega_{d_k-1}^{mu}|u+1\rangle
\]
for \(k=1,2,3,4\).

\textbf{Step 2.} Alice performs
\[
\begin{aligned}
\mathcal{M}_3\equiv \bigl\{&
M_{3,1}=P[|0\rangle_A;|1\rangle_{a_1};|0\rangle_{a_2}],\\
&M_{3,2}=P[(|0\rangle,\dots,|d_1'-1\rangle)_A;|0\rangle_{a_1};|0\rangle_{a_2}],\\
&M_{3,3}=P[(|1\rangle,\dots,|d_1'-1\rangle)_A;|1\rangle_{a_1};|2\rangle_{a_2}]\\
&\qquad\qquad+P[(|1\rangle,\dots,|d_1'\rangle)_A;|1\rangle_{a_1};|0\rangle_{a_2}]\\
&\qquad\qquad+P[|d_1'\rangle_A;|1\rangle_{a_1};|1\rangle_{a_2}]\\
&\qquad\qquad+P[|d_1'\rangle_A;|0\rangle_{a_1};|1\rangle_{a_2}]\\
&\qquad\qquad+P[|d_1'\rangle_A;|0\rangle_{a_1};|0\rangle_{a_2}],\\
&M_{3,4}=I-M_{3,1}-M_{3,2}-M_{3,3}
\bigr\}.
\end{aligned}
\]
If outcome \(M_{3,1}\) occurs, the remaining possibilities are \(\{H_{2,3},H_{3,2}\}\), which are LOCC distinguishable. If outcome \(M_{3,2}\) occurs, the subset is \(H_{2,4}\). The branches corresponding to \(M_{3,3}\) and \(M_{3,4}\) yield
\[
\begin{aligned}
&M_{3,3}\Rightarrow
\begin{cases}
H_{1,4},H_{2,1},H_{3,3},H_{4,1},H_{4,2},H_{5,1},H_{5,3},\\
H_{6,2},H_{7,2},H_{7,3},H_{7,4},H_{8,1},H_{8,2},
\end{cases}\\
&M_{3,4}\Rightarrow
\begin{cases}
H_{1,1},H_{1,2},H_{1,3},H_{2,2},H_{3,1},H_{3,4},H_{4,3},\\
H_{4,4},H_{5,2},H_{5,4},H_{6,1},H_{6,3},H_{6,4},H_{7,1}.
\end{cases}
\end{aligned}
\]

\textbf{Step 3.} To distinguish the states in the branches corresponding to \(M_{3,3}\) and \(M_{3,4}\), let Alice and Bob share a maximally entangled state \(|\phi^{+}(2)\rangle_{a_3b}\). Bob then performs
\[
\begin{aligned}
\mathcal{M}_4\equiv \bigl\{&
M_{4,1}=P\bigl[|0\rangle_B;|0\rangle_b\bigr]
+P\bigl[(|1\rangle,\dots,|d_2'\rangle)_B;|1\rangle_b\bigr],\\
&M_{4,2}=I-M_{4,1}
\bigr\}.
\end{aligned}
\]

\textbf{Step 4.} Suppose that outcome \(M_{4,1}\) occurs. For the branch corresponding to \(M_{3,3}\), the subsets are transformed into
\[
\begin{aligned}
&H_{1,4}\rightarrow |\gamma_m\rangle_1|\kappa_I\rangle_2|\kappa_I\rangle_3|0\rangle_4|11\rangle_{a_3b}|11\rangle_{a_1c}|00\rangle_{a_2v},\\
&H_{2,1}\rightarrow |d_1'\rangle_1|0\rangle_2|0\rangle_3|\beta_j'\rangle_4|00\rangle_{a_3b}|00\rangle_{a_1c}|11\rangle_{a_2v}\\
&\qquad\qquad+|d_1'\rangle_1|0\rangle_2|0\rangle_3|0\rangle_4|00\rangle_{a_3b}|00\rangle_{a_1c}|00\rangle_{a_2v},\\
&H_{3,3}\rightarrow |d_1'\rangle_1|\gamma_m\rangle_2|0\rangle_3|0\rangle_4|11\rangle_{a_3b}|00\rangle_{a_1c}|00\rangle_{a_2v},\\
&H_{4,1}\rightarrow |\kappa_I\rangle_1|d_2'\rangle_2|d_3'\rangle_3|0\rangle_4|11\rangle_{a_3b}|11\rangle_{a_1c}|00\rangle_{a_2v}\\
&\qquad\qquad\pm|\kappa_I\rangle_1|d_2'\rangle_2|d_3'\rangle_3|d_4'\rangle_4|11\rangle_{a_3b}|11\rangle_{a_1c}|22\rangle_{a_2v},\\
&H_{4,2}\rightarrow |d_1'\rangle_1|d_2'\rangle_2|0\rangle_3|\kappa_I\rangle_4|11\rangle_{a_3b}|00\rangle_{a_1c}|11\rangle_{a_2v}\\
&\qquad\qquad\pm|d_1'\rangle_1|d_2'\rangle_2|d_3'\rangle_3|\kappa_I\rangle_4|11\rangle_{a_3b}|11\rangle_{a_1c}|11\rangle_{a_2v},\\
&H_{5,1}\rightarrow |d_1'\rangle_1|d_2'\rangle_2|\kappa_I\rangle_3|\beta_j'\rangle_4|11\rangle_{a_3b}|11\rangle_{a_1c}|11\rangle_{a_2v}\\
&\qquad\qquad+|d_1'\rangle_1|d_2'\rangle_2|\kappa_I\rangle_3|0\rangle_4|11\rangle_{a_3b}|11\rangle_{a_1c}|00\rangle_{a_2v},\\
&H_{5,3}\rightarrow |\kappa_I\rangle_1|\beta_j'\rangle_2|d_3'\rangle_3|d_4'\rangle_4|11\rangle_{a_3b}|11\rangle_{a_1c}|22\rangle_{a_2v}\\
&\qquad\qquad+|\kappa_I\rangle_1|0\rangle_2|d_3'\rangle_3|d_4'\rangle_4|00\rangle_{a_3b}|11\rangle_{a_1c}|22\rangle_{a_2v},\\
&H_{6,2}\rightarrow |d_1'\rangle_1|d_2'\rangle_2|d_3'\rangle_3|0\rangle_4|11\rangle_{a_3b}|11\rangle_{a_1c}|00\rangle_{a_2v},\\
&H_{7,2}\rightarrow |\kappa_I\rangle_1|d_2'\rangle_2|\kappa_I\rangle_3|0\rangle_4|11\rangle_{a_3b}|11\rangle_{a_1c}|00\rangle_{a_2v},\\
&H_{7,3}\rightarrow |d_1'\rangle_1|\kappa_I\rangle_2|0\rangle_3|\kappa_I\rangle_4|11\rangle_{a_3b}|00\rangle_{a_1c}|11\rangle_{a_2v},\\
&H_{7,4}\rightarrow |\kappa_I\rangle_1|0\rangle_2|\kappa_I\rangle_3|d_4'\rangle_4|00\rangle_{a_3b}|11\rangle_{a_1c}|22\rangle_{a_2v},\\
&H_{8,1}\rightarrow |d_1'\rangle_1|\kappa_I\rangle_2|d_3'\rangle_3|\kappa_I\rangle_4|11\rangle_{a_3b}|11\rangle_{a_1c}|11\rangle_{a_2v},\\
&H_{8,2}\rightarrow |\kappa_I\rangle_1|d_2'\rangle_2|\kappa_I\rangle_3|d_4'\rangle_4|11\rangle_{a_3b}|11\rangle_{a_1c}|22\rangle_{a_2v},
\end{aligned}
\]
where
\[
|\beta_j'\rangle_k=\sum_{u=1}^{d_k-2}\omega_{d_k-1}^{ju}|u\rangle
\]
for \(k=1,2,3,4\).

Alice then performs
\[
\begin{aligned}
\mathcal{M}_5\equiv \bigl\{&
M_{5,1}=P[|d_1'\rangle_A;|0\rangle_{a_1};|0\rangle_{a_2};|1\rangle_{a_3}],\\
&M_{5,2}=P[|d_1'\rangle_A;|0\rangle_{a_1};(|0\rangle,|1\rangle)_{a_2};|0\rangle_{a_3}],\\
&M_{5,3}=I-M_{5,1}-M_{5,2}
\bigr\}.
\end{aligned}
\]
If outcomes \(M_{5,1}\) and \(M_{5,2}\) occur, the corresponding subsets are \(H_{3,3}\) and \(H_{2,1}\), respectively. If outcome \(M_{5,3}\) occurs, we proceed.

\textbf{Step 5.} Bob performs
\[
\mathcal{M}_6\equiv \bigl\{
M_{6,1}=|d_2'\rangle_B\langle d_2'|,
\;
M_{6,2}=I-M_{6,1}
\bigr\}.
\]
The corresponding results are
\[
\begin{aligned}
&M_{6,1}\Rightarrow H_{4,1},H_{4,2},H_{5,1},H_{6,2},H_{7,2},H_{8,2},\\
&M_{6,2}\Rightarrow H_{1,4},H_{5,3},H_{7,3},H_{7,4},H_{8,1}.
\end{aligned}
\]

\textbf{Step 6.} For the branch corresponding to \(M_{6,1}\), Alice performs
\[
\mathcal{M}_7\equiv \bigl\{
M_{7,1}=|d_1'\rangle_A\langle d_1'|,
\;
M_{7,2}=I-M_{7,1}
\bigr\}.
\]
If outcome \(M_{7,1}\) occurs, the remaining possibilities are \(\{H_{4,2},H_{5,1},H_{6,2}\}\). Charlie then performs
\[
\mathcal{M}_8\equiv \bigl\{
M_{8,1}=|1\rangle_C\langle 1|+\cdots+|d_3'-1\rangle_C\langle d_3'-1|,
\;
M_{8,2}=I-M_{8,1}
\bigr\}.
\]
If outcome \(M_{8,1}\) occurs, the subset is \(H_{5,1}\); otherwise, the remaining possibilities are \(\{H_{4,2},H_{6,2}\}\), which are LOCC distinguishable.

If outcome \(M_{7,2}\) occurs, the remaining possibilities are \(\{H_{7,2},H_{8,2},H_{4,1}\}\). Applying the same measurement \(\mathcal{M}_8\), outcome \(M_{8,1}\) leaves \(\{H_{7,2},H_{8,2}\}\), which are LOCC distinguishable, while outcome \(M_{8,2}\) leaves \(H_{4,1}\).

For the branch corresponding to \(M_{6,2}\), Alice performs
\[
\begin{aligned}
\mathcal{M}_9\equiv \bigl\{&
M_{9,1}=P[(|1\rangle,\dots,|d_1'\rangle)_A;|1\rangle_{a_1};|0\rangle_{a_2};|1\rangle_{a_3}],\\
&M_{9,2}=P[|d_1'\rangle_A;|0\rangle_{a_1};|1\rangle_{a_2};|1\rangle_{a_3}],\\
&M_{9,3}=P[|d_1'\rangle_A;|1\rangle_{a_1};|1\rangle_{a_2};|1\rangle_{a_3}],\\
&M_{9,4}=I-M_{9,1}-M_{9,2}-M_{9,3}
\bigr\}.
\end{aligned}
\]
The corresponding subsets are
\[
\begin{aligned}
&M_{9,1}\Rightarrow H_{1,4},\qquad &&M_{9,2}\Rightarrow H_{7,3},\\
&M_{9,3}\Rightarrow H_{8,1},\qquad &&M_{9,4}\Rightarrow H_{7,4},H_{5,3}.
\end{aligned}
\]
The pair \(\{H_{7,4},H_{5,3}\}\) is LOCC distinguishable.

\textbf{Step 4$'$.} Suppose again that outcome \(M_{4,1}\) occurs, but now for the branch corresponding to \(M_{3,4}\). The subsets are transformed into
\[
\begin{aligned}
&H_{1,1}\rightarrow |\kappa_I\rangle_1|\kappa_I\rangle_2|0\rangle_3|\gamma_m'\rangle_4|11\rangle_{a_3b}|00\rangle_{a_1c}|11\rangle_{a_2v}\\
&\qquad\qquad+|\kappa_I\rangle_1|\kappa_I\rangle_2|0\rangle_3|d_4'\rangle_4|11\rangle_{a_3b}|00\rangle_{a_1c}|22\rangle_{a_2v},\\
&H_{1,2}\rightarrow |\kappa_I\rangle_1|0\rangle_2|\gamma_m\rangle_3|\kappa_I\rangle_4|00\rangle_{a_3b}|11\rangle_{a_1c}|11\rangle_{a_2v},\\
&H_{1,3}\rightarrow |0\rangle_1|\gamma_m\rangle_2|\kappa_I\rangle_3|\kappa_I\rangle_4|11\rangle_{a_3b}|11\rangle_{a_1c}|11\rangle_{a_2v},\\
&H_{2,2}\rightarrow |0\rangle_1|0\rangle_2|\beta_j'\rangle_3|d_4'\rangle_4|00\rangle_{a_3b}|11\rangle_{a_1c}|22\rangle_{a_2v}\\
&\qquad\qquad+|0\rangle_1|0\rangle_2|0\rangle_3|d_4'\rangle_4|00\rangle_{a_3b}|00\rangle_{a_1c}|22\rangle_{a_2v},\\
&H_{3,1}\rightarrow |0\rangle_1|0\rangle_2|d_3'\rangle_3|\gamma_m'\rangle_4|00\rangle_{a_3b}|11\rangle_{a_1c}|11\rangle_{a_2v}\\
&\qquad\qquad+|0\rangle_1|0\rangle_2|d_3'\rangle_3|d_4'\rangle_4|00\rangle_{a_3b}|11\rangle_{a_1c}|22\rangle_{a_2v},\\
&H_{3,4}\rightarrow |\gamma_m\rangle_1|0\rangle_2|0\rangle_3|d_4'\rangle_4|00\rangle_{a_3b}|00\rangle_{a_1c}|22\rangle_{a_2v},\\
&H_{4,3}\rightarrow |d_1'\rangle_1|0\rangle_2|\kappa_I\rangle_3|d_4'\rangle_4|00\rangle_{a_3b}|11\rangle_{a_1c}|22\rangle_{a_2v}\\
&\qquad\qquad\pm|d_1'\rangle_1|d_2'\rangle_2|\kappa_I\rangle_3|d_4'\rangle_4|11\rangle_{a_3b}|11\rangle_{a_1c}|22\rangle_{a_2v},\\
&H_{4,4}\rightarrow |0\pm d_1'\rangle_1|\kappa_I\rangle_2|d_3'\rangle_3|d_4'\rangle_4|11\rangle_{a_3b}|11\rangle_{a_1c}|22\rangle_{a_2v},\\
&H_{5,2}\rightarrow |d_1'\rangle_1|\kappa_I\rangle_2|\beta_j'\rangle_3|d_4'\rangle_4|11\rangle_{a_3b}|11\rangle_{a_1c}|22\rangle_{a_2v}\\
&\qquad\qquad+|d_1'\rangle_1|\kappa_I\rangle_2|0\rangle_3|d_4'\rangle_4|11\rangle_{a_3b}|00\rangle_{a_1c}|22\rangle_{a_2v},\\
&H_{5,4}\rightarrow |\beta_j\rangle_1|d_2'\rangle_2|d_3'\rangle_3|\kappa_I\rangle_4|11\rangle_{a_3b}|11\rangle_{a_1c}|11\rangle_{a_2v},\\
&H_{6,1}\rightarrow |0\rangle_1|d_2'\rangle_2|d_3'\rangle_3|d_4'\rangle_4|11\rangle_{a_3b}|11\rangle_{a_1c}|22\rangle_{a_2v},\\
&H_{6,3}\rightarrow |d_1'\rangle_1|d_2'\rangle_2|0\rangle_3|d_4'\rangle_4|11\rangle_{a_3b}|00\rangle_{a_1c}|22\rangle_{a_2v},\\
&H_{6,4}\rightarrow |d_1'\rangle_1|0\rangle_2|d_3'\rangle_3|d_4'\rangle_4|00\rangle_{a_3b}|11\rangle_{a_1c}|22\rangle_{a_2v},\\
&H_{7,1}\rightarrow |0\rangle_1|\kappa_I\rangle_2|d_3'\rangle_3|\kappa_I\rangle_4|11\rangle_{a_3b}|11\rangle_{a_1c}|11\rangle_{a_2v},
\end{aligned}
\]
where
\[
|\beta_j'\rangle_k=\sum_{u=1}^{d_k-2}\omega_{d_k-1}^{ju}|u\rangle,
\qquad
|\gamma_m'\rangle_k=\sum_{u=0}^{d_k-3}\omega_{d_k-1}^{mu}|u+1\rangle
\]
for \(k=1,2,3,4\).

Alice now performs
\[
\begin{aligned}
\mathcal{M}_{10}\equiv \bigl\{&
M_{10,1}=P[(|1\rangle,\dots,|d_1'-1\rangle)_A;|1\rangle_{a_1};|1\rangle_{a_2};|0\rangle_{a_3}],\\
&M_{10,2}=P[(|0\rangle,\dots,|d_1'-1\rangle)_A;|1\rangle_{a_1};|1\rangle_{a_2};|1\rangle_{a_3}],\\
&M_{10,3}=I-M_{10,1}-M_{10,2}
\bigr\}.
\end{aligned}
\]
If outcome \(M_{10,1}\) occurs, the subset is \(H_{1,2}\). If outcome \(M_{10,2}\) occurs, the remaining possibilities are \(\{H_{1,3},H_{5,4},H_{7,1}\}\). Charlie then performs
\[
\mathcal{M}_{11}\equiv \bigl\{
M_{11,1}=|1\rangle_C\langle 1|+\cdots+|d_3'-1\rangle_C\langle d_3'-1|,
\;
M_{11,2}=I-M_{11,1}
\bigr\}.
\]
If outcome \(M_{11,1}\) occurs, the subset is \(H_{1,3}\); otherwise, the remaining possibilities are \(\{H_{5,4},H_{7,1}\}\), which are LOCC distinguishable. If outcome \(M_{10,3}\) occurs, the state belongs to one of the remaining 10 subsets.

\textbf{Step 5$'$.} Bob performs
\[
\begin{aligned}
\mathcal{M}_{12}\equiv \bigl\{&
M_{12,1}=|1\rangle_B\langle 1|+\cdots+|d_2'-1\rangle_B\langle d_2'-1|,\\
&M_{12,2}=I-M_{12,1}
\bigr\}.
\end{aligned}
\]
The corresponding results are
\[
\begin{aligned}
&M_{12,1}\Rightarrow H_{1,1},H_{4,4},H_{5,2},\\
&M_{12,2}\Rightarrow H_{2,2},H_{3,1},H_{3,4},H_{4,3},H_{6,1},H_{6,3},H_{6,4}.
\end{aligned}
\]

\textbf{Step 6$'$.} For the branch corresponding to \(M_{12,1}\), Alice performs
\[
\begin{aligned}
\mathcal{M}_{13}\equiv \bigl\{&
M_{13,1}=|1\rangle_A\langle 1|+\cdots+|d_1'-1\rangle_A\langle d_1'-1|,\\
&M_{13,2}=I-M_{13,1}
\bigr\}.
\end{aligned}
\]
If outcome \(M_{13,1}\) occurs, the subset is \(H_{1,1}\); otherwise, the remaining possibilities are \(\{H_{4,4},H_{5,2}\}\), which are LOCC distinguishable.

For the branch corresponding to \(M_{12,2}\), Charlie performs
\[
\begin{aligned}
\mathcal{M}_{14}\equiv \bigl\{
M_{14,1}=|d_3'\rangle_C\langle d_3'|,
\;
M_{14,2}=I-M_{14,1}
\bigr\}.
\end{aligned}
\]

If outcome \(M_{14,1}\) occurs, the remaining possibilities are \(\{H_{3,1},H_{6,1},H_{6,4}\}\). Alice then performs
\[
\begin{aligned}
\mathcal{M}_{15}\equiv \bigl\{&
M_{15,1}=P[|0\rangle_A;|1\rangle_{a_1};|2\rangle_{a_2};|1\rangle_{a_3}],\\
&M_{15,2}=P[|d_1'\rangle_A;|1\rangle_{a_1};|2\rangle_{a_2};|0\rangle_{a_3}],\\
&M_{15,3}=I-M_{15,2}-M_{15,1}
\bigr\}.
\end{aligned}
\]
If outcome \(M_{15,1}\) occurs, the subset is \(H_{6,1}\); if outcome \(M_{15,2}\) occurs, the subset is \(H_{6,4}\); if outcome \(M_{15,3}\) occurs, the subset is \(H_{3,1}\).

If outcome \(M_{14,2}\) occurs, the remaining possibilities are \(\{H_{2,2},H_{3,4},H_{4,3},H_{6,3}\}\). Alice performs
\[
\begin{aligned}
\mathcal{M}_{16}\equiv \bigl\{&
M_{16,1}=P[(|1\rangle,\dots,|d_1'\rangle)_A;|0\rangle_{a_1};|2\rangle_{a_2};|0\rangle_{a_3}],\\
&M_{16,2}=P[|d_1'\rangle_A;|0\rangle_{a_1};|2\rangle_{a_2};|1\rangle_{a_3}],\\
&M_{16,3}=P[|0\rangle_A;(|0\rangle,|1\rangle)_{a_1};|2\rangle_{a_2};|0\rangle_{a_3}],\\
&M_{16,4}=I-M_{16,3}-M_{16,2}-M_{16,1}
\bigr\}.
\end{aligned}
\]
The corresponding subsets are
\[
\begin{aligned}
&M_{16,1}\Rightarrow H_{3,4},\qquad &&M_{16,2}\Rightarrow H_{6,3},\\
&M_{16,3}\Rightarrow H_{2,2},\qquad &&M_{16,4}\Rightarrow H_{4,3}.
\end{aligned}
\]

If other outcomes occur in Steps 1 or 3, similar protocols can again be constructed to distinguish the corresponding subsets perfectly by LOCC.

\section{The proof of Theorem 8}\label{T8}

Using the entanglement resources \(\vert \phi^{+}(2)\rangle_{ae_1}\), \(\vert \phi^{+}(2)\rangle_{be_2}\), \(\vert \phi^{+}(2)\rangle_{ce_3}\), and \(\vert \phi^{+}(2)\rangle_{ve_4}\), the initial state is
\[
\begin{aligned}
&\left|\psi\right\rangle_{ABCDE}
\otimes\vert \phi^{+}(2)\rangle_{ae_1}
\otimes\vert \phi^{+}(2)\rangle_{be_2}\\
&\qquad\qquad\otimes\vert \phi^{+}(2)\rangle_{ce_3}
\otimes\vert \phi^{+}(2)\rangle_{ve_4},
\end{aligned}
\]
where \(a\), \(b\), \(c\), and \(v\) are the ancillary systems of Alice, Bob, Charlie, and Dave, respectively, and \(e_1,e_2,e_3,e_4\) are Eve's ancillary systems.

\textbf{Step 1.} Alice, Bob, Charlie, and Dave perform the measurements
\[
\begin{aligned}
\mathcal{M}_1\equiv \bigl\{&
M_{1,1}=P\bigl[|0\rangle_A;|0\rangle_a\bigr]
+P\bigl[(|1\rangle,\dots,|d_1'\rangle)_A;|1\rangle_a\bigr],\\
&M_{1,2}=I-M_{1,1}
\bigr\},
\end{aligned}
\]
\[
\begin{aligned}
\mathcal{M}_2\equiv \bigl\{&
M_{2,1}=P\bigl[|0\rangle_B;|0\rangle_b\bigr]
+P\bigl[(|1\rangle,\dots,|d_2'\rangle)_B;|1\rangle_b\bigr],\\
&M_{2,2}=I-M_{2,1}
\bigr\},
\end{aligned}
\]
\[
\begin{aligned}
\mathcal{M}_3\equiv \bigl\{&
M_{3,1}=P\bigl[|0\rangle_C;|0\rangle_c\bigr]
+P\bigl[(|1\rangle,\dots,|d_3'\rangle)_C;|1\rangle_c\bigr],\\
&M_{3,2}=I-M_{3,1}
\bigr\},
\end{aligned}
\]
and
\[
\begin{aligned}
\mathcal{M}_4\equiv \bigl\{&
M_{4,1}=P\bigl[|0\rangle_D;|0\rangle_v\bigr]
+P\bigl[(|1\rangle,\dots,|d_4'\rangle)_D;|1\rangle_v\bigr],\\
&M_{4,2}=I-M_{4,1}
\bigr\},
\end{aligned}
\]
respectively. Suppose that outcomes \(M_{1,1}\), \(M_{2,1}\), \(M_{3,1}\), and \(M_{4,1}\) occur. Then the resulting post-measurement states are
\[
\begin{aligned}
&H_1 \to \vert \alpha_i' \rangle_1 \vert \alpha_i' \rangle_2 \vert p \rangle_3\vert 0 \rangle_4\vert 0 \rangle_5\vert 11 \rangle_{ae_1} \vert11 \rangle_{be_2} \vert 11 \rangle_{ce_3} \vert 00 \rangle_{ve_4}\\
&\qquad+\vert \alpha_i' \rangle_1 \vert 0 \rangle_2 \vert p \rangle_3 \vert 0 \rangle_4\vert 0 \rangle_5\vert 11 \rangle_{ae_1} \vert00 \rangle_{be_2} \vert 11 \rangle_{ce_3}\vert 00 \rangle_{ve_4}\\
&\qquad+\vert 0 \rangle_1\vert \alpha_i' \rangle_2 \vert p \rangle_3 \vert 0 \rangle_4\vert 0 \rangle_5\vert 00 \rangle_{ae_1} \vert11 \rangle_{be_2} \vert 11 \rangle_{ce_3}\vert 00 \rangle_{ve_4}\\
&\qquad+\vert 0 \rangle_1\vert 0 \rangle_2 \vert p \rangle_3 \vert 0 \rangle_4\vert 0 \rangle_5\vert 00 \rangle_{ae_1} \vert00 \rangle_{be_2} \vert 11 \rangle_{ce_3}\vert 00 \rangle_{ve_4},\\
&H_2 \to \vert \alpha_i' \rangle_1 \vert p \rangle_2\vert 0 \rangle_3\vert 0 \rangle_4\vert \alpha_i \rangle_5 \vert 11 \rangle_{ae_1} \vert11 \rangle_{be_2} \vert 00 \rangle_{ce_3} \vert 00\rangle_{ve_4}\\
&\qquad+\vert 0 \rangle_1\vert p \rangle_2\vert 0 \rangle_3\vert 0 \rangle_4\vert\alpha_i \rangle_5 \vert 00 \rangle_{ae_1} \vert11\rangle_{be_2} \vert 00 \rangle_{ce_3}\vert 00 \rangle_{ve_4},\\
&H_3 \to \vert p \rangle_1\vert 0 \rangle_2\vert 0 \rangle_3\vert \alpha_i' \rangle_4\vert \alpha_i \rangle_5 \vert 11 \rangle_{ae_1}\vert00 \rangle_{be_2}\vert 00 \rangle_{ce_3}\vert 11\rangle_{ve_4}\\
&\qquad+\vert p \rangle_1\vert 0 \rangle_2\vert 0 \rangle_3\vert 0 \rangle_4\vert\alpha_i \rangle_5 \vert11\rangle_{ae_1}\vert 00 \rangle_{be_2}\vert 00 \rangle_{ce_3}\vert 00 \rangle_{ve_4},\\
&H_4 \to \vert 0 \rangle_1\vert 0 \rangle_2\vert \alpha_i' \rangle_3\vert \alpha_i' \rangle_4\vert p \rangle_5 \vert 00 \rangle_{ae_1}\vert00 \rangle_{be_2}\vert 11 \rangle_{ce_3}\vert 11\rangle_{ve_4}\\
&\qquad+\vert 0 \rangle_1\vert 0 \rangle_2\vert \alpha_i' \rangle_3\vert 0 \rangle_4\vert p \rangle_5\vert 00 \rangle_{ae_1}\vert00 \rangle_{be_2}\vert 11 \rangle_{ce_3}\vert 00 \rangle_{ve_4}\\
&\qquad+\vert 0 \rangle_1\vert 0 \rangle_2\vert 0 \rangle_3\vert \alpha_i' \rangle_4\vert p \rangle_5\vert 00 \rangle_{ae_1}\vert00 \rangle_{be_2}\vert 00 \rangle_{ce_3}\vert 11 \rangle_{ve_4}\\
&\qquad+\vert 0 \rangle_1\vert 0 \rangle_2\vert 0 \rangle_3\vert 0 \rangle_4\vert p \rangle_5\vert 00 \rangle_{ae_1}\vert00 \rangle_{be_2}\vert 00\rangle_{ce_3}\vert 00 \rangle_{ve_4},\\
&H_5 \to \vert 0 \rangle_1\vert \alpha_i' \rangle_2\vert \alpha_i' \rangle_3\vert p \rangle_4\vert 0 \rangle_5\vert 00 \rangle_{ae_1}\vert11 \rangle_{be_2}\vert 11 \rangle_{ce_3}\vert 11\rangle_{ve_4}\\
&\qquad+\vert 0 \rangle_1\vert \alpha_i' \rangle_2\vert 0 \rangle_3\vert p \rangle_4\vert 0 \rangle_5\vert 00 \rangle_{ae_1}\vert11 \rangle_{be_2}\vert 00 \rangle_{ce_3}\vert 11 \rangle_{ve_4}\\
&\qquad+\vert 0 \rangle_1\vert 0 \rangle_2\vert \alpha_i' \rangle_3\vert p \rangle_4\vert 0 \rangle_5\vert 00 \rangle_{ae_1}\vert00 \rangle_{be_2}\vert 11 \rangle_{ce_3}\vert 11 \rangle_{ve_4}\\
&\qquad+\vert 0 \rangle_1\vert 0 \rangle_2\vert 0 \rangle_3\vert p \rangle_4\vert 0 \rangle_5\vert 00 \rangle_{ae_1}\vert00 \rangle_{be_2}\vert 00\rangle_{ce_3}\vert 11 \rangle_{ve_4},\\
&H_6 \to \vert \gamma_m \rangle_1\vert 0 \rangle_2\vert p \rangle_3\vert 0 \rangle_4\vert 1 \rangle_5\vert 11 \rangle_{ae_1}\vert00 \rangle_{be_2}\vert 11 \rangle_{ce_3}\vert 00\rangle_{ve_4},\\
&H_7 \to \vert 0 \rangle_1\vert p \rangle_2\vert 0 \rangle_3\vert 1 \rangle_4\vert \gamma_m \rangle_5\vert 00 \rangle_{ae_1}\vert11 \rangle_{be_2}\vert 00 \rangle_{ce_3}\vert 11\rangle_{ve_4},\\
&H_8 \to \vert p \rangle_1\vert 0 \rangle_2\vert 1 \rangle_3\vert \gamma_m \rangle_4\vert 0 \rangle_5\vert 11 \rangle_{ae_1}\vert00 \rangle_{be_2}\vert 11 \rangle_{ce_3}\vert 11\rangle_{ve_4},\\
&H_9 \to \vert 0 \rangle_1\vert 1 \rangle_2\vert \gamma_m \rangle_3\vert 0 \rangle_4\vert p \rangle_5\vert 00 \rangle_{ae_1}\vert11 \rangle_{be_2}\vert 11 \rangle_{ce_3}\vert 00\rangle_{ve_4},\\
&H_{10} \to \vert 1 \rangle_1\vert \gamma_m \rangle_2\vert 0 \rangle_3\vert p \rangle_4\vert 0 \rangle_5\vert 11 \rangle_{ae_1}\vert11 \rangle_{be_2}\vert 00 \rangle_{ce_3}\vert 11\rangle_{ve_4},
\end{aligned}
\]
where
\[
|\alpha_i'\rangle_k=\sum_{u=1}^{d_k-1}\omega_{d_k}^{iu}|u\rangle
\]
for \(k\in\{1,2,3,4,5\}\).

\textbf{Step 2.} Eve performs
\[
\begin{aligned}
\mathcal{M}_5\equiv \bigl\{&
M_{5,1}=P[(|1\rangle,\dots,|d_5'\rangle)_E;|1\rangle_{e_1};|0\rangle_{e_2};|1\rangle_{e_3};|0\rangle_{e_4}],\\
&M_{5,2}=P[(|1\rangle,\dots,|d_5'\rangle)_E;|0\rangle_{e_1};|1\rangle_{e_2};|0\rangle_{e_3};|1\rangle_{e_4}],\\
&M_{5,3}=P[|0\rangle_E;|1\rangle_{e_1};|0\rangle_{e_2};|1\rangle_{e_3};|1\rangle_{e_4}],\\
&M_{5,4}=P[(|1\rangle,\dots,|d_5'\rangle)_E;|0\rangle_{e_1};|1\rangle_{e_2};|1\rangle_{e_3};|0\rangle_{e_4}],\\
&M_{5,5}=P[|0\rangle_E;|1\rangle_{e_1};|1\rangle_{e_2};|0\rangle_{e_3};|1\rangle_{e_4}],\\
&M_{5,6}=P[(|1\rangle,\dots,|d_5'\rangle)_E;|0\rangle_{e_1};|0\rangle_{e_2};I_{e_3};I_{e_4}],\\
&M_{5,7}=P[(|0\rangle,\dots,|d_5'\rangle)_E;|1\rangle_{e_1};|0\rangle_{e_2};|0\rangle_{e_3};I_{e_4}],\\
&M_{5,8}=P[|0\rangle_E;I_{e_1};I_{e_2};|1\rangle_{e_3};|0\rangle_{e_4}],\\
&M_{5,9}=P[(|0\rangle,\dots,|d_5'\rangle)_E;I_{e_1};|1\rangle_{e_2};|0\rangle_{e_3};|0\rangle_{e_4}],\\
&M_{5,10}=I-M_{5,1}-\cdots-M_{5,9}
\bigr\}.
\end{aligned}
\]
The corresponding subsets are
\[
\begin{aligned}
&M_{5,1}\Rightarrow H_6,\qquad &&M_{5,2}\Rightarrow H_7,\\
&M_{5,3}\Rightarrow H_8,\qquad &&M_{5,4}\Rightarrow H_9,\\
&M_{5,5}\Rightarrow H_{10},\qquad &&M_{5,6}\Rightarrow H_4,\\
&M_{5,7}\Rightarrow H_3,\qquad &&M_{5,8}\Rightarrow H_1,\\
&M_{5,9}\Rightarrow H_2,\qquad &&M_{5,10}\Rightarrow H_5.
\end{aligned}
\]

If other outcomes occur in Step 1, similar LOCC protocols can be constructed.

\section{The proof of Theorem 9}\label{T9}

Let \(\vert G\rangle=\vert 000\rangle+\vert 111\rangle\). Using the entanglement resources \(\vert G\rangle_{av_1e_1}\), \(\vert G\rangle_{bv_2e_2}\), and \(\vert G\rangle_{cv_3e_3}\), the initial state is
\[
\begin{aligned}
\left|\psi\right\rangle_{ABCDE}\otimes \vert G\rangle_{av_1e_1}\otimes \vert G\rangle_{bv_2e_2}\otimes \vert G\rangle_{cv_3e_3},
\end{aligned}
\]
where \(a\), \(b\), and \(c\) are the ancillary systems of Alice, Bob, and Charlie, respectively; \(v_1,v_2,v_3\) are Dave's ancillary systems; and \(e_1,e_2,e_3\) are Eve's ancillary systems.

\textbf{Step 1.} Alice, Bob, and Charlie perform the measurements
\[
\begin{aligned}
\mathcal{M}_1\equiv \bigl\{&
M_{1,1}=P\bigl[|0\rangle_A;|0\rangle_a\bigr]
+P\bigl[(|1\rangle,\dots,|d_1'\rangle)_A;|1\rangle_a\bigr],\\
&M_{1,2}=I-M_{1,1}
\bigr\},
\end{aligned}
\]
\[
\begin{aligned}
\mathcal{M}_2\equiv \bigl\{&
M_{2,1}=P\bigl[|0\rangle_B;|0\rangle_b\bigr]
+P\bigl[(|1\rangle,\dots,|d_2'\rangle)_B;|1\rangle_b\bigr],\\
&M_{2,2}=I-M_{2,1}
\bigr\},
\end{aligned}
\]
and
\[
\begin{aligned}
\mathcal{M}_3\equiv \bigl\{&
M_{3,1}=P\bigl[|0\rangle_C;|0\rangle_c\bigr]
+P\bigl[(|1\rangle,\dots,|d_3'\rangle)_C;|1\rangle_c\bigr],\\
&M_{3,2}=I-M_{3,1}
\bigr\},
\end{aligned}
\]
respectively. Suppose that outcomes \(M_{1,1}\), \(M_{2,1}\), and \(M_{3,1}\) occur. Then
\[
\begin{aligned}
&H_1 \to \vert \alpha_i' \rangle_1 \vert \alpha_i' \rangle_2 \vert p \rangle_3\vert 0 \rangle_4\vert 0 \rangle_5\vert 111 \rangle_{av_1e_1} \vert111 \rangle_{bv_2e_2}\vert 111 \rangle_{cv_3e_3}\\
&\qquad+\vert \alpha_i' \rangle_1 \vert 0 \rangle_2 \vert p \rangle_3 \vert 0 \rangle_4\vert 0 \rangle_5\vert 111 \rangle_{av_1e_1} \vert000 \rangle_{bv_2e_2}\vert 111 \rangle_{cv_3e_3}\\
&\qquad+\vert 0 \rangle_1 \vert \alpha_i' \rangle_2 \vert p \rangle_3 \vert 0 \rangle_4\vert 0 \rangle_5\vert 000 \rangle_{av_1e_1} \vert111 \rangle_{bv_2e_2}\vert 111 \rangle_{cv_3e_3}\\
&\qquad+\vert 0 \rangle_1 \vert 0 \rangle_2 \vert p \rangle_3 \vert 0 \rangle_4\vert 0 \rangle_5\vert 000 \rangle_{av_1e_1} \vert000 \rangle_{bv_2e_2}\vert 111 \rangle_{cv_3e_3},\\
&H_2 \to \vert \alpha_i' \rangle_1 \vert p \rangle_2\vert 0 \rangle_3\vert 0 \rangle_4\vert \alpha_i \rangle_5\vert 111 \rangle_{av_1e_1}\vert111 \rangle_{bv_2e_2}\vert 000 \rangle_{cv_3e_3}\\
&\qquad+\vert 0 \rangle_1\vert p \rangle_2\vert 0 \rangle_3\vert 0 \rangle_4\vert\alpha_i \rangle_5\vert 000 \rangle_{av_1e_1}\vert111 \rangle_{bv_2e_2}\vert 000 \rangle_{cv_3e_3},\\
&H_3 \to \vert p \rangle_1\vert 0 \rangle_2\vert 0 \rangle_3\vert \alpha_i \rangle_4\vert \alpha_i \rangle_5\vert 111 \rangle_{av_1e_1}\vert000 \rangle_{bv_2e_2}\vert 000 \rangle_{cv_3e_3},\\
&H_4 \to \vert 0 \rangle_1\vert 0 \rangle_2\vert \alpha_i' \rangle_3\vert \alpha_i \rangle_4\vert p \rangle_5\vert 000 \rangle_{av_1e_1}\vert000 \rangle_{bv_2e_2}\vert 111 \rangle_{cv_3e_3}\\
&\qquad+\vert 0 \rangle_1\vert 0 \rangle_2\vert 0 \rangle_3\vert \alpha_i \rangle_4\vert p \rangle_5\vert 000 \rangle_{av_1e_1}\vert000 \rangle_{bv_2e_2}\vert 000 \rangle_{cv_3e_3},\\
&H_5 \to \vert 0 \rangle_1\vert \alpha_i' \rangle_2\vert \alpha_i' \rangle_3\vert p \rangle_4\vert 0 \rangle_5\vert 000 \rangle_{av_1e_1}\vert111 \rangle_{bv_2e_2}\vert 111 \rangle_{cv_3e_3}\\
&\qquad+\vert 0 \rangle_1\vert \alpha_i' \rangle_2\vert 0 \rangle_3\vert p \rangle_4\vert 0 \rangle_5\vert 000 \rangle_{av_1e_1}\vert111 \rangle_{bv_2e_2}\vert 000 \rangle_{cv_3e_3}\\
&\qquad+\vert 0 \rangle_1\vert 0 \rangle_2\vert \alpha_i' \rangle_3\vert p \rangle_4\vert 0 \rangle_5\vert 000 \rangle_{av_1e_1}\vert000 \rangle_{bv_2e_2}\vert 111 \rangle_{cv_3e_3}\\
&\qquad+\vert 0 \rangle_1\vert 0 \rangle_2\vert 0 \rangle_3\vert p \rangle_4\vert 0 \rangle_5\vert 000 \rangle_{av_1e_1}\vert000 \rangle_{bv_2e_2}\vert 000 \rangle_{cv_3e_3},\\
&H_6 \to \vert \gamma_m \rangle_1\vert 0 \rangle_2\vert p \rangle_3\vert 0 \rangle_4\vert 1 \rangle_5\vert 111 \rangle_{av_1e_1}\vert000 \rangle_{bv_2e_2}\vert 111 \rangle_{cv_3e_3},\\
&H_7 \to \vert 0 \rangle_1\vert p \rangle_2\vert 0 \rangle_3\vert 1 \rangle_4\vert \gamma_m \rangle_5\vert 000 \rangle_{av_1e_1}\vert111 \rangle_{bv_2e_2}\vert 000 \rangle_{cv_3e_3},\\
&H_8 \to \vert p \rangle_1\vert 0 \rangle_2\vert 1 \rangle_3\vert \gamma_m \rangle_4\vert 0 \rangle_5\vert 111 \rangle_{av_1e_1}\vert000 \rangle_{bv_2e_2}\vert 111 \rangle_{cv_3e_3},\\
&H_9 \to \vert 0 \rangle_1\vert 1 \rangle_2\vert \gamma_m \rangle_3\vert 0 \rangle_4\vert p \rangle_5\vert 000 \rangle_{av_1e_1}\vert111 \rangle_{bv_2e_2}\vert 111 \rangle_{cv_3e_3},\\
&H_{10} \to \vert 1 \rangle_1\vert \gamma_m \rangle_2\vert 0 \rangle_3\vert p \rangle_4\vert 0 \rangle_5\vert 111 \rangle_{av_1e_1}\vert111 \rangle_{bv_2e_2}\vert 000 \rangle_{cv_3e_3},
\end{aligned}
\]
where
\[
|\alpha_i'\rangle_k=\sum_{u=1}^{d_k-1}\omega_{d_k}^{iu}|u\rangle
\]
for \(k\in\{1,2,3,4,5\}\).

\textbf{Step 2.} Eve performs
\[
\begin{aligned}
\mathcal{M}_4\equiv \bigl\{&
M_{4,1}=P[(|0\rangle,\dots,|d_5'\rangle)_E;|1\rangle_{e_1};|0\rangle_{e_2};|0\rangle_{e_3}],\\
&M_{4,2}=P[|1\rangle_E;|1\rangle_{e_1};|0\rangle_{e_2};|1\rangle_{e_3}],\\
&M_{4,3}=P[(|1\rangle,\dots,|d_5'\rangle)_E;|0\rangle_{e_1};|1\rangle_{e_2};|1\rangle_{e_3}],\\
&M_{4,4}=P[(|1\rangle,\dots,|d_5'\rangle)_E;|0\rangle_{e_1};|0\rangle_{e_2};I_{e_3}],\\
&M_{4,5}=I-M_{4,1}-\cdots-M_{4,4}
\bigr\}.
\end{aligned}
\]
The corresponding subsets are
\[
\begin{aligned}
&M_{4,1}\Rightarrow H_3,\qquad &&M_{4,2}\Rightarrow H_6,\\
&M_{4,3}\Rightarrow H_9,\qquad &&M_{4,4}\Rightarrow H_4.
\end{aligned}
\]
If outcome \(M_{4,5}\) occurs, the state belongs to one of the remaining six subsets.

\textbf{Step 3.} Dave performs
\[
\begin{aligned}
\mathcal{M}_5\equiv \bigl\{&
M_{5,1}=P[(|1\rangle,\dots,|d_4'\rangle)_D;|1\rangle_{v_1};|1\rangle_{v_2};|0\rangle_{v_3}],\\
&M_{5,2}=P[(|1\rangle,\dots,|d_4'\rangle)_D;|1\rangle_{v_1};|0\rangle_{v_2};|1\rangle_{v_3}],\\
&M_{5,3}=P[|0\rangle_D;I_{v_1};I_{v_2};|1\rangle_{v_3}],\\
&M_{5,4}=P[|0\rangle_D;I_{v_1};|1\rangle_{v_2};|0\rangle_{v_3}],\\
&M_{5,5}=I-M_{5,1}-\cdots-M_{5,4}
\bigr\}.
\end{aligned}
\]
The corresponding subsets are
\[
\begin{aligned}
&M_{5,1}\Rightarrow H_{10},\qquad &&M_{5,2}\Rightarrow H_8,\\
&M_{5,3}\Rightarrow H_1,\qquad &&M_{5,4}\Rightarrow H_2.
\end{aligned}
\]
If outcome \(M_{5,5}\) occurs, the remaining possibilities are \(\{H_5,H_7\}\).

\textbf{Step 4.} Eve performs
\[
\mathcal{M}_6\equiv \bigl\{
M_{6,1}=|0\rangle_E\langle 0|,
\;
M_{6,2}=I-M_{6,1}
\bigr\}.
\]
If outcome \(M_{6,1}\) occurs, the subset is \(H_5\); otherwise, the subset is \(H_7\).

If other outcomes occur in Step 1, similar LOCC protocols can be constructed.

\section{The proof of Theorem 10}\label{T10}

Let \(\vert F\rangle=\vert 0000\rangle+\vert 1111\rangle\). Using the entanglement resources \(\vert F\rangle_{ac_1v_1e_1}\) and \(\vert F\rangle_{bc_2v_2e_2}\), the initial state is
\[
\begin{aligned}
\left|\psi\right\rangle_{ABCDE}\otimes \vert F\rangle_{ac_1v_1e_1}\otimes \vert F\rangle_{bc_2v_2e_2},
\end{aligned}
\]
where \(a\) and \(b\) are the ancillary systems of Alice and Bob, \(c_1,c_2\) are Charlie's ancillary systems, \(v_1,v_2\) are Dave's ancillary systems, and \(e_1,e_2\) are Eve's ancillary systems.

\textbf{Step 1.} Alice and Bob perform the measurements
\[
\begin{aligned}
\mathcal{M}_1\equiv \bigl\{&
M_{1,1}=P\bigl[|0\rangle_A;|0\rangle_a\bigr]
+P\bigl[(|1\rangle,\dots,|d_1'\rangle)_A;|1\rangle_a\bigr],\\
&M_{1,2}=I-M_{1,1}
\bigr\},
\end{aligned}
\]
and
\[
\begin{aligned}
\mathcal{M}_2\equiv \bigl\{&
M_{2,1}=P\bigl[|0\rangle_B;|0\rangle_b\bigr]
+P\bigl[(|1\rangle,\dots,|d_2'\rangle)_B;|1\rangle_b\bigr],\\
&M_{2,2}=I-M_{2,1}
\bigr\},
\end{aligned}
\]
respectively. Suppose that outcomes \(M_{1,1}\) and \(M_{2,1}\) occur. Then
\[
\begin{aligned}
&H_1 \to \vert \alpha_i' \rangle_1 \vert \alpha_i' \rangle_2 \vert p \rangle_3 \vert 0 \rangle_4\vert 0 \rangle_5\vert 1111 \rangle_{ac_1v_1e_1}\vert1111 \rangle_{bc_2v_2e_2}\\
&\qquad+\vert \alpha_i' \rangle_1 \vert 0 \rangle_2 \vert p \rangle_3 \vert 0 \rangle_4\vert 0 \rangle_5\vert 1111 \rangle_{ac_1v_1e_1}\vert0000 \rangle_{bc_2v_2e_2}\\
&\qquad+\vert 0 \rangle_1 \vert \alpha_i' \rangle_2\vert p \rangle_3 \vert 0 \rangle_4\vert 0 \rangle_5\vert 0000 \rangle_{ac_1v_1e_1}\vert1111 \rangle_{bc_2v_2e_2}\\
&\qquad+\vert 0 \rangle_1 \vert 0 \rangle_2 \vert p \rangle_3 \vert 0 \rangle_4\vert 0 \rangle_5\vert 0000 \rangle_{ac_1v_1e_1}\vert0000 \rangle_{bc_2v_2e_2},\\
&H_2 \to \vert \alpha_i' \rangle_1 \vert p \rangle_2\vert 0 \rangle_3\vert 0 \rangle_4\vert \alpha_i \rangle_5\vert 1111 \rangle_{ac_1v_1e_1}\vert1111 \rangle_{bc_2v_2e_2}\\
&\qquad+\vert 0 \rangle_1\vert p \rangle_2\vert 0 \rangle_3\vert 0 \rangle_4\vert \alpha_i \rangle_5\vert 0000 \rangle_{ac_1v_1e_1}\vert1111 \rangle_{bc_2v_2e_2},\\
&H_3 \to \vert p \rangle_1\vert 0 \rangle_2\vert 0 \rangle_3\vert \alpha_i \rangle_4\vert \alpha_i \rangle_5\vert 1111 \rangle_{ac_1v_1e_1}\vert0000 \rangle_{bc_2v_2e_2},\\
&H_4 \to \vert 0 \rangle_1\vert 0 \rangle_2\vert \alpha_i \rangle_3\vert \alpha_i \rangle_4\vert p \rangle_5\vert 0000 \rangle_{ac_1v_1e_1}\vert0000 \rangle_{bc_2v_2e_2},\\
&H_5 \to \vert 0 \rangle_1\vert \alpha_i' \rangle_2\vert \alpha_i \rangle_3\vert p \rangle_4\vert 0 \rangle_5\vert 0000 \rangle_{ac_1v_1e_1}\vert1111 \rangle_{bc_2v_2e_2}\\
&\qquad+\vert 0 \rangle_1\vert 0 \rangle_2\vert \alpha_i \rangle_3\vert p \rangle_4\vert 0 \rangle_5\vert 0000 \rangle_{ac_1v_1e_1}\vert0000 \rangle_{bc_2v_2e_2},\\
&H_6 \to \vert \gamma_m \rangle_1\vert 0 \rangle_2\vert p \rangle_3\vert 0 \rangle_4\vert 1 \rangle_5\vert 1111 \rangle_{ac_1v_1e_1}\vert0000 \rangle_{bc_2v_2e_2},\\
&H_7 \to \vert 0 \rangle_1\vert p \rangle_2\vert 0 \rangle_3\vert 1 \rangle_4\vert \gamma_m \rangle_5\vert 0000 \rangle_{ac_1v_1e_1}\vert1111 \rangle_{bc_2v_2e_2},\\
&H_8 \to \vert p \rangle_1\vert 0 \rangle_2\vert 1 \rangle_3\vert \gamma_m \rangle_4\vert 0 \rangle_5\vert 1111 \rangle_{ac_1v_1e_1}\vert0000 \rangle_{bc_2v_2e_2},\\
&H_9 \to \vert 0 \rangle_1\vert 1 \rangle_2\vert \gamma_m \rangle_3\vert 0 \rangle_4\vert p \rangle_5\vert 0000 \rangle_{ac_1v_1e_1}\vert1111 \rangle_{bc_2v_2e_2},\\
&H_{10} \to \vert 1 \rangle_1\vert \gamma_m \rangle_2\vert 0 \rangle_3\vert p \rangle_4\vert 0 \rangle_5\vert 1111 \rangle_{ac_1v_1e_1}\vert1111 \rangle_{bc_2v_2e_2},
\end{aligned}
\]
where
\[
|\alpha_i'\rangle_k=\sum_{u=1}^{d_k-1}\omega_{d_k}^{iu}|u\rangle
\]
for \(k\in\{1,2,3,4,5\}\).

\textbf{Step 2.} Eve performs
\[
\begin{aligned}
\mathcal{M}_3\equiv \bigl\{&
M_{3,1}=P[(|1\rangle,\dots,|d_5'\rangle)_E;|0\rangle_{e_1};|0\rangle_{e_2}],\\
&M_{3,2}=I-M_{3,1}
\bigr\}.
\end{aligned}
\]
If outcome \(M_{3,1}\) occurs, the subset is \(H_4\); otherwise, the state belongs to one of the remaining nine subsets.

\textbf{Step 3.} Dave performs
\[
\begin{aligned}
\mathcal{M}_4\equiv \bigl\{&
M_{4,1}=P[(|1\rangle,\dots,|d_4'\rangle)_D;|0\rangle_{v_1};I_{v_2}],\\
&M_{4,2}=P[(|1\rangle,\dots,|d_4'\rangle)_D;|1\rangle_{v_1};|1\rangle_{v_2}],\\
&M_{4,3}=I-M_{4,1}-M_{4,2}
\bigr\}.
\end{aligned}
\]
If outcome \(M_{4,1}\) occurs, the remaining possibilities are \(H_5,H_7\), which are LOCC distinguishable. If outcome \(M_{4,2}\) occurs, the subset is \(H_{10}\). Otherwise, the state belongs to one of the remaining six subsets.

\textbf{Step 4.} Charlie performs
\[
\begin{aligned}
\mathcal{M}_5\equiv \bigl\{&
M_{5,1}=P[|0\rangle_C;I_{c_1};|1\rangle_{c_2}],\\
&M_{5,2}=P[|0\rangle_C;|1\rangle_{c_1};|0\rangle_{c_2}],\\
&M_{5,3}=I-M_{5,1}-M_{5,2}
\bigr\}.
\end{aligned}
\]
If outcome \(M_{5,1}\) occurs, the subset is \(H_2\); if outcome \(M_{5,2}\) occurs, the subset is \(H_3\). If outcome \(M_{5,3}\) occurs, we proceed.

\textbf{Step 5.} Eve performs
\[
\begin{aligned}
\mathcal{M}_6\equiv \bigl\{&
M_{6,1}=P[(|1\rangle,\dots,|d_5'\rangle)_E;|1\rangle_{e_1};|0\rangle_{e_2}],\\
&M_{6,2}=P[(|1\rangle,\dots,|d_5'\rangle)_E;|0\rangle_{e_1};|1\rangle_{e_2}],\\
&M_{6,3}=I-M_{6,1}-M_{6,2}
\bigr\}.
\end{aligned}
\]
If outcome \(M_{6,1}\) occurs, the subset is \(H_6\); if outcome \(M_{6,2}\) occurs, the subset is \(H_9\). Otherwise, we continue.

\textbf{Step 6.} Dave performs
\[
\begin{aligned}
\mathcal{M}_7\equiv \bigl\{
M_{7,1}=P[|0\rangle_D;I_{v_1};I_{v_2}],
\;
M_{7,2}=I-M_{7,1}
\bigr\}.
\end{aligned}
\]
If outcome \(M_{7,1}\) occurs, the subset is \(H_1\); otherwise, the remaining subset is \(H_8\).

If other outcomes occur in Step 1, similar LOCC protocols can be constructed.

\end{appendix}

\end{document}